\theoremstyle{plain}
\newtheorem{prop}{Proposition}
\newtheorem{lem}[prop]{Lemma}
\newtheorem{defi}[prop]{Definition}
\newtheorem{ex}[prop]{Example}
\newtheorem{cor}[prop]{Corollary}
\newcommand{\short}[1]{}
\newcommand{\full}[1]{#1}
\newcommand{\cbox}[1]{\vspace{0.2cm}\noindent
  \fbox{\parbox{.97\textwidth}{#1}}\vspace{0.2cm}}
\title{Updating Probabilistic Knowledge on Condition/Event Nets using
  Bayesian Networks}
\author{Benjamin Cabrera}{University of Duisburg-Essen}{benjamin.cabrera@uni-due.de}{}{}
\author{Tobias Heindel}{University of Hawaii}{heindel@hawaii.edu}{}{}
\author{Reiko Heckel}{University of Leicester}{rh122@leicester.ac.uk}{}{}
\author{Barbara König}{University of Duisburg-Essen}{barbara\_koenig@uni-due.de}{}{}
\authorrunning{B. Cabrera and T. Heindel and R. Heckel and B. König}
\subjclass{
  \ccsdesc[500]{Mathematics of computing~Bayesian networks},
  \ccsdesc[500]{Software and its engineering~Petri nets}
}
\keywords{Petri nets, Bayesian networks, Probabilistic databases, Condition/Event nets, Probabilistic knowledge, Dynamic probability distributions}
\newcommand{\leftdot}[1]{\prescript{\bullet}{}{#1}}
\newcommand{\rightdot}[1]{{#1}^\bullet}
\renewcommand{\phi}{\varphi}
\renewcommand{\epsilon}{\varepsilon}
\newcommand{\prob}{\mathbb{P}}
\newcommand{\assertop}{\mathrm{ass}}
\newcommand{\nassertop}{\mathrm{nas}}
\newcommand{\setop}{\mathrm{set}}
\newcommand{\successop}{\mathrm{success}}
\newcommand{\failop}{\mathrm{fail}}
\newcommand{\markt}[1]{\mathcal{M}[\overset{t}{\Rightarrow}#1]}
\newcommand{\marktnpre}{\mathcal{M}[\overset{t}{\not\Rightarrow}_\textit{pre}]}
\newcommand{\marktnpost}{\mathcal{M}[\overset{t}{\not\Rightarrow}_\textit{post}]}
\newcommand{\bitvec}[1]{\mathbf{#1}}
\newcommand{\id}{\mathrm{id}}
\newcommand{\out}{\mathrm{out}}
\newcommand{\ev}{e}
\newcommand{\gettikzxy}[3]{%
  \tikz@scan@one@point\pgfutil@firstofone#1\relax
  \edef#2{\the\pgf@x}%
  \edef#3{\the\pgf@y}%
}
\newcommand{\gettikzx}[2]{%
  \tikz@scan@one@point\pgfutil@firstofone#1\relax
  \edef#2{\the\pgf@x}%
}
\newcommand{\gettikzy}[2]{%
  \tikz@scan@one@point\pgfutil@firstofone#1\relax
  \edef#2{\the\pgf@y}%
}
\newcommand{\Rcell}[4][1]{%
  \begin{scope}[very thick]
    \dlracell[#1]{1}{1}{#4}{0}{0}
  \end{scope}
  \node[anchor=south,outer sep=1.5pt,inner sep=1pt] at (lastL0) {\ensuremath{#2}};
  \node[anchor=south,outer sep=1.5pt,inner sep=1pt] at (lastR0) {\ensuremath{#3}};
}
\newcommand{\lracell}[6][1]{%
  \ifstrequal{#2}{0}{%
    \phantom{\draw (0,.5) -- +(.25*#1,0);}
  }{%
  \pgfmathtruncatemacro\topx{#5+#2-1}
  \foreach \x in {#5,...,\topx} {
    \begin{scope}[shift={(0,.5)}]
      \draw (0,\x) -- (.25*#1,\x);
    \end{scope}
    \coordinate (lastL\x) at (0,\x+.5);
  }}
\ifstrequal{#3}{0}{%
  \phantom{\draw (.75*#1,.5) -- (#1,.5);}
}{%
  \pgfmathtruncatemacro\topx{#6+#3-1}
  \foreach \x in {#6,...,\topx} {
    \begin{scope}[shift={(0,.5)}]
      \draw (.75*#1,\x) -- (#1,\x);
    \end{scope}
    \coordinate (lastR\x) at (#1,\x+.5);
  }}
\ifthenelse{\equal{#4}{}}%
{}%
{\node%
  [anchor=base,outer sep=1pt,inner sep=1pt,draw,fill=white,thick] %
  (thelabel) at ([yshift=1ex].5*#1,.25+#5/2+#6/2+#2/4+#3/4-.5) 
  {\(#4\)};}
\begin{scope}[shift={(0,.5)}]
  \pgfmathsetmacro{\ytopleft}{max(#2+#5-1,%
    .25+#5/2+#6/2+#2/4+#3/4-.75%
  )}
\pgfmathsetmacro{\ytopright}{max(#3+#6-1,%
  .25+#5/2+#6/2+#2/4+#3/4-.75%
  )}
    \draw[fill=white] %
    (.25*#1,.75+#5-1) 
    -- (.25*#1,\ytopleft+.25) 
    -- (.75*#1,\ytopright+.25) 
    -- (.75*#1,+#6-1+.75) 
    -- cycle    ;
  \end{scope}
  \ifthenelse{\equal{#4}{}}%
{}%
{  \node%
  [anchor=base,outer sep=1pt,inner sep=1pt,draw=none,fill=white] %
  (thelabel) at ([yshift=1ex].5*#1,.25+#5/2+#6/2+#2/4+#3/4-.5) 
  {\(#4\)};}
}
\newcommand{\dlracell}[6][1]{%
  \ifstrequal{#2}{0}{%
    \phantom{\draw (0,.5) -- +(.25*#1,0);}
  }{%
  \pgfmathtruncatemacro\topx{#5+#2-1}
  \foreach \x in {#5,...,\topx} {
    \begin{scope}[shift={(0,.5)}]
      \draw[double] (0,\x) -- (.25*#1,\x);
    \end{scope}
    \coordinate (lastL\x) at (0,\x+.5);
  }}
\ifstrequal{#3}{0}{%
  \phantom{\draw[double] (.75*#1,.5) -- (#1,.5);}
}{%
  \pgfmathtruncatemacro\topx{#6+#3-1}
  \foreach \x in {#6,...,\topx} {
    \begin{scope}[shift={(0,.5)}]
      \draw[double] (.75*#1,\x) -- (#1,\x);
    \end{scope}
    \coordinate (lastR\x) at (#1,\x+.5);
  }}
\ifthenelse{\equal{#4}{}}%
{}%
{\node%
  [anchor=base,outer sep=0pt,inner
  sep=0pt,draw=none,fill=none,thick,line width=1pt,draw=none] %
  (thelabel) at ([yshift=1.5ex].5*#1,.25+#5/2+#6/2+#2/4+#3/4-.5) 
  {\(#4\)};
}
\begin{scope}[shift={(0,.5)}]
  \pgfmathsetmacro{\ytopleft}{max(#2+#5-1,%
    .25+#5/2+#6/2+#2/4+#3/4-.75%
  )}
\pgfmathsetmacro{\ytopright}{max(#3+#6-1,%
  .25+#5/2+#6/2+#2/4+#3/4-.75%
  )}
    \draw[fill=white,semithick] %
    (.25*#1,.75+#5-1) 
    -- (.25*#1,\ytopleft+.25) 
    -- (.75*#1,\ytopright+.25) 
    -- (.75*#1,+#6-1+.75) 
    -- cycle    ;
  \end{scope}
  \ifthenelse{\equal{#4}{}}%
{}%
{
   \node%
   [anchor=base,outer sep=1pt,inner sep=.5pt,draw=none,fill=none] %
   (thelabel) at ([yshift=1.5ex].5*#1,.25+#5/2+#6/2+#2/4+#3/4-.5) 
   {\(#4\)};}
}
\newcommand{\cmp}[3]{%
  #2 
  \begin{scope}[shift={(#1,0)}]
    #3
  \end{scope}
}
\newcolumntype{C}[1]{>{\centering\let\newline\\\arraybackslash\hspace{0pt}}m{#1}}
\theoremstyle{plain}
\newcommand{\ig}[2]{\includegraphics[scale=#1]{#2}}
\begin{document}
\renewcommand{\overset}[2]{#2^{#1}}

\maketitle


\begin{abstract}
  The paper extends Bayesian networks (BNs) by a mechanism for dynamic
  changes to the probability distributions represented by BNs.  One
  application scenario is the process of knowledge acquisition of an
  observer interacting with a system. In particular, the paper
  considers condition/event nets where the observer's knowledge about
  the current marking is a probability distribution over markings.
  The observer can interact with the net to deduce information about
  the marking by requesting certain transitions to fire and
  observing their success or failure.

  Aiming for an efficient implementation of dynamic changes to
  probability distributions of BNs, we consider a modular form of
  networks that form the arrows of a free PROP with a
  commutative comonoid structure, also known as term graphs.  The
  algebraic structure of such PROPs supplies us with a compositional
  semantics that functorially maps BNs to their underlying probability
  distribution and, in particular, it provides a convenient means to
  describe structural updates of networks.
  


\end{abstract}

\section{Introduction}
\label{sec:intro}
Representing uncertain knowledge by probability distributions %
is the core idea of Bayesian learning. %
We model the potential of an agent---%
the \emph{observer}---interacting with a concurrent system with hidden
or uncertain state to gain knowledge through ``experimenting'' with
the system, focussing on the problem of keeping track of knowledge
updates correctly and efficiently. %
Knowledge about states is represented by a probability
distribution. Our system models are condition/even nets where states
or possible worlds are markings and transitions describe which updates
are allowed. %

%



\begin{wrapfigure}{r}{0.5\textwidth}
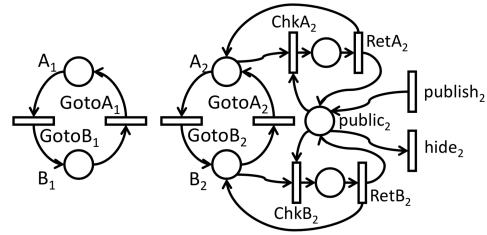

	\centering
	\ig{0.2}{location-privacy.pdf}
	\vspace{-5pt}  
	\caption{Example: Social network account with location privacy}
	\label{fig:location-privacy}
\end{wrapfigure}
%
%

In order to clarify our intentions we consider an application scenario
from social media: preventing inadvertent disclosure, which is the
concern of location
privacy~\cite{DBLP:journals/geoinformatica/Damiani14}.  Consider the
example of a social network account, modelled as a condition/event
net, allowing a user to update and share their location (see
Figure~\ref{fig:location-privacy}).  We consider two users. User~1
does not allow location updates to be posted to the social network,
they are only recorded on their device. In the net this is represented
by places $\mathsf{A_1}$ and $\mathsf{B_1}$ modelling the user at
corresponding locations, and transitions $\mathsf{GotoA_1}$ and
$\mathsf{GotoB_1}$ for moving between them. We assume that only User~1
can fire or observe these transitions. User~2 has a similar structure
for locations and movements, but allows the network to track their
location. The user can decide to make their location public or hide it
by firing transition $\mathsf{publish_2}$ or $\mathsf{hide_2}$. Any
observer can attempt to fire $\mathsf{ChkA_2; RetA_2}$ or
$\mathsf{ChkB_2; RetB_2}$ to query the current location of User~2.
If~$\mathsf{public_2}$ is marked, this will allow the observer to infer
the correct location. Otherwise the observer is left uncertain as to
why the query fails, i.e. due to the wrong location being tested or
the lack of permission, unless they test both locations.  
While our net captures the totality of possible behaviours, we identify different observers, the two users, the social network, and an unrelated observer. For each of these we define which transitions they can access. We then focus on one observer and only allow  transitions they are authorised for.
In our example, if we want to analyse the unrelated observer, we fix the users' locations and privacy choices before it is the observer's turn to query the system.
 
The observer may have prior
knowledge about the dependencies between the locations of Users~1
and~2, for example due to photos with location information published by
User~2, in which both users may be identifiable. The prior knowledge
is represented in the initial probability distribution, updated
according to the observations.

We also draw inspiration from probabilistic databases
\cite{sork:prob-databases,ako:worlds-beyond} %
where the values of attributes or the presence of records are only
known probabilistically. However, an update to the database might make
it necessary to revise the probabilities. Think for instance of a
database where the gender of a person (male or female) is unknown and
we assume with probability $1/2$ that they are male. Now a record is
entered, stating that the person has married a male. Does it now
become more probable that the person is female?

Despite its simplicity, our system model based on condition/event nets
allows us to capture databases: the content of a database can
be represented as a (hyper-)graph (where each record is a
(hyper-)edge). If the nodes of the graph are fixed, updates can be
represented by the transitions of a net, where each potential record
is represented by a place.

Given a net, the observer does not know the initial marking, but has a
prior belief, given by a probability distribution over markings. The
observer can try to fire transitions and observe whether the firing is
successful or fails.  Then the probability distribution is updated
accordingly. While the update mechanism is rather straightforward, the
problem lies in the huge number of potential states: we have $2^n$
markings if $n$ is the number of places.

To mitigate this state space explosion, %
we propose to represent the observer's knowledge using Bayesian
networks (BNs) \cite{p:bayesian-networks,p:causality}, i.e., graphical
models that record conditional dependencies of random variables in a
compact form. %
However, we encounter a new problem as %
updating the observer's knowledge becomes
non-trivial. 
To do this correctly and efficiently, %
we develop a compositional approach to BNs based on symmetric monoidal
theories and PROPs
\cite{m:categorical-algebra}. 
In particular, %
we consider modular Bayesian networks as arrows of a freely generated
PROP and (sub-)stochastic matrices as another PROP with a functor from
the former to the latter. In this way, we make Bayesian networks
compositional and we obtain a graphical formalism
\cite{s:graphical-monoidal} that we use to modify Bayesian networks:
in particular, we can replace entire subgraphs of Bayesian networks by
equivalent ones, i.e., graphs that evaluate to the same matrix.  The
compositional approach allows us to specify complex updates in
Bayesian networks by a sequence of simpler updates %
using a small number of primitives. %

We furthermore describe an implementation and report promising runtime
results.

\full{The proofs of all results can be found in
  Appendix~\ref{apx:proofs}.}\short{The proofs of all results can be
  found in the full version of this paper
  \cite{chhk:update-ce-nets-bayesian-arxiv}.}

\section{Knowledge Update in Condition/Event Nets}\label{sec:general_setup}
We will formalise knowledge updates by means of an extension of Petri nets with probabilistic
knowledge on markings. %
 The starting point are condition/event nets \cite{r:petri-nets}. 


\begin{defi}[Condition/event net] \label{def:petri_net} A
  condition/event net (CN)
  $N = (S, T, \leftdot{()}, \rightdot{()}, m_0)$ is a five-tuple
  consisting of a finite set of \emph{places} $S$, a finite set of
  \emph{transitions} $T$ with \emph{pre-conditions}
  $\leftdot{()}: T \rightarrow \mathcal{P}(S)$, \emph{post-conditions}
  $\rightdot{()}: T \rightarrow \mathcal{P}(S)$, and $m_0 \subseteq S$
  an \emph{initial marking}. %
  A \emph{marking} is any subset of places \(m\subseteq S\). We
  assume that for any $t\in T$, $\leftdot t \cap \rightdot t = \emptyset$.

  A transition $t$ can \emph{fire} for a marking $m \subseteq S$,
  denoted $m \Rightarrow^t$, if $\leftdot{t} \subseteq m$ and
  $\rightdot{t}\cap m = \emptyset$. Then marking $m$ is transformed
  into $m' = (m \setminus \leftdot{t}) \cup \rightdot{t}$, written
  $m \overset{t}{\Rightarrow} m'$. We write
  $m \overset{t}{\Rightarrow}$ to indicate that there exists some $m'$
  with $m \overset{t}{\Rightarrow} m'$.

  We will use two different notations to indicate that a transition
  cannot fire, the first referring to the fact that the pre-condition
  is not sufficiently marked, the second stating that
  there are tokens in the post-condition:
  $m\overset{t}{\not\Rightarrow}_\mathit{pre}$ whenever
  $\leftdot{t} \not\subseteq m$ and
  $m\overset{t}{\not\Rightarrow}_\mathit{post}$ whenever
  $\rightdot{t} \cap m \neq \emptyset$.
  We denote the \emph{set of all markings} by
  $\mathcal{M} = \mathcal{P}(S)$.
\end{defi}

For simplicity we assume that
$S = \{1, \dots, n \}$ for $n \in \mathbb{N}$.  
Then, a marking $m$ can be characterized  by a boolean
vector $m: S \rightarrow \{0,1\}$, i.e.,
$\mathcal{M} \cong \{0,1\}^S$.  Using the vector notation
we write $m(A) = \{ 1 \}$ for $A \subseteq S$ if 
all places in $A$ are marked in $m$.

To capture the probabilistic observer we augment CNs by a
probability distribution over markings modelling 
uncertainty about the hidden initial or current marking.  

\begin{defi}[Condition/Event net with Uncertainty] 
  A \emph{Condition/Event Net with Uncertainty (CNU)} is a six-tuple
  $N = (S, T, \leftdot{()}, \rightdot{()}, m_0, p)$ where
  $(S, T, \leftdot{()}, \rightdot{()}, m_0)$ is a net as in
  Definition \ref{def:petri_net}.  Additionally, \(p\) is a function
  $p: \mathcal{M} \rightarrow [0,1]$ with
  $\sum_{m \in \mathcal{M}} p(m) = 1$ that assigns a \emph{probability
    mass} to each possible marking.  This gives rise to a probability
  space $(\mathcal{M}, \mathcal{P}(\mathcal{M}), \prob)$ with
  $\prob: \mathcal{P}(\mathcal{M}) \rightarrow [0,1]$ defined by
  $\prob\bigl(\{ m_1, \dots, m_k \}\bigr) = \sum_{i=1}^k p(m_i)$.
  
  We assume that $p(m_0) > 0$, i.e. the initial marking is possible according to $p$.
\end{defi}

We model the knowledge gained by observers when firing transitions
and observing their outcomes. Firing $t \in T$ can
either result in success (all places of $\leftdot t$ 
are marked and no place in \(\rightdot t\) is marked) 
or in failure (at least one place of $\leftdot t$ is
empty or one place in \(\rightdot t\) is marked%
). Thus, there are two kinds of failure, the absence
of tokens in the pre-condition or the presence of tokens in the
post-condition. If a transition fails for both reasons, the observer
will learn only one of them.
To model the knowledge gained we define the following operations on distributions.
\todo[inline]{
	\textbf{R1:} must none of the places of $\rightdot t$ be marked for t to be successful? It seems like the fail\_post condition implies this, but it is not stated as a success condition. \\
	\textbf{Ben:} [DONE] I think it is explicitly stated in the previous sentences and we should not change anything
}

\begin{defi}[Operations on CNUs] 
  \label{def:operations_on_CNU}
  Given a CNU $N = (S, T, \leftdot{()}, \rightdot{()}, m_0, p)$ the
  following operations update the mass function $p$ and as a result
  the probability distribution $\prob$.     
  \begin{itemize}
  \item To assert that certain places $A \subseteq S$
    \emph{all contain} a token ($b = 1$) or that none contains a
    token ($b = 0$) we define the operation \emph{assert} 
    \vspace{-1ex}
    \begin{equation}
      \label{eq:assertop}
      \assertop_{A,b}(p)(m)
      = \frac{p(m)}{\sum_{m' \in \mathcal{M}: m'(A) = \{ b \}} p(m')},
       \text{ if } m(A) = \{b\} \qquad
      \text{and}\quad
      0,\text{ otherwise. } 
    \end{equation}
  \item To state that \emph{at
      least one} place of a set $A \subseteq S$
    \emph{does} (resp.\ \emph{does not}) \emph{contain} a token 
    we define operation \emph{negative assert} 
    \vspace{-1ex}
    \begin{equation} \label{eq:nassertop} \nassertop_{A,b}(p)(m) =
      \frac{p(m)}{\sum_{m' \in
          \mathcal{M}: m'(A) \neq \{b\}} p(m')}, \text{ if } m(A) \neq
      \{b\}  \quad \text{and}\quad
      0, \text{ otherwise. }
    \end{equation}
  \item \emph{Modifying} a set of places $A \subseteq S$ such that
    \emph{all places contain a token ($b=1$)} \emph{or none contains a
      token ($b=0$)} requires the following operation
    \begin{equation}
      \label{eq:setop}
      \setop_{A,b}(p)(m) = {
        \sum_{m': m'|_{S \setminus A} = m|_{S
          \setminus A}}}
      p(m'), \text{ if } m(A) = \{b\} \quad
      \text{and}\quad 0, \text{ otherwise}.
    \end{equation}
  \item A successful firing of a transition $t$ leads to an assert
    ($\assertop$) and $\setop$ of the pre-conditions $\leftdot t$ and
    the post-conditions $\rightdot t$.  A failed firing translates to
    a negative assert ($\nassertop$) of the pre- or post-condition and
    nothing is set.  Thus we define for a transition $t \in T$
    \vspace{-1ex}
    \begin{align*}
      \successop_{t}(p) &= \setop_{\rightdot
        t,1}(\setop_{\leftdot
        t,0}(\assertop_{\rightdot t,0}(\assertop_{\leftdot t,1}(p))))
      & \failop^\mathit{pre}_{t}(p) &= 
      \nassertop_{\leftdot t,1}(p)\\
      &&\failop^\mathit{post}_{t}(p) &=
      \nassertop_{\rightdot t,0}(p).
    \end{align*}
  \end{itemize}
\end{defi}
Operations $\assertop,  \nassertop$ are partial, defined whenever the sum in the denominator of their first clause is greater than $0$. That means, the observer only fires transitions whose
pre- and postconditions have a probability greater than zero, i.e., where
according to their knowledge about the state it is possible that these transitions are enabled.
By Definition~\ref{def:petri_net} the initial marking is possible, and this property is maintained as markings and distributions are updated. If this assumption is not satisfied, the operations in Definition~\ref{def:operations_on_CNU} are undefined.

The $\assertop$ and $\nassertop$ operations result from
 conditioning the input distribution on (not) having tokens at $A$ (compare Proposition~\ref{prop:conditioned}).
Also, $\setop$ and $\assertop$ for
$A = \{s_1, \dots, s_k\} \subseteq S$ can be performed iteratively,
i.e.,
$\setop_{A,b} = \setop_{\{s_k\},b} \circ \dots \circ
\setop_{\{s_1\},b}$ and
$\assertop_{A,b} = \assertop_{\{s_k\},b} \circ \dots \circ
\assertop_{\{s_1\},b}$. For a single place $s$ we have
$\assertop_{\{s\},b} = \nassertop_{\{s\},1-b}$.

Figure~\ref{fig:update-example} gives an example for a Petri net with
uncertainty and explains how the observer can update their knowledge by
interacting with the net.
\begin{figure}
  \centering
  \noindent
  \begin{minipage}{0.29\linewidth}
    \centering
    \begin{tikzpicture}[node distance=1.3cm,>=stealth',bend angle=45,auto,scale=1.1]
      \tikzstyle{place}=[circle,thick,draw=black,minimum size=4mm]
      \tikzstyle{red place}=[place,draw=red!75,fill=red!20]
      \tikzstyle{transition}=[rectangle,thick,draw=black!75,fill=black!20,minimum size=2mm]

      \tikzstyle{every label}=[red]

      \begin{scope}[]
        \node [place] (s1) [label=above:$S_1$]{};
        \node [place] (s2) [label=below:$S_2$, below of=s1, left of=s1]                      {};
        \node [place] (s3)  [label=below:$S_3$, below of=s1, right of=s1] {};

        \node [transition] (t1) [below of=s1, label=below:$t_1$] {}
        edge [pre] (s1)
        edge [post] (s2)
        edge [post] (s3);

		\node [transition] (t2) at (-0.75,-0.75) [label=left:$t_2$] {}
        edge [pre] (s2)
        edge [post] (s1);

        \node [transition] (t3) at (0.75,-0.75)[label=above:$t_3$] {}
        edge [pre] (s3)
        edge [post] (s1);

        \node [transition] (t4) at (0,-2.5) [label=below:$t_4$] {}
        edge [pre] (s2)
        edge [post] (s3);
      \end{scope}

      \begin{scope}
      \end{scope}
    \end{tikzpicture}
  \end{minipage}
  \begin{minipage}{0.7\linewidth}\small
    \centering
    \def\arraystretch{1.1}
    \setlength\tabcolsep{3pt}
    \begin{tabularx}{\textwidth}{ccc||c|c|c|c|c|c}
      \multicolumn{3}{c||}{\textit{-- places --}} &
      \multicolumn{1}{c|}{\ }  &
      \multicolumn{4}{c|}{\textit{$\mathit{success}_{t_4}$}} & \multicolumn{1}{c}{\textit{$\mathit{fail}^\mathit{pre}_{t_1}$}} \\
      $S_1$ & $S_2$ & $S_3$ &\ init\ \ &
	  $\mathrm{as}_{\{S_2\},1}$  & $\mathrm{as}_{\{S_3\},0}$ &
	  $\setop_{\{S_2\},0}$ & $\setop_{\{S_3\},1}$ & $\mathrm{nas}_{\{S_1\},1}$ \\
      \hline
      $1$ & $1$ & $1$ & $\sfrac{1}{12}$ & $\sfrac{1}{6}$ & $0$ & $0$ & $0$ & $0$ \\
      $1$ & $1$ & $0$ & $\sfrac{1}{6}$ & $\sfrac{1}{3}$ & $\sfrac{1}{2}$ & $0$ & $0$ & $0$\\
      $1$ & $0$ & $1$ & $\sfrac{1}{8}$ & $0$ &
      $0$ & $0$ & $\sfrac{1}{2}$ & $0$\\
      $1$ & $0$ & $0$ & $\sfrac{1}{8}$ & $0$ &
      $0$ & $\sfrac{1}{2}$ & $0$ & $0$\\
      $0$ & $1$ & $1$ & $\sfrac{1}{12}$ & $\sfrac{1}{6}$ & $0$ & $0$ & $0$ & $0$\\
      $0$ & $1$ & $0$ & $\sfrac{1}{6}$ & $\sfrac{1}{3}$ & $\sfrac{1}{2}$ & $0$ & $0$ & $0$\\
      $0$ & $0$ & $1$ & $\sfrac{1}{8}$ & $0$ & $0$ & $0$ & $\sfrac{1}{2}$ & $1$\\
      $0$ & $0$ & $0$ & $\sfrac{1}{8}$ & $0$ & $0$ & $\sfrac{1}{2}$ & $0$ & $0$\\
    \end{tabularx}
  \end{minipage}
  \caption{
	  Example of operations on a net with uncertainty. 
	  We set $m_0 = \{S_2\}$ and assume the observer first fires $t_4$ (and succeeds) and then tries to fire $t_1$ (and fails). 
	  Columns in the table represent updated distributions on the markings after each operation (ordered from left to right).
	  For this example, in the end the observer knows that the final configuration is $\{S_3\}$ with probability~$1$. 
  }
  \label{fig:update-example}
\end{figure}
\todo[inline]{
	\textbf{R1:}  Figure 2: last column of the table seems wrong, should the 1 be in the row corresponding to 001 (instead of 101)? \\
	\textbf{Ben:} [DONE] He/She is right. I moved the 1 to the correct spot
}
\todo[inline]{
	\textbf{R1:} Also, it would be nice to explicitly state each column [in Fig. 2] is the result of applying the operation in the header to the column immediately to the left.\\
	\textbf{Ben:} [DONE] I changed one sentence in the caption of Fig. 2
}
We can now show that our operations correctly update the probability assumptions according to the
observations of the net.

\begin{restatable}{prop}{PropConditioned}
  \label{prop:conditioned} Let
  $N = (S, T, \leftdot{()}, \rightdot{()}, m_0, p)$ be a CNU where
  $\prob$ is the corresponding probability distribution. For given
  $t \in T$ and $m\in\mathcal{M}$ let
  $\markt{} = \{m'\in\mathcal{M}\mid m'\overset{t}{\Rightarrow}\}$,
  $\markt{m} = \{m'\in\mathcal{M}\mid m'\overset{t}{\Rightarrow} m\}$,
  $\marktnpre = \{m'\in\mathcal{M}\mid
  m'\not\overset{t}{\Rightarrow}_\mathit{pre}\}$ and
  $\marktnpost = \{m'\in\mathcal{M}\mid
  m'\not\overset{t}{\Rightarrow}_\mathit{post}\}$. Then, provided that
  $\markt{}$, $\marktnpre$ respectively $\marktnpost$ are non-empty,
  it holds for $m\in\mathcal{M}$ that
  \vspace{-1ex}
  \begin{align*}
    \successop_{t}(p)(m) 
    &= \prob(\markt{m}\mid \markt{}) 
    & \failop^\mathit{pre}_{t}(p)(m) 
    &= \prob(\{m\}\mid \marktnpre)
    \\
    &&
    \failop^\mathit{post}_{t}(p)(m) 
    &=\prob(\{m\}\mid \marktnpost)
  \end{align*}
\end{restatable}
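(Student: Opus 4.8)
The plan is to unfold the definitions of $\successop_t$, $\failop^\mathit{pre}_t$ and $\failop^\mathit{post}_t$ from Definition~\ref{def:operations_on_CNU} and to match them termwise with the three conditional probabilities; the only genuinely nontrivial point will be how the two nested $\setop$-steps inside $\successop_t$ interact.

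First I would dispatch the two failure cases, which are almost immediate. Note that $m(\leftdot t)\neq\{1\}$ holds exactly when $\leftdot t\not\subseteq m$, i.e.\ when $m\in\marktnpre$; hence the denominator $\sum_{m':\,m'(\leftdot t)\neq\{1\}}p(m')$ in~\eqref{eq:nassertop} equals $\prob(\marktnpre)$, and the two clauses defining $\failop^\mathit{pre}_t(p)(m)=\nassertop_{\leftdot t,1}(p)(m)$ read precisely as $p(m)/\prob(\marktnpre)$ for $m\in\marktnpre$ and $0$ otherwise, which is $\prob(\{m\}\mid\marktnpre)$. The post-failure case is the same computation with $\leftdot t,1,\marktnpre$ replaced by $\rightdot t,0,\marktnpost$, using that $m(\rightdot t)\neq\{0\}$ iff $\rightdot t\cap m\neq\emptyset$ iff $m\in\marktnpost$.

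For the success case I would proceed in three moves. \emph{(i)} Directly from~\eqref{eq:assertop}, $\assertop_{A,b}(p)(m)=\prob(\{m\}\mid M_{A,b})$ where $M_{A,b}:=\{m'\mid m'(A)=\{b\}\}$; applying this again to the conditioned distribution yields $\assertop_{\rightdot t,0}(\assertop_{\leftdot t,1}(p))(m)=\prob(\{m\}\mid M_{\leftdot t,1}\cap M_{\rightdot t,0})$, and $M_{\leftdot t,1}\cap M_{\rightdot t,0}=\markt{}$ by the firing rule of Definition~\ref{def:petri_net}. So after the two asserts the distribution is $q=\prob(\,\cdot\mid\markt{})$, i.e.\ $q(m')=p(m')/\prob(\markt{})$ on $\markt{}$ and $0$ elsewhere. \emph{(ii)} I would prove a lemma on $\setop$: for any distribution $q$ and any \emph{disjoint} $A,B\subseteq S$,
\[
  \setop_{B,1}\!\bigl(\setop_{A,0}(q)\bigr)(m)=\sum_{m':\,m'|_{S\setminus(A\cup B)}=m|_{S\setminus(A\cup B)}}q(m')\quad\text{if }m(A)=\{0\}\text{ and }m(B)=\{1\},
\]
and $0$ otherwise. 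This follows by expanding the two nested sums of~\eqref{eq:setop}: since $A\cap B=\emptyset$ the inner guard $m''(A)=\{0\}$ depends on $m$ but not on the summation variable $m''$, so it can be pulled out, and the remaining double sum over pairs $(m'',m')$ collapses to the single sum above because $m''$ is uniquely determined once $m'$ (which must agree with $m$ outside $A\cup B$) has been fixed. \emph{(iii)} Applying the lemma with $A=\leftdot t$, $B=\rightdot t$, $q=\prob(\,\cdot\mid\markt{})$, it remains to identify the index set: using the firing rule and $\leftdot t\cap\rightdot t=\emptyset$, whenever $m(\leftdot t)=\{0\}$ and $m(\rightdot t)=\{1\}$ one has $\{m'\in\markt{}\mid m'|_{S\setminus(\leftdot t\cup\rightdot t)}=m|_{S\setminus(\leftdot t\cup\rightdot t)}\}=\markt{m}$, while for every other $m$ both $\markt{m}$ and $\successop_t(p)(m)$ vanish; since $\markt{m}\subseteq\markt{}$ this yields $\successop_t(p)(m)=\prob(\markt{m})/\prob(\markt{})=\prob(\markt{m}\mid\markt{})$.

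The main obstacle is step~\emph{(ii)} together with the set identity for $\markt{m}$ in~\emph{(iii)}: both amount to careful bookkeeping with restrictions of markings and the disjointness $\leftdot t\cap\rightdot t=\emptyset$, and this is the only place where the Petri-net firing semantics really enters — the rest is termwise unfolding. A minor loose end is well-definedness: all denominators above are positive exactly because, as assumed in the statement, the operations involved are defined, equivalently $\prob(\markt{})>0$, $\prob(\marktnpre)>0$ and $\prob(\marktnpost)>0$, which is the content of the non-emptiness hypotheses given that the observer only fires transitions possible under $p$.
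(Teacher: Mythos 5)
Your proposal is correct and follows essentially the same route as the paper's proof: the failure cases are dispatched by direct unfolding, and for the success case the two asserts are identified with conditioning on $\markt{}$ while the two nested $\setop$s are identified with the marginalization that collapses the sum over $\markt{m}$ — the paper just writes this as a single backward chain of equalities starting from $\prob(\markt{m}\mid\markt{})$, whereas you work forward from the definition and isolate the $\setop$-composition as an explicit lemma. Your version is, if anything, more careful about the disjointness of $\leftdot t$ and $\rightdot t$ and the well-definedness of the denominators.
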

\todo[inline]{
	\textbf{R1:} it's clear what is meant from context, but I don't think the notation $m' \Rightarrow^t$ was ever defined.\\
	\textbf{Ben:} [DONE] I added the notation $m \Rightarrow^t$ in Definition 1. Please check.
}

We shall refer to the the joint distribution (over all places)
by~$\prob$. Note that it is unfeasible to explicitly store it %
if the number of places is large. To mitigate this problem we use a
Bayesian network with a random variable for each place, %
recording dependencies between the presence of tokens in different places. %
If such dependencies are local, %
the BN is often moderate in size and %
thus provides a compact symbolic representation. %
However, updating the joint distribution of BNs is non-trivial. %
To address this problem, %
we propose a propagation procedure based on a term-based, modular
representation of BNs.

\section{%
  Modular Bayesian Networks and %
  Sub-Stochastic Matrices%
}\label{sec:MBN-substoch}



Bayesian networks (BNs) are %
a graphical formalism to reason about probability distributions. %
They are visualized as %
directed, acyclic graphs with %
nodes random variables and %
edges dependencies between them. 
\todo[inline]{
	\textbf{R1:}``the probability distribution of Figure 2'' -- Figure 2 has many probability distributions; perhaps it's clearer to refer explicitly to 'init'\\
	\textbf{Ben:} [DONE] I removed the sentence ``An example BN, encoding the probability distribution of Figure~\ref{fig:update-example} is given in Figure~\ref{fig:mbn-example}.'' completely and in Figure \ref{fig:mbn-example} I added ``initial'' to distribution.
}
This is sufficient for static BNs %
 whose most common operation is the %
inference of (marginalized or conditional) distributions %
of the underlying joint distribution.  %

For a rewriting calculus on dynamic BNs, %
we consider a modular representation of networks %
that do not only encode a single probability vector, but a matrix,
with several input and output ports.
The first aim is compositionality: %
larger nets can be composed from smaller ones via sequential and
parallel composition, %
which correspond to matrix multiplication %
and Kronecker product of the encoded matrices. %
This means, %
we can implement the operations of
Section~\ref{sec:general_setup} in a modular way.

\paragraph*{PROPs with Commutative Comonoid Structure}

We now describe the underlying compositional structure %
of (modular) BNs and \mbox{(sub-)}\allowbreak stochastic matrices, %
which facilitates a compositional computation of %
the underlying probability distribution of %
(modular) BNs. %
The mathematical structure are
PROPs
~\cite{m:categorical-algebra} %
(see also \cite[Chapter~5.2]{2018arXiv180305316F}), %
i.e., %
strict symmetric monoidal categories %
\((C,\otimes, 0, \sigma)\) %
whose objects are %
(in bijection with) %
the natural numbers, with monoidal product \(\otimes\) as %
(essentially) addition, with unit $0$. %
The compositional structure of PROPs can be intuitively %
represented using string diagrams with wires and boxes %
(see Figure~\ref{fig:prop-composition}). Symmetries \(\sigma\) serve
for the reordering of wires. \todo{B: What is $0$?}%
\begin{figure}[htb]
  \centering
  \tikzstyle{every picture}+=[scale=1.5,baseline={([yshift=-1ex]current
    bounding box.west)}]
  \begin{tabular}[c]{cc}
      \(
      \tikz[xscale=.7]\Rcell[1.7]{m}{k}{f;f'};
      =                          
      \tikz[xscale=.7]\cmp{1}{\Rcell{m}{n}{f}}{\Rcell{n}{k}{f'}}; 
    \)                             
                  &
                        \(
  \tikz[xscale=.8]\Rcell[2.5]{m_1+m_2}{n_1+n_2}{f_1\otimes f_2};
                    =
  \begin{array}[c]{c}
        \tikz[xscale=.9]\Rcell{m_1}{n_1}{f_1};
    ~\\[2ex]
    \tikz[xscale=.9]\Rcell{m_{2}}{n_{2}}{f_{2}};
  \end{array}\)
  \end{tabular}
  \caption{%
    String diagrammatic composition (resp.\ tensor) of two arrows
    \(f\colon m \to n\), \(f'\colon n \to k\)  
    (resp.\ \(f_1 \colon m_1 \to n_1\), \(f_2 \colon m_2 \to n_2\)) %
    of a PROP \((C,\otimes,0,\sigma)\)  %
  }
  \label{fig:prop-composition}
\end{figure}

A paradigmatic example is %
the PROP of \(2^n\)-dimensional Euclidean spaces %
and linear maps, %
equipped with the tensor product: %
the tensor product of \(2^n\)- and \(2^m\)-dimensional spaces is
\(2^{n+m}\)-dimensional, %
composition of linear maps amounts to matrix multiplication, %
and the tensor product is also known as Kronecker product %
(as detailed below). %
We refer to the natural numbers of the domain and codomain of %
arrows in a PROP as their \emph{type}; %
thus, %
a linear map from \(2^n\)- to \(2^m\)-dimensional Euclidean space %
has type \(n\to m\). %

We shall have the %
additional structure on symmetric monoidal categories %
that was dubbed \emph{graph substitution} %
in work on term graphs~\cite{cg:term-graphs-gs-monoidal}, %
which amounts to a commutative comonoid structure on PROPs.


\begin{defi}[PROPs with commutative comonoid structure]
  A \emph{CC-structured PROP} %
  is a tuple \((C,\otimes, 0, \sigma,\nabla,\top)\) %
  where \((C,\otimes, 0, \sigma)\) is a PROP %
  and the last two components are
  arrows \(\nabla\colon 1 \to 2\) and \(\top\colon 1\to 0\), 
  which are subject to Equations~\ref{eq:cc-structure} 
  (cf.\  Figure~\ref{fig:comonoid-gates}).
  \begin{equation}
    \label{eq:cc-structure}
        \nabla;(\nabla\otimes \id_1) = \nabla;(\id_1\otimes\nabla), 
    \nabla = \nabla;\sigma\qquad\qquad
    \nabla;(\id_1\otimes \top) = \id_1
  \end{equation}
\end{defi}

\begin{table}
  \centering
  \cbox{
  \begin{eqnarray*}
    && (t_1;t_3) \otimes (t_2;t_4) = (t_1\otimes t_2);(t_3\otimes
    t_4) \qquad
    (t_1;t_2);t_3 = t_1;(t_2;t_3) \\
    && \id_n;t = t = t;\id_m \qquad
    (t_1\otimes t_2)\otimes t_3 = t_1\otimes(t_2\otimes t_3) \qquad
    \id_0 \otimes t = t = t\otimes \id_0 \\
    && \sigma;\sigma = \id_2 \qquad
    (t\otimes\id_m);\sigma_{n,m} = \sigma_{m,n};(\id_n\otimes t)
    \qquad
    \nabla;(\nabla\otimes \id_1) = \nabla;(\id_1\otimes\nabla)
    \\
    && \nabla = \nabla;\sigma  \qquad 
    \nabla;(\id_1\otimes \top) = \id_1 
  \end{eqnarray*}
  \hrule
  \begin{eqnarray*}
    && \id_1 = \id \qquad \id_{n+1} = \id_n\otimes \id_1 \\
    && \sigma_{n,0} = \sigma_{0,n} = \id_n \qquad 
    \sigma_{n+1,1} = (\id\otimes \sigma_{n,1});(\sigma\otimes \id_n)
    \\
    && \qquad \sigma_{n,m+1} = (\sigma_{n,m}\otimes \id_1);(\id_m\otimes
    \sigma_{n,1}) \\
    && \nabla_1 = \nabla \qquad \nabla_{n+1} = (\nabla_n\otimes
    \nabla);(\id_n \otimes \sigma_{n,1}\otimes \id) \\
    && \top_1 = \top\qquad \top_{n+1} = \top_n\otimes \top
  \end{eqnarray*}
  }
  \caption{Axioms for CC-structured PROPs and definition of operators
    of higher arity}
  \label{tab:axioms-cc-prop}
\end{table}

\begin{wrapfigure}{r}{0.52\textwidth}
    \centering
    \includegraphics[width=0.5\textwidth]{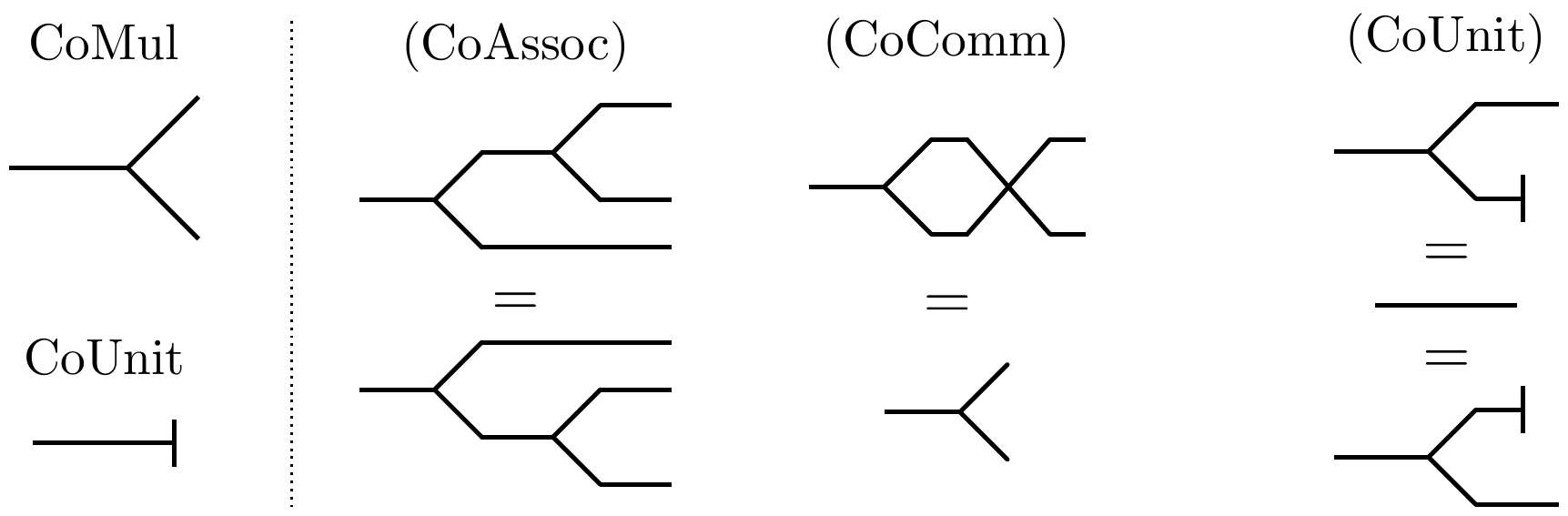}
    \caption{Comultiplication and counit
      arrows\\ 
      and the equations of commutative comonoids %
    }
    \label{fig:comonoid-gates}
\end{wrapfigure}
To give another, more direct definition, the arrows of a freely
generated CC-structured PROP can be represented as terms over some set
of generators $g\in G$ and constants $\id\colon 1\to 1$,
$\sigma\colon 2\to 2$, $\nabla\colon 1\to 2$, $\top\colon 1\to 0$,
combined with the operators sequential composition ($;$) and tensor
($\otimes$) and quotiented by the axioms in
Table~\ref{tab:axioms-cc-prop} (see \cite{z:interacting-hopf}). This
table also lists the definition of operators of higher arity. We often
refer to the comultiplication~\(\Delta\) and %
its counit~\(\top\) as \emph{duplicator} and \emph{terminator},
resp.  %
(cf.\ Figure~\ref{fig:comonoid-gates}). %
Roughly, adding the commutative comonoid structure %
amounts to the possibility to have several or no connections to each one
of the output port of gates %
and input ports. %
In other words, %
outputs can be shared.

\paragraph*{(Sub-)Stochastic Matrices}
We now consider (sub-)stochastic matrices as an instance of a
CC-structured PROP.  A \emph{matrix} of type $n\to m$ is a matrix $P$
of dimension $2^m\times 2^n$ with entries taken from the closed
interval $[0,1] \subseteq \mathbb{R}$. We restrict attention to
sub-stochastic matrices, i.e., column sums will be at most~$1$; if we
require equality, we obtain stochastic matrices.

\begin{wrapfigure}{r}{0.282\textwidth}
  \begin{center}
    ~\\[-6ex]
    $
    \begin{matrix}
      11 \\ 10 \\ 01 \\ 00
    \end{matrix}
    \begin{pmatrix}
      \    \smash{\raisebox{3ex}{\makebox[0pt]{11}}}
      {\cdot}\  &
      \ {\cdot}\smash{\raisebox{3ex}{\makebox[0pt]{10}}}
      \  & \ {\cdot}\smash{\raisebox{3ex}{\makebox[0pt]{01}}}
      \  & \ {\cdot}\smash{\raisebox{3ex}{\makebox[0pt]{00}}}\ \ \  \\
      \ \cdot\  & \ \cdot\  & \ \cdot\  & \ \cdot\ \ \\
      \ \cdot\  & \ \cdot\  & \ \cdot\  & \ \cdot\ \ \\
      \ \cdot\ & \ \cdot\ & \ \cdot\ & \ \cdot\ \
    \end{pmatrix}$\vspace{-2ex}
  \end{center}
\end{wrapfigure}
We index matrices over $\{0,1\}^m \times \{0,1\}^n$, i.e., for
$\bitvec{x}\in \{0,1\}^m$, ${\bitvec{y}\in \{0,1\}^n}$ the
corresponding entry is denoted by $P(\bitvec{x}\mid \bitvec{y})$. We
use this notation to evoke the idea of conditional probability (the
probability that the first index is equal to $\bitvec{x}$, whenever
the second index is equal to $\bitvec{y}$.) When we write $P$ as a
matrix, the rows/columns are ordered according to a descending
sequence of binary numbers ($1\dots 1$ first, $0\dots 0$ last).

Sequential composition is \emph{matrix multiplication}, i.e.,
given $P\colon n\to m$, $Q\colon m\to \ell$ we define
$P;Q = Q\cdot P\colon n\to \ell$, which is a
$2^\ell\times 2^n$-matrix. The tensor is given by the \emph{Kronecker
product}, i.e., given $P\colon n_1\to m_1$, $Q\colon n_2\to m_2$ we
define $P\otimes Q\colon n_1+n_2\to m_1+m_2$ as
$(P\otimes Q)(\bitvec{x}_1\bitvec{x}_2\mid \bitvec{y}_1\bitvec{y}_2) =
P(\bitvec{x}_1\mid \bitvec{y}_1)\cdot Q(\bitvec{x}_2\mid \bitvec{y}_2)$ where
$\bitvec{x}_i \in \{0,1\}^{n_i}$, $\bitvec{y}_i \in \{0,1\}^{m_i}$.

The constants are defined as follows:
\begin{eqnarray*}
  \id_0 = (1) \quad \id =
  \begin{pmatrix}
    1 & 0 \\ 0 & 1
  \end{pmatrix} \quad \nabla =
  \begin{pmatrix}
    1 & 0 \\ 0 & 0 \\ 0 & 0 \\ 0 & 1
  \end{pmatrix} \quad \sigma =
  \begin{pmatrix}
    1 & 0 & 0 & 0 \\ 0 & 0 & 1 & 0 \\ 0 & 1 & 0 & 0 \\ 0 & 0 & 0 & 1
  \end{pmatrix} \quad \top =
  \begin{pmatrix}
    1 & 1
  \end{pmatrix} 
\end{eqnarray*}

\full{In more detail, the constant matrices can be spelled out as
  follows.

  \begin{itemize}
  \item $\id_0$ is the unique stochastic $1\times 1$-matrix, i.e.,
    $\id_0(\epsilon,\epsilon) = 1$.
  \item $\id$ is the $2\times 2$ identity matrix, i.e., $\id(x,y) = 1$
    iff $x=y$ (otherwise $0$). 
  \item $\nabla(x_1 x_2, y) = 1$ iff $x_1 = x_2 = y$ (otherwise $0$).
  \item $\sigma(x_1 x_2, y_1 y_2) = 1$ iff $x_1 = y_1$ and $x_2 = y_1$
    (otherwise $0$).
  \item $\top(\epsilon,x) = 1$ for every $x$.
  \end{itemize}
}

\begin{prop}[\protect\cite{f:causal-theories}]
  \label{prop:stoch-mat-cc-prop}
  (Sub-)stochastic matrices form a CC-structured PROP. %
\end{prop}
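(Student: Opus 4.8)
The plan is to check the definition of CC-structured PROP directly. Since real matrices with matrix multiplication as composition and Kronecker product as tensor are the textbook example of a PROP, almost all of the axioms in Table~\ref{tab:axioms-cc-prop} reduce to standard identities of linear algebra; the only genuinely new content is (i) that the sub-stochastic matrices are closed under all of the operations and contain all of the required constants, so that they form a sub-PROP, and (ii) the three comonoid equations of Equation~\ref{eq:cc-structure} for the specific $\nabla$ and $\top$ given above. First I would set up the category: objects are natural numbers, an arrow $n\to m$ is a $2^m\times 2^n$ sub-stochastic matrix, $\id_n$ is the $2^n\times 2^n$ identity matrix, and $P;Q=Q\cdot P$; associativity and the identity laws are those of matrix multiplication. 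The monoidal product on objects is addition (using $2^{n+m}=2^n\cdot 2^m$) and on arrows is the Kronecker product, with unit $0$ and $\id_0=(1)$; associativity and unitality of $\otimes$, the interchange law $(P_1\otimes P_2);(Q_1\otimes Q_2)=(P_1;Q_1)\otimes(P_2;Q_2)$, and $\id_n\otimes\id_m=\id_{n+m}$ are the standard associativity, unit, mixed-product, and $I_{2^n}\otimes I_{2^m}=I_{2^{n+m}}$ properties of Kronecker products.

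The closure argument is short. Writing $\mathbf 1$ for an all-ones row vector of the appropriate length, a matrix $P\ge 0$ is sub-stochastic iff $\mathbf 1 P\le\mathbf 1$ entrywise. For composition, if $P\colon n\to m$ and $Q\colon m\to\ell$ are sub-stochastic then $Q\cdot P\ge 0$ and $\mathbf 1(Q\cdot P)=(\mathbf 1 Q)P\le\mathbf 1 P\le\mathbf 1$, using $P\ge 0$; for the tensor, $P\otimes Q\ge 0$ and its column sums are products of column sums of $P$ and $Q$, hence at most $1$. The constants $\id$, $\sigma$, $\nabla$, $\top$ listed above are nonnegative with exactly one $1$ in each column, hence (sub-)stochastic, and since the higher-arity operators $\id_n$, $\sigma_{n,m}$, $\nabla_n$, $\top_n$ are assembled from them by $;$ and $\otimes$ (Table~\ref{tab:axioms-cc-prop}), closure under the operations shows these are sub-stochastic too. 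It remains to check $\sigma;\sigma=\id_2$ (the $4\times 4$ matrix is the involutive permutation fixing $11,00$ and swapping $10,01$) and naturality of the symmetry, i.e.\ $(t\otimes\id);\sigma=\sigma;(\id\otimes t)$ for $t$ of the appropriate type; both follow from the fact that $\sigma_{n,m}$ is precisely the permutation matrix exchanging the two blocks of bits, equivalently from the commutation identity $\sigma_{n',m'}\cdot(P\otimes Q)=(Q\otimes P)\cdot\sigma_{n,m}$ for $P\colon n\to n'$, $Q\colon m\to m'$, which one verifies on coordinate vectors.

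For the comonoid equations it is convenient to read a $\{0,1\}$-matrix as a relation on bit strings. Here $\nabla(x_1x_2\mid y)=1$ iff $x_1=x_2=y$ (so $\nabla$ copies its input bit) and $\top(\epsilon\mid x)=1$ for all $x$ (so $\top$ marginalises out its input). A one-line composition shows that both $\nabla;(\nabla\otimes\id_1)$ and $\nabla;(\id_1\otimes\nabla)$ equal the $2^3\times 2^1$ matrix whose $(x_1x_2x_3\mid y)$ entry is $1$ iff $x_1=x_2=x_3=y$, giving coassociativity; $\nabla;\sigma=\nabla$ because $\sigma$ only swaps the two equal output bits; and $\nabla;(\id_1\otimes\top)=\id_1$ because composing with $\id_1\otimes\top$ deletes the second copy, leaving the entry $[x=y]$. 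Each of these is a direct calculation with matrices of size at most $8\times 2$.

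\textbf{Main obstacle.}
There is no deep obstacle: the proof is bookkeeping, and the only point requiring care is the row/column ordering convention (descending binary) and, relatedly, confirming that the inductively defined $\sigma_{n,m}$, $\nabla_n$, $\top_n$ really are the expected permutation, copy, and delete maps, which is an easy induction from the base cases together with the PROP axioms. Alternatively, one may simply invoke that real matrices with the Kronecker product form a PROP and restrict to the sub-stochastic sub-PROP, so that the whole argument collapses to the closure check and the three equations above (cf.~\cite{f:causal-theories}).
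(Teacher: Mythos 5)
Your proposal is correct and takes essentially the same route as the paper: the paper's own argument is just a two-sentence sketch asserting that the axioms of Table~\ref{tab:axioms-cc-prop} are straightforward to verify for (sub-)stochastic matrices, with a fallback citation to \cite{f:causal-theories}, and your write-up simply carries out that direct verification (closure under composition and Kronecker product, stochasticity of the constants, and the comonoid equations) in full detail.
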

\full{
  \begin{proof}[Proof sketch]
    It is straightforward to check that (sub-)stochastic matrices
    satisfy all the axioms in Table~\ref{tab:axioms-cc-prop}. On
    the other hand the result also follows from
    \cite{f:causal-theories}, which interprets Bayesian networks over
    stochastic maps, a generalization of stochastic matrices in terms of
    measure theory.
  \end{proof}
}

\paragraph*{Causality Graphs}
We next introduce causality graphs, %
a variant of term graphs~\cite{cg:term-graphs-gs-monoidal}, to provide a modular representation of Bayesian networks.
%
Nodes play the role of gates of string diagrams; %
the main difference to port graphs \cite[Chapter 5]{2018arXiv180305316F}
is the branching structure at output ports,  %
which corresponds to (freely) added  commutative comonoid structure. 
We fix a set of generators~$G$ (a.k.a.\ signature), %
elements of which can be thought of as %
blueprints of gates of a certain type; %
all generators \(g\in G\) will be of type \(n\to 1\), %
which means that each node can be identified with its single output
port %
while it has a certain number of input ports. %

\begin{defi}[Causality Graph (CG)]
  A \emph{causality graph (CG)} of type $n\to m$ is %
  a tuple $B = (V,\ell,s,\out)$ where
  \begin{itemize}
  \item $V$ is a set of \emph{nodes},
  \item $\ell\colon V\to G$ is a \emph{labelling function} that assigns
    a generator $\ell(v)\in G$ to each node \(v \in V\), %
  \item $s\colon V\to W_B^*$ where $W_B = V\cup\{i_1,\dots,i_n\}$ %
    is the \emph{source function} that
    assigns a sequence of \emph{wires}
    \(s(v)\) to each node \(v\in V\) %
    such that %
    $|s(v)| = n$ if $\ell(v)\colon n\to 1$, %
  \item %
    $\out\colon \{o_1,\dots,o_m\}\to W_B$ is the
	\emph{output function} that assigns each output port to a wire.
  \end{itemize}
  Moreover, the corresponding directed graph (defined by $s$) has to be acyclic.

\end{defi}
\todo[inline]{
	\textbf{R1:} ``such that the corresponding DAG of the causality graph is acyclic''. This could be improved. The word "the" is repeated twice. DAGs are by definition acyclic. Perhaps most importantly, the "corresponding DAG" of a causality graph is not defined. I can guess what it means, but it would be nice to see something precise to check my intuition against. \\
	\textbf{Ben:} [MAYBE DONE] I changed the last sentence a bit. However, to properly define the directed graph we mean would need a lot more space, which would be a bit of a waste I think.
}

By $\{i_1,\dots,i_n\}$ we denote the set of \emph{input ports} and by
$\{o_1,\dots,o_m\}$ the set of \emph{output ports}.
By $\mathrm{pred}$ and $\mathrm{succ}$ we denote the direct
predecessors and successors of a node, i.e.
$\mathrm{pred}(v_0) = \{ v \in V \mid v\in s(v_0)\}$ and
$\mathrm{succ}(v_0) = \{ v \in V \mid v_0\in s(v)\}$, %
respectively. %
By $\mathrm{pred}^*(v_0)$ we denote the set of indirect predecessors,
using transitive closure. Furthermore $\mathrm{path}(v,w)$ denotes the
set of all nodes which lie on paths from $v$ to $w$.

A wire originates from a single input port or node and each node can
feed into several successor nodes and/or output ports.  Note that 
input and output are not symmetric in %
the context of causality graphs. %
This is a consequence of the absence of a monoid structure.

We equip CGs %
with operations of composition and tensor product, %
identities, %
and a commutative comonoid structure. %
We require that the node sets of Bayesian nets $B_1,B_2$ are
disjoint.%
\footnote{%
  The case of non-disjoint sets can be handled by %
  a suitable choice of coproducts. %
} %


\begin{description}
  
\item[Composition] Whenever $m_1 = n_2$, we define
  $B_1;B_2 := B = (V,\ell,s,\mathit{out})\colon n_1\to m_2$ with
  $V = V_1\uplus V_2$, $\ell = \ell_1\uplus \ell_2$,
  $s = s_1\uplus c\circ s_2$, $\mathit{out} = c\circ\mathit{out}_2$
  where $c\colon W_{B_2}\to W_B$ is defined as follows and extended to
  sequences: $c(w) = w$ if $w\in V_2$ and $c(w) = \mathit{out}_1(o_j)$
  if $w = i_j$.
\item [Tensor] Disjoint union is parallel composition, i.e.,
  $B_1\otimes B_2 := B = (V,\ell,s,\mathit{out})\colon n_1+n_2\to
  m_1+m_2$ with $V = V_1\uplus V_2$, $\ell = \ell_1\uplus \ell_2$,
  $s = s_1\uplus d\circ s_2$, where $d\colon W_{B_2}\to W_B$ and
  $\mathit{out}\colon \{o_1,\dots,o_{m_1+m_2}\}\to W_B$ are defined as
  follows: $d(w) = w$ if $w\in V_2$ and $d(w) = i_{n_1+j}$ if $w =
  i_j$. Furthermore $\mathit{out}(o_j) = \mathit{out}_1(o_j)$ if $1\le
  j\le m_1$ and $\mathit{out}(o_j) = \mathit{out}_2(o_{j-m_1})$ if
  $m_1 < j \le m_1+m_2$. 
\item[Operators] Finally the constants and generators are as
  follows:\footnote{A function $f\colon A\to B$, where
    $A = \{a_1,\dots,a_k\}$ is finite, is denoted by
    $f = [a_1\mapsto f(a_1),\dots, a_k\mapsto f(a_k)]$. We denote a
    function with empty domain by $[\,]$.}
  \begin{center}
    $\id_0= (\emptyset,[\,],[\,],[\,])\colon 0\to 0$ \quad
    $\id= (\emptyset,[\,],[\,],[o_1\mapsto i_1])\colon 1\to 1$ \quad
    $\top = (\emptyset,[\,],[\,],[\,])\colon 1\to 0$ \\
    $\sigma = (\emptyset,[\,],[\,],[o_1\mapsto i_2, o_2\mapsto i_1])\colon
    2\to 2$ \quad
    $\nabla = (\emptyset,[\,],[\,],[o_1\mapsto i_1, o_2\mapsto i_1])\colon
    1\to 2$  \\
    $B_g = (\{v\},[v\mapsto g],[v\mapsto i_1\dots i_n], [o_1\mapsto
    v])\colon n\to 1$, whenever $g\in G$ with type $g\colon n\to 1$
  \end{center}
\end{description}

Finally, %
all these operations %
lift to isomorphism classes of CGs. 

\begin{prop}[\protect\cite{cg:term-graphs-gs-monoidal}]
  \label{prop:free-cc-prop}
  CGs quotiented by isomorphism form the freely generated
  CC-structured PROP over the set of generators $G$, %
  where
  two causality graphs
  $B_i = (V_i,\ell_i,s_i,\mathit{out}_i)\colon \allowbreak n\to m$,
  $i\in\{1,2\}$, are isomorphic if there is %
  a bijective mapping $\phi\colon V_1\to V_2$ such that %
  $\ell_1(v) = \ell_2(\phi(v))$ and $\phi(s_1(v)) = s_2(v)$ hold for
  all $v\in V_1$ and $\phi(\mathit{out}_1(o_i)) = \mathit{out}_2(o_i)$
  holds %
  for all $i \in \{1,\dotsc, m\}$.\footnote{%
    We apply $\phi$ to a sequence of wires, %
    by applying $\phi$ pointwise and %
    assuming that $\phi(i_j) = i_j$ for $1\le j\le n$.%
  }
\end{prop}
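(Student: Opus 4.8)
The plan is to establish the universal property of the freely generated CC-structured PROP by the standard two-part argument: first show that CGs modulo isomorphism form a CC-structured PROP, then show this PROP is free over $G$. For the first part I would verify that the operations of composition, tensor, the constants $\id_0,\id,\sigma,\nabla,\top$, and the generator arrows $B_g$ satisfy all axioms in Table~\ref{tab:axioms-cc-prop}. Most of these are routine but slightly tedious computations with the source and output functions; the key points are: (i) composition is well-defined on isomorphism classes, which follows because the connecting map $c$ is determined by $\mathit{out}_1$ and the node sets; (ii) associativity of $;$ and $\otimes$, functoriality of $\otimes$ (the interchange law $(t_1;t_3)\otimes(t_2;t_4) = (t_1\otimes t_2);(t_3\otimes t_4)$), and the unit laws, all of which reduce to checking that two composite CGs have literally the same underlying data up to the obvious renaming; (iii) the symmetry axioms, which hold because $\sigma$ has no nodes and acts purely by swapping the output assignment; (iv) the comonoid axioms $\nabla;(\nabla\otimes\id) = \nabla;(\id\otimes\nabla)$, $\nabla = \nabla;\sigma$, and $\nabla;(\id\otimes\top) = \id$, which again hold on the nose since all the CGs involved are node-free and the output functions compute to the same wire assignments. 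Acyclicity is preserved by composition because $c$ maps input ports of $B_2$ to nodes or input ports of $B_1$, so no new cycles through the old edges can be created.

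For freeness, I would exhibit the unit of the adjunction, namely the map sending each generator $g\in G$ of type $n\to 1$ to the one-node CG $B_g$, and show the required universal property: for any CC-structured PROP $\mathcal{D}$ and any type-preserving assignment $h\colon G\to \mathrm{Arr}(\mathcal{D})$, there is a unique CC-structured PROP morphism $\hat h$ from CGs to $\mathcal{D}$ with $\hat h(B_g) = h(g)$. Existence of $\hat h$ is by induction on the structure of a CG: since CGs are generated from $\id,\sigma,\nabla,\top$ and the $B_g$ under $;$ and $\otimes$ (this decomposition is essentially the content of the cited term-graph result and can be proved by topologically sorting the nodes of an acyclic CG and peeling off source nodes one at a time), one defines $\hat h$ by sending the generators and constants to their counterparts in $\mathcal{D}$ and $;$, $\otimes$ to composition and tensor in $\mathcal{D}$. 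One must check this is well-defined, i.e.\ independent of the chosen decomposition; this is exactly where the axioms of Table~\ref{tab:axioms-cc-prop} are used — any two decompositions of the same isomorphism class of CG are related by a sequence of these axioms. Uniqueness of $\hat h$ is immediate since any CC-structured PROP morphism is forced on generators and constants and must commute with $;$ and $\otimes$.

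The main obstacle is the well-definedness of $\hat h$, equivalently the completeness claim that the isomorphism relation on CGs coincides with provable equality from the axioms of Table~\ref{tab:axioms-cc-prop}. Concretely: two term expressions over $G\cup\{\id,\sigma,\nabla,\top\}$ built with $;$ and $\otimes$ evaluate to isomorphic CGs if and only if they are equated by the axioms. Soundness (equal terms give isomorphic CGs) is the routine direction and follows from the first part. Completeness requires a normal-form argument — one shows every CG can be brought, via the axioms, into a canonical layered form determined by a topological order of its nodes, and that isomorphic CGs yield the same normal form up to the permutation and comonoid axioms. Since this is precisely the statement established in~\cite{cg:term-graphs-gs-monoidal} for gs-monoidal theories / term graphs, I would invoke that result rather than reprove it, noting that the present notion of CG is a notational variant (all generators of type $n\to 1$, nodes identified with output ports) of the term graphs treated there, so the proof transfers mutatis mutandis.
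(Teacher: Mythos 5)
Your proposal is correct and takes essentially the same route as the paper: the paper's proof sketch likewise reduces the statement to the cited result of~\cite{cg:term-graphs-gs-monoidal}, observing that CC-structured PROPs are gs-monoidal categories with natural-number objects and that CGs are term graphs (with input ports as empty nodes), so that the one-to-one correspondence between term graphs and arrows of the free gs-monoidal category gives the claim. Your additional scaffolding (axiom verification, the universal property, and isolating the normal-form/completeness step as the part delegated to the reference) is a more explicit account of the same argument.
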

\full{%
	\begin{proof}[Proof sketch]
		This follows from the fact that CC-structured PROPs correspond to
		the gs-monoidal categories (with natural numbers as objects) of
		\cite{cg:term-graphs-gs-monoidal}.  Furthermore CGs are in essence
		term graphs, where the input ports are called empty nodes. Since
		\cite{cg:term-graphs-gs-monoidal} shows that term graphs are
		one-to-one with the arrows of the free gs-monoidal category, our
		result follows.
	\end{proof}%
      }%


In the following, we often decompose a CG into a subgraph and its ``context''. 
\begin{restatable}[Decompositionality of CGs]{lem}{LemMbnSplitting}
  \label{lem:mbn_splitting} 
  Let $B = (V,\ell,s,\out)\colon n\to m$ be a causality graph.  Let
  $V' \subseteq V$ be a subset of nodes closed with respect to paths,
  i.e. for all $v,w \in V': \mathrm{path}(v,w) \subseteq V'$.  Then
  there exist $k \in \mathbb{N}$ and $(B_i, e_i)$ with
  $B_i = (V_i, l_i, s_i, \mathrm{out}_i)$ for $i=1,\dots,3$ such that
  $V_2 = V'$, $B = B_1; (\id_{k} \otimes B_2); B_3$ and
  $\out_2(o_i) \in V'$ for all $i$.
\end{restatable}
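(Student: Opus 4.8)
The plan is to ``slice'' $B$ horizontally into three consecutive layers with $V'$ in the middle; the point is that path-closedness makes $V'$ convex in the DAG underlying $B$. Concretely I would set $V_3 := \mathrm{succ}^*(V')\setminus V'$, $V_1 := V\setminus(V'\cup V_3)$, and $V_2 := V'$, so that $V = V_1\uplus V_2\uplus V_3$. The key consequence of path-closedness is that no node of $V_3$ has an edge into $V'$: if $u\in\mathrm{succ}^*(V')$ had an edge to some $w\in V'$, appending that edge to a path from $V'$ to $u$ would exhibit $u\in\mathrm{path}(v_0,w)\subseteq V'$ for some $v_0\in V'$, contradicting $u\notin V'$. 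From this one checks routinely that $V_3$ is closed under successors and $V_1$ under predecessors (among nodes), hence that every edge of $B$ either stays inside one block or points ``forward'' in the order $V_1<V_2<V_3$ --- edges directly from $V_1$ to $V_3$ are permitted, only backward edges are excluded. This is exactly the separation a factorisation $B = B_1;(\id_k\otimes B_2);B_3$ with $V_i$ the node set of $B_i$ encodes.

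\textbf{Building the three factors.} I would let $B_2$ be the causality graph on $V'$: labels and the $V'$-internal part of the source function are inherited from $B$; for every $v\in V'$ and every position of $s(v)$ pointing outside $V'$ --- necessarily to a node of $V_1$ or an input port of $B$, by predecessor-closedness of $V_1$ --- I add an input port to $B_2$ and redirect that position to it; and for every $v\in V'$ referenced from $V_3$ or from $\out$ I add an output port of $B_2$ pointing to $v$, which yields $\out_2(o_i)\in V'$ for all $i$ as required. Dually $B_3$ is the causality graph on $V_3$ with $k+b$ input ports: $b$ matched with the outputs of $B_2$ (carrying the exposed $V'$-nodes) and $k$ ``bypass'' ports carrying those $V_1$-nodes and input ports of $B$ that are referenced from $V_3$ or from $\out$, with all occurrences in $s_3$ and $\out_3$ retargeted accordingly. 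Finally $B_1$ is the causality graph on $V_1$ --- whose source function already refers only to $V_1$-nodes and input ports of $B$ --- equipped with $k+a$ output ports, the first $k$ carrying the bypass wires demanded by $B_3$ and the remaining $a$ the wires demanded by the input ports of $B_2$; here I use the comonoid structure to let $\out_1$ route one wire to several output ports when a node or input port feeds several places downstream. The numbers $a,b,k$ are just the sizes of the three occurrence sets, and the interface types come out as $B_1\colon n\to k+a$, $B_2\colon a\to b$, $B_3\colon k+b\to m$.

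\textbf{Reassembly and the main obstacle.} It then remains to unfold the definitions of tensor and composition of CGs and verify $B_1;(\id_k\otimes B_2);B_3 = B$ up to the isomorphism of Proposition~\ref{prop:free-cc-prop}: the node set telescopes to $V_1\uplus V'\uplus V_3 = V$ with the original labels, and the connecting maps of the two composites compose the ``redirect'' and ``retarget'' steps back into the original source function $s$ and output function $\out$. Acyclicity of each $B_i$ is inherited from $B$, since passing to an induced subgraph and moving some edge-heads to fresh input ports creates no cycle. The genuinely delicate point --- and the main obstacle --- is the index bookkeeping: one must fix a linear order on the input- and output-occurrences of each factor and check that the index translations built into $B_1$, $B_2$, and $B_3$ are mutually inverse along the composition, so that the wiring really reassembles into $s$ and $\out$; everything else is routine unfolding of the PROP operations. (A term-structural induction via Proposition~\ref{prop:free-cc-prop} also looks feasible but messier, as a sequential or parallel decomposition of $B$ need not respect $V'$.)
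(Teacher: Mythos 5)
Your proof is correct and follows essentially the same route as the paper's: partition $V$ into a predecessor layer, $V'$, and a successor layer using path-closedness, then rebuild the three factors with duplicators, terminators, and permutations handling the shared and bypassed wires. The only (immaterial) differences are that the paper assigns the nodes unrelated to $V'$ to the third layer ($V_1 = \mathrm{pred}^*(V')$, $V_3$ the rest, $k = n + |V_1|$) whereas you assign them to the first and choose a leaner $k$, and that the paper exposes every node of $V'$ as an output of $B_2$ rather than only those referenced downstream.
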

Thus, %
given a set of nodes in a BN that contains %
all nodes on paths between them, %
we have the induced subnet of the node set %
and a suitable ``context'' %
such that the whole net can be seen as the result of  %
substition of the subnet into the ``context''.

\paragraph*{Modular Bayesian Networks} 

We will now equip the nodes of causality graphs with matrices,
assigning an interpretation to each generator. %
This fully determines the corresponding matrix of the BN. Note that
Bayesian networks as PROPs have earlier been studied in
\cite{f:causal-theories,jz:transformer-bayesian,jz:influence-bayesian}.

\begin{defi}[Modular Bayesian network (MBN)]
  A \emph{modular Bayesian network (MBN)} is a tuple $(B,e)$ where
  $B = (V,\ell,s,\out)$ is a causality graph and $\ev$ an
  \emph{evaluation function} that assigns to every generator
  $g\in G$ with $g\colon n\to 1$ a $2^n\times 2$-matrix $e(g)$.  An MBN
  $(B,e)$ is called an \emph{ordinary Bayesian network (OBN)} whenever $B$ has no inputs (i.e. $B: 0 \to m$), $\out$ is a
  bijection, and every node is associated with a stochastic matrix.
\end{defi}
\todo[inline]{
	\textbf{R1:} ``B is of type 0 -> m, i.e. B has no inputs, out is a bijection...'' -- the scope of the "i.e." is not clear; one might read it as saying that being of type 0 -> m implies that B has no inputs AND that out is a bijection AND that every node is associated with a stochastic matrix.\\
	\textbf{Ben:} [DONE] Changed word order in the sentence. Now it should be clearer.
}
\todo[inline]{
	\textbf{R3:} move definition 10 a bit earlier : I found the Causality graphs subsection very dense and hardly motivated, also the substochatic matrix section seems to add very little. I think if you turn things around (and potentially take some of this material out) it would read much better.\\
	\textbf{Ben:} Easier said than done. Def. 10 depends on all previous... Maybe we can leave some things out, but I don't really see anything really unnecessary.
}

\begin{wrapfigure}{l}{0.26\textwidth}
  \centering
  \vspace{-1ex}
  \includegraphics[width=0.18\textwidth]{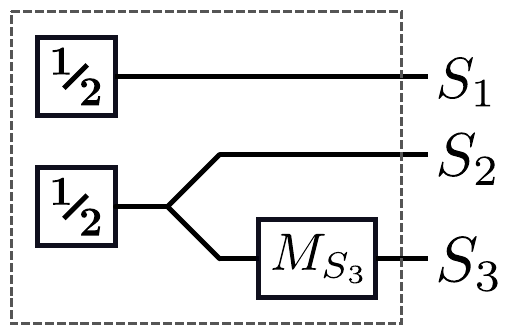}
  \caption{The initial distribution of the CNU from
    Figure~\ref{fig:update-example} as an MBN.}
  \label{fig:mbn-example}
\end{wrapfigure}

In an OBN every node $V$ corresponds to a random variable and it
represents a probability distribution on $\{0,1\}^m$.  OBNs are
exactly the Bayesian networks considered in
\cite{fgg:bayesian-network-classifiers}.\\[2ex]
\begin{minipage}[c]{1.0\linewidth}
\begin{ex}
  Figure~\ref{fig:mbn-example} gives an example of a BN where
  $\sfrac{\mathbf{1}}{\mathbf{2}} = \begin{pmatrix}\sfrac{1}{2}\\
    \sfrac{1}{2}\end{pmatrix}$ and $M_{S_3} = \begin{pmatrix}
    \sfrac{1}{3} & \sfrac{1}{2} \\
    \sfrac{2}{3} & \sfrac{1}{2}
  \end{pmatrix}$. It encodes exactly the probability distribution from
  Figure~\ref{fig:update-example}. Its term representation is
  $(g_1 \otimes (g_2;\nabla));(\id_2 \otimes g_3)$ where
  $e(g_1) = e(g_2) = \sfrac{\mathbf{1}}{\mathbf{2}}$ and
  $e(g_3) = M_{S_3}$.
\end{ex}
\end{minipage}
\medskip



\begin{defi}[MBN semantics]
  Let \((B,e)\) be an MBN where %
  the network $B = (V,\ell,s,\mathit{out})$ is of type
$n\to m$. 
The \emph{MBN semantics} is the matrix $M_e(B)$ with 
\[ \Bigl(M_e(B)\Bigr)(x_1\dots x_m\mid y_1\dots y_n) = \sum_{\substack{b\colon
      W_B\to \{0,1\}\\ b(i_j) = y_j,
      b(\textit{out}(o_i)) = x_i}} \quad \prod_{v\in V} \quad
  e\bigl(\ell(v)\bigr)\Bigl(b(v)\Bigm| b(s(v))\Bigr) \] with
$x_1,\dots,x_m,y_1,\dots,y_m\in\{0,1\}$  where $b$ is
applied pointwise to sequences.
\end{defi}

Intuitively the function $b$ assigns boolean values to wires, in a way
that is consistent with the input/output values ($x_1\dots x_m, y_1\dots y_n$). For each such assignment, the
corresponding entries in the matrices $\ell(v)$ are
multiplied. Finally, we sum over all possible assignments.

\begin{remark}
  The semantics $M_e(B)$ is compositional. %
  It is the canonical (i.e., free) extension of %
  the evaluation map %
  from single nodes to the causality graph of an MBN~\((B,e)\). %
  Here, we rely on two different findings from the literature, %
  namely, %
  the CC-PROP structure of (sub-)stochastic
  matrices~\cite{f:causal-theories} and the characterization of term
  graphs as the %
  free symmetric monoidal category with graph substition
  \cite{cg:term-graphs-gs-monoidal}. %
  \short{The formal details can be found in
    \cite{chhk:update-ce-nets-bayesian-arxiv}.}\full{The formal
    details can be found in the appendix, see
    Lemma~\ref{lem:mapping-m}.} %
\end{remark}




\section{Updating Bayesian Networks}

We have introduced MBNs as a compact and compositional representation
of distributions on markings of a CNU.  Coming back to the scenario of
knowledge update, %
we now describe how success and failure of operations requested by the
observer affect the MBN.  We will first describe how the operations
can be formulated as matrix operations that tell us which nodes have
to be added to the MBN.  We shall see that updated MBNs are in general
not OBNs, which makes it harder to interpret and retrieve the encoded
distribution.  However, we shall show that MBNs can efficiently be
reduced to OBNs.

\smallskip

\noindent\textbf{Notation:} In this section we will use the following
notation: first, we will use variants
$\id_n,\nabla_n,\sigma_{n,m},\top_n$ of the operators/matrices
$\id,\nabla,\sigma,\top$, which have a higher arity (see the
definitions in Table~\ref{tab:axioms-cc-prop}).  Furthermore, we will
write $\prod_{i=1}^k P_i$ for $P_1\cdot \ldots \cdot P_k$ and
$\bigotimes_{i=1}^k P_i$ for $P_1 \otimes \cdots \otimes P_k$. By
$0: 1 \rightarrow 1$ we denote the $2 \times 2$ zero matrix and set
$0_k = \bigotimes_{i=1}^k 0$.  We also introduce $\mathbf{1}_b$ as a
notation for the matrix~$\begin{pmatrix}1\\0\end{pmatrix}$ if $b=1$
(respectively $\begin{pmatrix}0\\1\end{pmatrix}$ if $b=0$).

With $\mathrm{diag}(a_1,\dots,a_n)$ we denote a square matrix with
entries $a_1,\dots,a_n\in[0,1]$ on the diagonal and zero elsewhere.
In particular, we will need the sub-stochastic matrices
$F_{k,b}: k \to k$ where
$F_{k,0} = \mathrm{diag}(\underbrace{1, \dots, 1}_{2^k -1 \text{ times
  }},0)$ and
$F_{k,1} = \mathrm{diag}(0,\underbrace{1, \dots, 1}_{2^k -1 \text{
    times }})$.

Given a bit-vector $\bitvec x \in \{0,1\}^n$, we will write
$\bitvec{x}_{[i]}$ respectively $\bitvec{x}_{[i\dots j]}$ to denote the
$i$-th entry respectively the sub-sequence from position $i$ to
position $j$. If $A\subseteq \{1,\dots,n\}$ we define
$\bitvec{x}_{[A]} = \{\bitvec{x}_{[i]}\mid i\in A\}$.

\paragraph*{CNU Operations on MBNs}

In this section we characterize the operations of
Definition~\ref{def:operations_on_CNU} as stochastic matrices that can
be multiplied with the distribution to perform the update.  We describe
them as compositions of smaller matrices that can easily be
interpreted as changes to an MBN.  In the following lemmas, 
$P: 0 \rightarrow m$ is always  a stochastic matrix representing the
distribution of markings of a CNU.  Furthermore,  $A \subseteq S$ is
a set of places and w.l.o.g.\ we assume that
$A = \{1,\dots,k\}$ for some $k \le m$ (as otherwise we can use
permutations that preceed and follow the operations and switch
wires as needed).

Starting with the $\setop_{A,b}$ operation \eqref{eq:setop} recall
that it is defined in a way so that the marginal distributions of
non-affected places $S \backslash A$ stay the same while the marginals
of every single place in $A$ report $b \in \{0,1\}$ with probability
one.  The following lemma shows how the matrix for a set operation can
be constructed (see Figure~\ref{fig:operations_unpropagated}).

\begin{restatable}{lem}{LemSetMatrix}
  \label{lem:set_matrix}
  It holds that
  $\setop_{A,b}(P) = \left(\bigotimes_{i=1}^m T_{A,b}^\setop(i)\right)
  \cdot P$ where $T_{A,b}^\setop(i)$ is $\mathbf{1}_b \cdot \top$ if
  $i \in A$, and $\id$ otherwise.  Moreover,
  $\bigotimes_{i=1}^m T_{A,b}^\setop(i)$ is stochastic.
\end{restatable}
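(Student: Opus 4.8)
The plan is to prove the identity by evaluating both sides as $2^m$-dimensional column vectors and comparing them entry by entry, where entries are indexed by bit-vectors $\bitvec{x}\in\{0,1\}^m$ which, via $\mathcal{M}\cong\{0,1\}^S$, we identify with markings; recall that as a matrix of type $0\to m$ the distribution $P$ simply satisfies $P(\bitvec{x}\mid\epsilon)=p(\bitvec{x})$.

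First I would spell out $T:=\bigotimes_{i=1}^m T_{A,b}^{\setop}(i)$ as a $2^m\times 2^m$ matrix. By definition of the Kronecker product, $T(\bitvec{x}\mid\bitvec{y})=\prod_{i=1}^m T_{A,b}^{\setop}(i)(x_i\mid y_i)$. For $i\notin A$ the $i$-th factor is $\id(x_i\mid y_i)$, i.e.\ $1$ if $x_i=y_i$ and $0$ otherwise; for $i\in A$ the $i$-th factor is $(\mathbf{1}_b\cdot\top)(x_i\mid y_i)$, which equals $1$ if $x_i=b$ and $0$ otherwise, independently of $y_i$. Hence $T(\bitvec{x}\mid\bitvec{y})=1$ exactly when $\bitvec{x}_{[A]}=\{b\}$ and $\bitvec{x}|_{S\setminus A}=\bitvec{y}|_{S\setminus A}$, and $T(\bitvec{x}\mid\bitvec{y})=0$ otherwise. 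Multiplying by $P$ gives $(T\cdot P)(\bitvec{x}\mid\epsilon)=\sum_{\bitvec{y}\in\{0,1\}^m}T(\bitvec{x}\mid\bitvec{y})\,p(\bitvec{y})$, which is $0$ whenever $\bitvec{x}_{[A]}\neq\{b\}$ and otherwise collapses to $\sum_{\bitvec{y}\colon\bitvec{y}|_{S\setminus A}=\bitvec{x}|_{S\setminus A}}p(\bitvec{y})$. Comparing with~\eqref{eq:setop}, this is precisely $\setop_{A,b}(p)(\bitvec{x})$, which settles the first claim.

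For the stochasticity statement I would observe that each factor $T_{A,b}^{\setop}(i)$ is itself stochastic: $\id$ obviously is, and $\mathbf{1}_b\cdot\top$ has exactly one entry equal to $1$ in each of its two columns, so both column sums are $1$. Since the Kronecker product of stochastic matrices is again stochastic---the sum over $\bitvec{x}$ of $T(\bitvec{x}\mid\bitvec{y})$ factors as $\prod_{i=1}^m\bigl(\sum_{x_i\in\{0,1\}}T_{A,b}^{\setop}(i)(x_i\mid y_i)\bigr)=\prod_{i=1}^m 1=1$ for every column $\bitvec{y}$---the matrix $\bigotimes_{i=1}^m T_{A,b}^{\setop}(i)$ is stochastic. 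I do not expect any genuine obstacle here: the argument is a direct unfolding of the Kronecker-product and matrix-multiplication definitions, and the only places demanding a little care are the identification of matrices of type $0\to m$ with distributions over markings and the bookkeeping of restrictions of bit-vectors to $A$ and to $S\setminus A$.
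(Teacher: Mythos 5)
Your proof is correct, and it takes a somewhat different route from the one in the paper. You compute the full Kronecker product $T=\bigotimes_{i=1}^m T_{A,b}^{\setop}(i)$ entry-wise, characterise exactly when $T(\bitvec{x}\mid\bitvec{y})=1$ (namely $\bitvec{x}_{[A]}=\{b\}$ and agreement with $\bitvec{y}$ off $A$), and match the resulting sum directly against Equation~\eqref{eq:setop}; this is a single direct calculation with no case reduction. The paper instead first proves the singleton case $A=\{1\}$ by writing the matrix explicitly in block form as $\left(\begin{smallmatrix}\id_{m-1} & \id_{m-1}\\ 0_{m-1} & 0_{m-1}\end{smallmatrix}\right)$, and then lifts to general $A$ using the iterative decomposition $\setop_{A,b}=\setop_{\{s_k\},b}\circ\dots\circ\setop_{\{s_1\},b}$ together with the PROP interchange law to rewrite $\bigotimes_{i=1}^m T_{A,b}^\setop(i)$ as a product of singleton update matrices. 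Your version is arguably more self-contained and avoids invoking the compositionality of $\setop$ and the PROP axioms, at the price of slightly heavier index bookkeeping; the paper's version showcases the compositional machinery it is building and keeps each individual step very small. The stochasticity argument (each factor is stochastic, and column sums of a Kronecker product factor) is essentially identical in both.
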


\begin{figure}
  \center
  \includegraphics[width=0.7\textwidth]{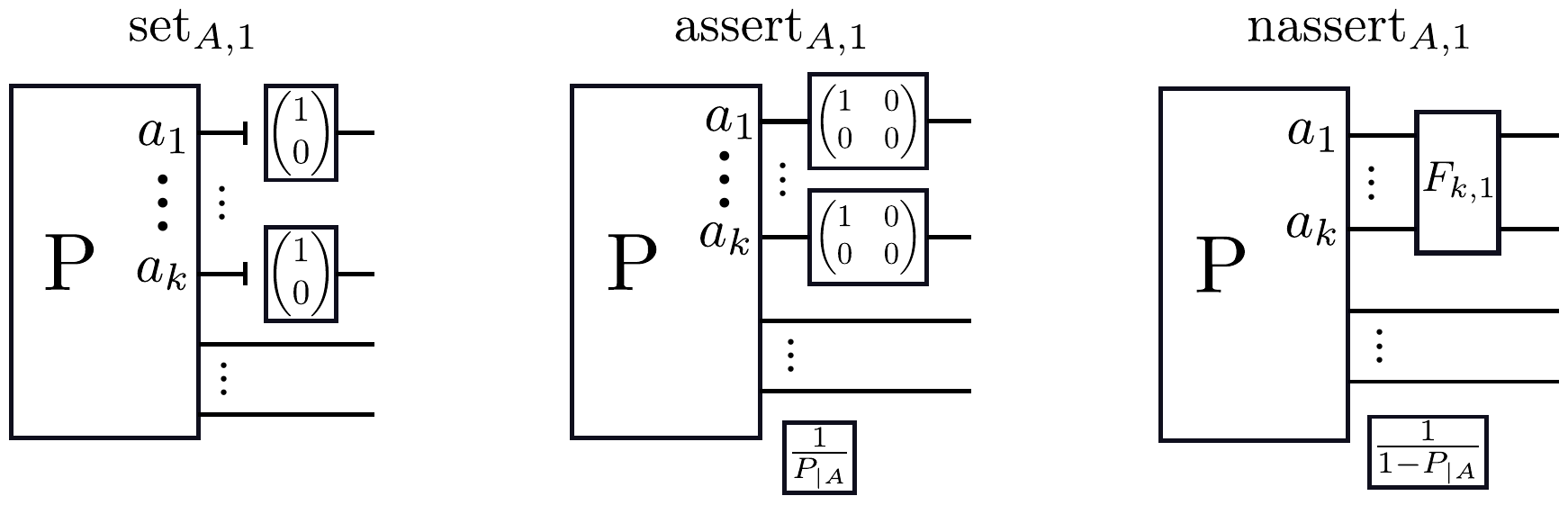}
  \caption{String diagrams of the updated distributions after
    $\mathrm{set}_{A,1}$, $\assertop_{A,1}$,
    $\nassertop_{A,1}$ operations were applied to an initial
    distribution $P$.}
  \label{fig:operations_unpropagated}
\end{figure}

Next, we deal with the $\assertop$ operation.  Applying it to a
distribution $P$ is simply a conditioning of $P$ on non-emptiness of
all places $A$.  Intuitively, this means that we keep only entries of
$P$ for which the condition is satisfied and set all other entries to
zero.  However, in order to keep the updated $P$ a probability
distribution, we have to renormalize, which already shows that
modelling this operation introduces sub-stochastic matrices to the
computation. In the next lemma normalization involves the costly
computation of a marginal $P_{|A}$ (the probability that all places in
$A$ are set to $b$), however omitting the normalization factor will
give us a sub-stochastic matrix and we will later show how such
sub-stochastic matrices can be removed, in many cases avoiding the
full costs of a marginal computation.


\begin{restatable}{lem}{LemAssertMatrix} 
  \label{lem:assert_matrix}
  It holds that
  $\assertop_{A,b}(P) = \frac{1}{P_{|A}}\left(\bigotimes_{i=1}^m
    T_{A,b}^\assertop(i)\right) \cdot P$ with
  $P_{|A} = \left(\bigotimes_{i=1}^m Q_A(i)\right) \cdot P$ where
  $T_{A,b}^\assertop(i)$ is $F_{1,1-b}$ if $i\in A$, and $\id$
  otherwise. We require that $P_{|A}\neq 0$. Furthermore
  $ Q_{A,b}^\assertop(i) =
  \begin{pmatrix}1&0\end{pmatrix}$ if $i \in A$ and $\top$ otherwise.

\end{restatable}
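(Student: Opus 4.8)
The plan is to compute the matrix $\bigl(\bigotimes_{i=1}^m T_{A,b}^{\assertop}(i)\bigr)\cdot P$ entrywise and compare with the definition of $\assertop_{A,b}$ in Equation~\eqref{eq:assertop}, and separately verify the claimed formula for the normalization factor $P_{|A}$. Since $P\colon 0\to m$ is a $2^m\times 1$ matrix (a column vector indexed by $\bitvec x\in\{0,1\}^m$ with $P(\bitvec x\mid\epsilon) = p(m_{\bitvec x})$, identifying markings with bit-vectors), the product with a diagonal-type tensor is easy to handle. Recall $A=\{1,\dots,k\}$, so a bit-vector $\bitvec x\in\{0,1\}^m$ splits as $\bitvec x = \bitvec x_{[A]}\,\bitvec x_{[S\setminus A]}$.

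First I would observe that $T_{A,b}^{\assertop}(i) = F_{1,1-b}$ is the diagonal $2\times2$ matrix that is the identity on the bit $b$ and zero on the bit $1-b$; concretely $F_{1,1-b} = \mathrm{diag}(b,1-b)$ in the ordering $(1,0)$, i.e.\ it keeps the coordinate equal to $b$ and kills the other. Hence $\bigotimes_{i=1}^m T_{A,b}^{\assertop}(i)$ is the diagonal $2^m\times 2^m$ matrix whose $(\bitvec x,\bitvec x)$ entry is $1$ if $\bitvec x_{[i]} = b$ for all $i\in A$ (equivalently $m_{\bitvec x}(A)=\{b\}$) and $0$ otherwise. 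Multiplying this by the column vector $P$ therefore yields the vector whose $\bitvec x$-entry is $p(m_{\bitvec x})$ if $m_{\bitvec x}(A)=\{b\}$ and $0$ otherwise — precisely the unnormalized numerator of \eqref{eq:assertop}. Dividing by the scalar $P_{|A}$ then gives exactly $\assertop_{A,b}(P)$, provided $P_{|A}$ equals the denominator $\sum_{m'\colon m'(A)=\{b\}} p(m')$.

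Next I would verify the formula $P_{|A} = \bigl(\bigotimes_{i=1}^m Q_A(i)\bigr)\cdot P$. By construction $Q_A(i) = \begin{pmatrix}1&0\end{pmatrix}$ for $i\in A$ — the $1\times2$ row vector that projects onto the coordinate $b=1$... here I should be careful: the statement writes $Q_{A,b}^{\assertop}(i)$ but the claim only uses $b=1$ in the row vector, which is correct only when we want to sum the mass where place $i$ carries a token; for the general $b$ case one uses $\begin{pmatrix}1&0\end{pmatrix}$ if $b=1$ and $\begin{pmatrix}0&1\end{pmatrix}$ if $b=0$ (equivalently $\mathbf 1_b^{\mathsf T}$), and $\top = \begin{pmatrix}1&1\end{pmatrix}$ for $i\notin A$, which sums over both values of the unconstrained bit. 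Taking the Kronecker product of these and applying it to $P$ multiplies each entry $p(m_{\bitvec x})$ by $\prod_{i\in A}\mathbf 1_b^{\mathsf T}(\bitvec x_{[i]})\cdot\prod_{i\notin A} 1$, which is $1$ exactly when $m_{\bitvec x}(A)=\{b\}$ and $0$ otherwise, and then sums over all $\bitvec x$. This is the scalar $\sum_{m'\colon m'(A)=\{b\}}p(m')$, as required; the hypothesis $P_{|A}\neq 0$ is exactly the well-definedness condition for $\assertop_{A,b}$ noted after Definition~\ref{def:operations_on_CNU}.

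The only mild subtlety — and the step I expect to need the most care — is the bookkeeping of index orderings and the descending-binary convention for rows/columns, to make sure the tensor factor associated with place $i$ really lands in the $i$-th slot of the bit-vector and that $F_{1,1-b}$ is oriented the right way (it keeps the row/column indexed by $b$, not $1-b$); this is where the subscript "$1-b$" must be checked against the "$1\dots1$ first, $0\dots0$ last" convention. Everything else is a routine unfolding of the Kronecker-product and matrix-multiplication definitions. Note we do not claim stochasticity here — indeed $\bigotimes_i F_{1,1-b}$ is only sub-stochastic — which is consistent with the surrounding discussion that the $\assertop$ operation is where sub-stochastic matrices enter the computation.
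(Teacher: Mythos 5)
Your proposal is correct and follows essentially the same route as the paper's proof: an entrywise computation showing that the Kronecker product of the $F_{1,1-b}$ and $\id$ factors is the diagonal $0/1$ matrix selecting exactly the markings with $m(A)=\{b\}$, and that the row-vector tensor $\bigotimes_i Q_A(i)$ computes the normalizing marginal $\sum_{m'\colon m'(A)=\{b\}}p(m')$; the only organizational difference is that you handle general $A$ in one step, whereas the paper first does the singleton case $A=\{1\}$, $b=1$ explicitly and then iterates via $\assertop_{A,b} = \assertop_{\{s_k\},b}\circ\dots\circ\assertop_{\{s_1\},b}$ together with the corresponding factorization of the tensor into a product of singleton matrices. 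Your side remark that $Q_{A,b}^{\assertop}(i)=\begin{pmatrix}1&0\end{pmatrix}$ is only correct for $b=1$ and should read $\mathbf{1}_b^{\mathsf T}$ in general is a fair catch, consistent with the paper's proof treating only $b=1$ and declaring $b=0$ analogous.
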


In contrast to $\setop$ and $\assertop$, the $\nassertop$ operation
couples all involved places in $A$.  Asserting that at least one place
has no token means that once the observer learns that e.g. one
particular place definitely has a token it affects all the other ones.
Thus for updating the distribution we have to pass the wires of places
$A$ through another matrix that removes the possibility of all places
containing a token and renormalizes.

\begin{restatable}{lem}{LemNassertMatrix}
  \label{lem:nassert_matrix}
  The following characterization holds:
  $\nassertop_{A,1}(P) = \frac{1}{P_{|A}^c}\left(F_{k,1} \otimes
    \id_{m-k}\right) \cdot P$ with $P_{|A}^c = 1 - P_{|A}$ ($P_{|A}$
  is defined as in Lemma~\ref{lem:assert_matrix}). We require that
  $P_{|A}^c\neq 0$.
\end{restatable}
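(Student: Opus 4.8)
The plan is to verify the claimed matrix identity by a direct computation on entries, using the definition of $\nassertop_{A,1}$ from \eqref{eq:nassertop} together with the way the constant matrices $F_{k,1}$, $\id$ and the Kronecker product act. Recall $A = \{1,\dots,k\}$, so a marking $m\in\mathcal{M}$ corresponds to a bit-vector $\bitvec{x} = \bitvec{x}_{[1\dots k]}\bitvec{x}_{[k+1\dots m]}$, and the condition $m(A)\neq\{1\}$ is exactly $\bitvec{x}_{[1\dots k]}\neq 1\cdots 1$. Since $P\colon 0\to m$ is a column vector (a $2^m\times 1$ matrix), the product $\left(F_{k,1}\otimes \id_{m-k}\right)\cdot P$ is again a column vector whose entry at $\bitvec{x}$ is $\sum_{\bitvec{y}} (F_{k,1}\otimes\id_{m-k})(\bitvec{x}\mid\bitvec{y})\,P(\bitvec{y})$.

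The first key step is to unfold $(F_{k,1}\otimes\id_{m-k})(\bitvec{x}\mid\bitvec{y})$ using the Kronecker product formula: it equals $F_{k,1}(\bitvec{x}_{[1\dots k]}\mid\bitvec{y}_{[1\dots k]})\cdot \id_{m-k}(\bitvec{x}_{[k+1\dots m]}\mid\bitvec{y}_{[k+1\dots m]})$. The identity factor forces $\bitvec{x}_{[k+1\dots m]} = \bitvec{y}_{[k+1\dots m]}$, and since $F_{k,1} = \mathrm{diag}(0,1,\dots,1)$ the first factor forces $\bitvec{x}_{[1\dots k]} = \bitvec{y}_{[1\dots k]}$ as well, contributing a coefficient of $1$ if $\bitvec{x}_{[1\dots k]}\neq 1\cdots 1$ and $0$ if $\bitvec{x}_{[1\dots k]} = 1\cdots 1$ (because the top-left diagonal entry of $F_{k,1}$, indexed by $1\cdots 1$, is $0$). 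Hence the only surviving term in the sum is $\bitvec{y} = \bitvec{x}$, and $\left(F_{k,1}\otimes\id_{m-k}\right)\cdot P$ has entry $P(\bitvec{x})$ when $\bitvec{x}_{[A]}\neq\{1\}$ and $0$ otherwise — i.e., it is $P$ with all entries corresponding to $m(A) = \{1\}$ zeroed out.

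The second step is to identify the normalization constant. By definition $P_{|A} = \prob(m(A) = \{1\}) = \sum_{m'\colon m'(A)=\{1\}} P(m')$ (this is exactly what Lemma~\ref{lem:assert_matrix} establishes, via $\bigotimes_i Q_A(i)$), so $P_{|A}^c = 1 - P_{|A} = \sum_{m'\colon m'(A)\neq\{1\}} P(m')$, which is precisely the denominator appearing in \eqref{eq:nassertop} for $b=1$. Dividing the zeroed-out vector by $P_{|A}^c$ therefore reproduces $\nassertop_{A,1}(P)(m)$ entrywise: $\frac{P(m)}{\sum_{m'\colon m'(A)\neq\{1\}}P(m')}$ when $m(A)\neq\{1\}$, and $0$ otherwise. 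The requirement $P_{|A}^c\neq 0$ is exactly the well-definedness condition for the operation (the denominator being positive), matching the partiality remark after Definition~\ref{def:operations_on_CNU}. This completes the verification.

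I do not expect any genuine obstacle here; the proof is a routine entry-chase. The only point requiring a little care is making the indexing conventions line up — specifically, that the rows/columns of $F_{k,1}$ are ordered by descending binary value so that the single zero on the diagonal sits in the position indexed by the all-ones vector $1\cdots 1$, which is the marking with $m(A)=\{1\}$ — and that the Kronecker product splits the index exactly at the boundary between $A$ and $S\setminus A$, which is legitimate because we assumed w.l.o.g.\ $A = \{1,\dots,k\}$. One could optionally phrase the argument more slickly by noting that $F_{k,1}\otimes\id_{m-k}$ is itself a diagonal $0/1$ matrix — the projector onto the coordinates with $\bitvec{x}_{[A]}\neq\{1\}$ — but the entrywise computation is transparent enough that this reformulation is not needed.
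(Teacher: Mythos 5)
Your proposal is correct and follows essentially the same route as the paper's proof: observe that $F_{k,1}\otimes\id_{m-k}$ acts as the diagonal $0/1$ projector zeroing out exactly the entries with $\bitvec{x}_{[A]}=\{1\}$, then identify $P_{|A}^c = 1-P_{|A}$ with the denominator in the definition of $\nassertop_{A,1}$. Your version just spells out the Kronecker-product entry-chase more explicitly than the paper does; no gaps.
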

\todo{
	\textbf{R3:} text after Lemma 15 is hard to read, please structure (and format) better\\
	\textbf{Ben:} I don't see how to do that without giving symbols to the 2x2 matrices...
}
An analogous result holds for $\nassertop_{A,0}$ by using $F_{k,0}$.

The previous lemmas determine how to update an MBN
$(B,e)$ to incorporate the changes to the encoded distribution
stemming from the operations on the CNU.  We denote the updated MBN by
$(B',e')$ with $B'=(V',\ell',s',\out')$.

For the $\setop_{A,b}$ operation Lemma~\ref{lem:set_matrix} shows that
we have to add a new node $v_s$ and a new generator $g_s$ for each
$s \in A$.  We set $\ell(v_s) = g_s$ and
$e'(g_s) = \mathbf{1}_b \cdot \top
= \begin{pmatrix}1&1\\0&0\end{pmatrix}$, $s'(v_s) = \out(o_s)$ and
$\out'(o_s) = v_s$.  Similarly, this holds for the $\assertop$
operation with the only difference that the associated matrix for each
$v_s$ is $\begin{pmatrix}1 & 0\\0&0\end{pmatrix}$ 
(cf. Figure~\ref{fig:operations_unpropagated}).

For the $\nassertop_{A,b}$ operation Lemma~\ref{lem:nassert_matrix}
defines a usually larger matrix $F_{k,b}: k \to k$ that intuitively
couples the random variables for all places in $A$.  We cannot simply
add a node to the MBN which evaluates to $F_{k,b}$ since nodes in the
MBN always have to be of type $n \to 1$.  However, one can show (see
Lemma~\ref{lem:represent-matrix}) that for each
$F_{k,b}$-matrix, there exists an MBN $(B', e')$ such that
$M_{e'}(B')$.  This can then be appended to $(B,e)$ which has the same
affect as appending a single node with the $F_{k,b}$-matrix.

\paragraph*{Simplifying MBNs to OBNs}

The characterisations of operations above ensure that updated MBNs
correctly evaluate to the updated probability distributions.  However,
rather than OBNs we obtain MBNs where the complexity of updates is
hidden in newly added nodes.  Evaluating such MBNs is computationally
more expensive because of the additional nodes. Below we show how to
simplify the MBN, minimising the number of nodes either after each
update or (in a lazy mode) after several updates.

As a first step we provide a lemma that will feature in all following
simplifications.  It states that every matrix can be
expressed by the composition of two matrices.
%
%

\begin{restatable}[Decomposition of matrices]{lem}{LemDecompMatrix}
  \label{lem:decomp-matrix}
  Given a matrix $P$ of type $n\to m$ and a set of $k < m$ outputs --
  without loss of generality we pick $\{ m-k+1, \dots, m \}$ -- there
  exist two matrices $P^\vdash: n \rightarrow m-k$ and
  $P^\dashv: n+m-k \rightarrow k$ such that
  \begin{equation}
    \label{eq:arc_reversal_condition}
    (\id_{m-k} \otimes P^\dashv) \cdot ((\nabla_{m-k} \cdot P^\vdash) 
    \otimes \id_n) \cdot \nabla_n = P,
  \end{equation}
  which is visualized in Figure~\ref{fig:arc_reversal}.  Moreover, the
  matrices can be chosen so that $P^\dashv$ is stochastic and
  $P^\vdash$ sub-stochastic.  If $P$ is stochastic $P^\vdash$ can be
  chosen to be stochastic as well.
\end{restatable}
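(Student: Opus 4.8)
The plan is to read Equation~\eqref{eq:arc_reversal_condition} as a factorisation of the joint conditional distribution $P(\bitvec{x}_1\bitvec{x}_2\mid\bitvec{y})$, where $\bitvec{x}_1\in\{0,1\}^{m-k}$ is the block of retained outputs and $\bitvec{x}_2\in\{0,1\}^{k}$ the block of the last $k$ outputs, into a marginal on $\bitvec{x}_1$ times a conditional on $\bitvec{x}_2$ given $\bitvec{x}_1$ and $\bitvec{y}$. Concretely, I would first unfold the left-hand side of \eqref{eq:arc_reversal_condition} using the definitions of the constants: $\nabla_n$ copies the $n$ input wires, $(\nabla_{m-k}\cdot P^\vdash)\otimes\id_n$ sends one copy of the inputs through $P^\vdash$ (producing $m-k$ wires) and then duplicates those $m-k$ wires, while the second copy of the $n$ input wires is kept; finally $(\id_{m-k}\otimes P^\dashv)$ passes one copy of the $m-k$ wires straight to the output and feeds the other copy together with the remaining $n$ input wires into $P^\dashv$. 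Carrying out the Kronecker-product and matrix-multiplication bookkeeping, the left-hand side evaluates at index $(\bitvec{x}_1\bitvec{x}_2\mid\bitvec{y})$ to $P^\vdash(\bitvec{x}_1\mid\bitvec{y})\cdot P^\dashv(\bitvec{x}_2\mid\bitvec{x}_1\bitvec{y})$.

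Given this reduction, the lemma becomes: define $P^\vdash$ and $P^\dashv$ so that $P^\vdash(\bitvec{x}_1\mid\bitvec{y})\cdot P^\dashv(\bitvec{x}_2\mid\bitvec{x}_1\bitvec{y}) = P(\bitvec{x}_1\bitvec{x}_2\mid\bitvec{y})$ for all indices. The natural choice is the standard chain-rule split: set $P^\vdash(\bitvec{x}_1\mid\bitvec{y}) := \sum_{\bitvec{x}_2\in\{0,1\}^{k}} P(\bitvec{x}_1\bitvec{x}_2\mid\bitvec{y})$, the marginal over the last $k$ coordinates, and set
\[
  P^\dashv(\bitvec{x}_2\mid\bitvec{x}_1\bitvec{y}) :=
  \begin{cases}
    \dfrac{P(\bitvec{x}_1\bitvec{x}_2\mid\bitvec{y})}{P^\vdash(\bitvec{x}_1\mid\bitvec{y})} & \text{if } P^\vdash(\bitvec{x}_1\mid\bitvec{y})\neq 0,\\[1.5ex]
    \mathbf{1}_{\bitvec{x}_2 = 0\cdots 0} & \text{otherwise.}
  \end{cases}
\]
Then I would check the three claimed properties. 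First, $P^\vdash(\bitvec{x}_1\mid\bitvec{y})\cdot P^\dashv(\bitvec{x}_2\mid\bitvec{x}_1\bitvec{y})$ equals $P(\bitvec{x}_1\bitvec{x}_2\mid\bitvec{y})$: in the nonzero case the factor cancels, and in the zero case both $P^\vdash(\bitvec{x}_1\mid\bitvec{y})$ and every $P(\bitvec{x}_1\bitvec{x}_2\mid\bitvec{y})$ vanish (since the latter are nonnegative and sum to the former), so both sides are $0$. Second, $P^\dashv$ is stochastic: for each fixed $(\bitvec{x}_1,\bitvec{y})$ the column sums to $\sum_{\bitvec{x}_2} P(\bitvec{x}_1\bitvec{x}_2\mid\bitvec{y}) / P^\vdash(\bitvec{x}_1\mid\bitvec{y}) = 1$ in the nonzero case, and to $1$ by construction in the degenerate case. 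Third, $P^\vdash$ is sub-stochastic because for each $\bitvec{y}$, $\sum_{\bitvec{x}_1} P^\vdash(\bitvec{x}_1\mid\bitvec{y}) = \sum_{\bitvec{x}_1,\bitvec{x}_2} P(\bitvec{x}_1\bitvec{x}_2\mid\bitvec{y}) \le 1$, with equality exactly when $P$ is stochastic, which gives the final sentence of the lemma.

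The only genuinely non-routine part is the first step: faithfully tracking how $\nabla_n$, $\nabla_{m-k}$, $P^\vdash$, $P^\dashv$, and the identity wires compose through Kronecker products to show the left-hand side of \eqref{eq:arc_reversal_condition} really is the entrywise product $P^\vdash(\bitvec{x}_1\mid\bitvec{y})\,P^\dashv(\bitvec{x}_2\mid\bitvec{x}_1\bitvec{y})$. I would do this either by a direct index computation using the explicit matrix entries of $\nabla$ and the Kronecker-product formula (each $\nabla$ contributes a Kronecker-delta factor that forces its two output copies to agree), or more cleanly by appealing to the comonoid axioms of Table~\ref{tab:axioms-cc-prop} together with the fact (Proposition~\ref{prop:stoch-mat-cc-prop}) that sub-stochastic matrices form a CC-structured PROP; the string-diagram picture in Figure~\ref{fig:arc_reversal} makes the intended equality visually evident, and translating it to the index identity above is then just careful bookkeeping. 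Everything after that is the elementary chain-rule verification sketched above.
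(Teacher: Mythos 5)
Your proposal is correct and follows essentially the same route as the paper's proof: the same chain-rule definitions of $P^\vdash$ as the marginal over the last $k$ outputs and $P^\dashv$ as the normalized conditional (with an arbitrary stochastic column in the degenerate case), followed by the same index computation unfolding the Kronecker products and the delta constraints from $\nabla$ to show the left-hand side of \eqref{eq:arc_reversal_condition} reduces to $P^\vdash(\bitvec{x}_1\mid\bitvec{y})\cdot P^\dashv(\bitvec{x}_2\mid\bitvec{x}_1\bitvec{y})$. The stochasticity checks and the handling of the zero-marginal case also match the paper.
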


We can now deduce the known special
case of arc reversal in OBN, stated e.g. in \cite{cb:arc-reversal}.

\begin{wrapfigure}{r}{0.45\textwidth}
	 \center
	 \vspace{-3ex}
	 \includegraphics[width=0.44\textwidth]{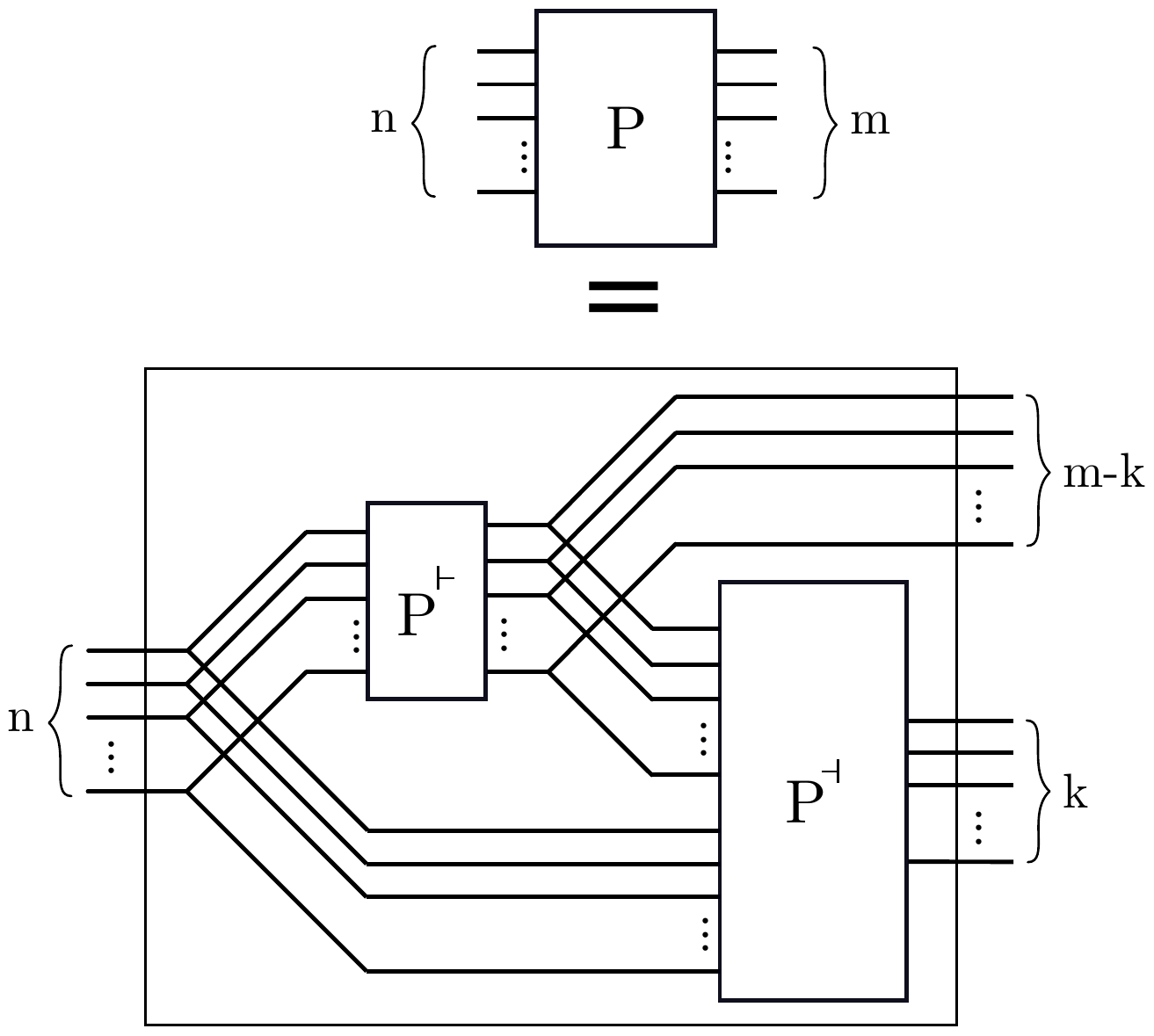}
	 \vspace{-5pt}  
	 \caption{Schematic string diagram depiction of the decomposition of matrices.}
	 \label{fig:arc_reversal}
\end{wrapfigure}
\begin{restatable}[Arc reversal in OBNs]{cor}{CorArcReversal}
  \label{cor:arc-reversal}
  Let $(B,e)$ be an OBN with $B = (V,\ell,s,\out)$ and two nodes
  $u, y \in V$, where $u$ is a direct predecessor of $y$, i.e.
  $u \in \mathrm{pred}(y)$.  Then there exists an OBN $(B', e')$ with
  $B' = (V,\ell',s',\out)$, evaluating to the same probability
  distribution, where $\ell'(v) = \ell(v)$, $s'(v) = s(v)$ if
  $v\neq u$ and $v\neq y$ and $y \in \mathrm{pred}(u)$.  Thus the
  dependency between $u$ and $v$ is reversed.
\end{restatable}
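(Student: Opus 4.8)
The plan is to derive the arc-reversal statement as a direct instance of Lemma~\ref{lem:decomp-matrix} combined with the decompositionality of causality graphs (Lemma~\ref{lem:mbn_splitting}). First I would isolate the relevant fragment of the OBN: since $u \in \mathrm{pred}(y)$, the two-node set $\{u,y\}$ together with $\mathrm{path}(u,y)$ need not be closed, so I instead consider the node set $V' = \mathrm{path}(u,y)$ (which is path-closed) and, if it is strictly larger than $\{u,y\}$, first use arc reversals (or just observe that the intermediate nodes cause no obstruction, so we may as well assume $V' = \{u,y\}$ — i.e. $u$ is the \emph{only} predecessor-path to $y$ passing through $u$; the general case follows by the same construction applied to the induced subnet). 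By Lemma~\ref{lem:mbn_splitting} we may write $B = B_1;(\id_k \otimes B_2);B_3$ where $B_2$ is the sub-CG on $\{u,y\}$. The sub-CG $B_2$ evaluates, via the MBN semantics, to a matrix $P$ of type $n' \to 2$ capturing the joint conditional distribution of the pair $(u,y)$ given the incoming wires; because $u \in \mathrm{pred}(y)$, this matrix factors as the node $u$ feeding into node $y$.

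Next I would apply Lemma~\ref{lem:decomp-matrix} to this matrix $P$, but with the roles of the two output wires swapped: picking the $u$-output as the ``$k=1$ reversed'' output, the lemma produces matrices $P^\vdash$ (of type $n' \to 1$, stochastic since $P$ is stochastic — here I use the last sentence of Lemma~\ref{lem:decomp-matrix}) and $P^\dashv$ (of type $n'+1 \to 1$, stochastic) such that $(\id_1 \otimes P^\dashv)\cdot((\nabla \cdot P^\vdash)\otimes \id_{n'})\cdot \nabla_{n'} = P$. Reading this decomposition as a causality graph: $P^\vdash$ becomes the new generator $g_y'$ for node $y$ (so $y$ now depends only on the original inputs of the fragment, i.e. on $\mathrm{pred}(u)\setminus\{u\}$ together with $\mathrm{pred}(y)\setminus\{u\}$, but crucially no longer on $u$), and $P^\dashv$ becomes the new generator $g_u'$ for node $u$, which now takes an extra input wire fed from $y$ — this realises $y \in \mathrm{pred}'(u)$. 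The comonoid structure ($\nabla$'s) is exactly what lets the shared inputs be duplicated to both nodes, and it stays inside the fragment, so after re-substituting into the context $B_1, B_3$ via Lemma~\ref{lem:mbn_splitting} the node set $V$ is unchanged, the labels and sources of all other nodes are unchanged, and $\out' = \out$.

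Finally, correctness: since $M_e$ is compositional (the remark after the MBN-semantics definition), replacing $B_2$ by a CG that evaluates to the same matrix $P$ leaves $M_e(B) = M_{e'}(B')$ unchanged; and $B'$ is again an OBN because it has no inputs, $\out'$ is still the same bijection, and both new generators $g_u', g_y'$ are stochastic matrices. The main obstacle I anticipate is the bookkeeping in the first paragraph: Lemma~\ref{lem:mbn_splitting} requires a \emph{path-closed} node set, whereas $\{u,y\}$ need not be path-closed if there are other $u$-to-$y$ paths. I would handle this either by inducting on the number of intermediate nodes (reversing them first, or absorbing them), or — cleaner — by taking $V' = \mathrm{path}(u,y)$, performing the decomposition on the whole induced subnet so that $P$ is the joint matrix of all of $V'$, and reversing only the $u$-output against the rest; the wire-permutation conventions ($A = \{1,\dots,k\}$ w.l.o.g.\ via symmetries) then need to be spelled out but are routine.
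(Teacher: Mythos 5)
Your core construction is the same as the paper's: isolate the two-node fragment via Lemma~\ref{lem:mbn_splitting}, evaluate it to a stochastic matrix $P$ of type $m+n\to 2$, apply Lemma~\ref{lem:decomp-matrix} with $k=1$ so that $P^\vdash$ becomes the new (stochastic) matrix for $y$ and $P^\dashv$ the new matrix for $u$ with an extra input wire from $y$, and conclude by compositionality of $M_e$. That part is correct and matches the paper.

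Where your proposal goes astray is the handling of the case where $\{u,y\}$ is not path-closed. The paper simply asserts that $\{u,y\}$ is closed with respect to paths (i.e.\ implicitly assumes there is no other directed path $u\to\cdots\to y$), and under that assumption the proof goes through. Your proposed workarounds do not: if there is an intermediate node $z$ with $u\to z\to y$, then reversing the arc $u\to y$ while leaving $z$ untouched creates the directed cycle $u\to z\to y\to u$, so no causality graph $B'$ with the properties claimed in the corollary (only $u$ and $y$ modified, same node set, acyclic) can exist; and your alternative of decomposing the whole induced subnet on $V'=\mathrm{path}(u,y)$ via Lemma~\ref{lem:decomp-matrix}/Lemma~\ref{lem:represent-matrix} necessarily rewrites the labels and sources of the intermediate nodes, contradicting the locality clause $\ell'(v)=\ell(v)$, $s'(v)=s(v)$ for $v\notin\{u,y\}$. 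So the right resolution is not to generalize the construction but to recognize $\mathrm{path}(u,y)=\{u,y\}$ as a (tacit) hypothesis of the corollary --- the standard side condition for arc reversal. With that hypothesis in place, your argument coincides with the paper's and is correct; one further small point is that your phrase ``picking the $u$-output as the reversed output'' needs the observation, made explicit in the paper's term $t_{uy}=(\id_m\otimes(B_{g_u};\nabla));(B_{g_y}\otimes\id)$, that the $u$-output is the \emph{last} output of the fragment, which is exactly the position Lemma~\ref{lem:decomp-matrix} splits off.
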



Arc reversal comes with a price: as can be seen in the proof, if $u$
is associated with a matrix $P_u\colon n\to 1$ and $y$ with a matrix
$P_y\colon m+1\to 1$, then we have to create new matrices
$P'_u\colon m+n+1\to 1$ and $P'_y\colon m+n\to 1$, causing new
dependencies and increasing the size of the matrix. Hence arc reversal
should be used sparingly.

After arc reversal a node might have duplicated inputs, which can be
resolved by multiplying the corresponding matrix with $\nabla$, thus
reducing the dimension.

Next, we can use Lemma~\ref{lem:decomp-matrix} to show that every
matrix can be represented as an MBN.  This MBN can always be built in a
``minimal'' way in that only $m$ nodes are needed to represent a
$n \to m$ matrix.

\begin{restatable}{lem}{LemRepresentMatrix}
  \label{lem:represent-matrix}
  Let $M: n \rightarrow m$ be a (sub-stochastic) matrix.  Then there
  exists an MBN $(B,e)$ with $B=(V,l,s,\mathrm{out})$ such that
  $M = M_{e}(B)$, $|V| = m$ and $\mathrm{out}$ is a bijection.
  Moreover, if $M$ is stochastic we can guarantee that $e(l(v))$ is
  stochastic for all $v \in V$.  If $M$ is sub-stochastic we can
  guarantee that $v_\mathit{front}$ -- the first node in a topological
  ordering of all nodes $V'$ -- is the only node where $e(l(v))$ is
  sub-stochastic, all other nodes have stochastic matrices.
\end{restatable}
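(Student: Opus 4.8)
The plan is to proceed by induction on $m$, the number of outputs, using the decomposition of matrices (Lemma~\ref{lem:decomp-matrix}) to peel off one output at a time. For the base case $m = 0$, a matrix $M\colon n \to 0$ is a $1 \times 2^n$ matrix, and the empty MBN $(\mathrm{id}_0$-style graph with no nodes and $\mathrm{out} = [\,]$) realizes it only if $M = \top_n$; since the lemma as stated should be read for $m \ge 1$ (or one restricts to $M$ of the relevant shape), I would instead take $m = 1$ as the base case: a matrix $M\colon n \to 1$ is a $2 \times 2^n$ matrix, which is exactly the data of an evaluation function on a single generator $g\colon n \to 1$. The MBN $B_g$ with $V = \{v\}$, $\ell(v) = g$, $s(v) = i_1 \dots i_n$, $\mathrm{out}(o_1) = v$ and $e(g) = M$ then satisfies $M_e(B_g) = M$ by unfolding the MBN semantics, $|V| = 1 = m$, and $\mathrm{out}$ is a bijection. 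If $M$ is stochastic the single matrix is stochastic; if $M$ is only sub-stochastic the single (front) node carries the sub-stochastic matrix, matching the claim.

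For the inductive step, suppose the result holds for $m - 1$ and let $M\colon n \to m$. Apply Lemma~\ref{lem:decomp-matrix} with $k = 1$ (splitting off the last output) to obtain $P^\vdash\colon n \to m-1$ and $P^\dashv\colon n + m - 1 \to 1$ with $M = (\id_{m-1} \otimes P^\dashv) \cdot ((\nabla_{m-1} \cdot P^\vdash) \otimes \id_n) \cdot \nabla_n$, where $P^\dashv$ is stochastic and $P^\vdash$ is sub-stochastic (stochastic if $M$ is). By induction there is an MBN $(B^\vdash, e^\vdash)$ with $m - 1$ nodes, $\mathrm{out}$ a bijection, realizing $P^\vdash$, and with all matrices stochastic except possibly the front node (which is stochastic when $M$, hence $P^\vdash$, is). By the base case there is a single-node MBN realizing the stochastic matrix $P^\dashv$. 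Now assemble the composite MBN directly from the string-diagram expression above: take the disjoint union of the node sets of $B^\vdash$ and the $P^\dashv$-node, and wire them according to $(\id_{m-1} \otimes P^\dashv) \cdot ((\nabla_{m-1} \cdot P^\vdash) \otimes \id_n) \cdot \nabla_n$ — concretely, the $P^\dashv$-node takes as its $m-1$ first source wires copies (via $\nabla_{m-1}$) of the outputs of $B^\vdash$ and as its remaining $n$ source wires the $n$ input ports (shared via $\nabla_n$ with the inputs of $B^\vdash$), and the new output port is the $P^\dashv$-node while the first $m-1$ output ports remain those of $B^\vdash$. Since the $\nabla$'s and $\id$'s contribute no nodes, the resulting graph $B$ has exactly $(m-1) + 1 = m$ nodes, $\mathrm{out}$ is a bijection (the $m-1$ old outputs plus the fresh node), and compositionality of the MBN semantics (the Remark after Definition of MBN semantics, resting on Propositions~\ref{prop:stoch-mat-cc-prop} and~\ref{prop:free-cc-prop}) gives $M_e(B) = M$. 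The only potentially sub-stochastic matrix is the one inherited from the front node of $(B^\vdash, e^\vdash)$, which is also the front node of $B$ in a topological order since everything in $B^\vdash$ precedes the new $P^\dashv$-node; and it is stochastic whenever $M$ is.

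The main obstacle I anticipate is bookkeeping rather than mathematical depth: one must check that the graph assembled from the decomposition expression genuinely has only $m$ nodes (i.e., that $\nabla_{m-1}$, $\nabla_n$, and the identity wires introduce no nodes, which is immediate from the definition of the constant CGs but must be invoked explicitly), and that ``front node'' is well-defined and preserved — i.e., that prepending the shared-input structure and appending the $P^\dashv$-node does not disturb which node is minimal in a topological ordering. This requires a small argument that all original $B^\vdash$-nodes are still $\le$-below the new node (true since the new node's sources are the $B^\vdash$-outputs and the inputs), so the front node of $B^\vdash$ remains a front node of $B$. A secondary subtlety is that Lemma~\ref{lem:decomp-matrix} only guarantees $P^\vdash$ stochastic when $M$ is stochastic; in the sub-stochastic case we must ensure the inductive hypothesis is applied in its sub-stochastic form and that the single sub-stochastic matrix does not proliferate — which it does not, because $P^\dashv$ is always stochastic and the base-case node for it is therefore stochastic.
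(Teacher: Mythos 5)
Your proposal is correct and follows essentially the same route as the paper's proof: repeated application of Lemma~\ref{lem:decomp-matrix} with $k=1$ to split off one output at a time, yielding a chain of $m$ nodes with $\mathrm{out}$ a bijection, where all the peeled-off $P^\dashv$-factors are stochastic and only the remaining front factor (stochastic iff $M$ is) can be sub-stochastic. The only difference is presentational — you organise the iteration as an induction on $m$ and are somewhat more explicit about the wiring of the new node to all $m-1$ previous outputs and the $n$ inputs, which matches the type $n+m-1\to 1$ of $P^\dashv$.
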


\begin{cor}
  \label{cor:represent-matrix-stoch}
  Let $(B,e)$ be an MBN without inputs and assume that $M_e(B)$ is
  stochastic.  Then there exists an OBN $(B',e')$ such that
  $M_e(B) = M_{e'}(B')$.
\end{cor}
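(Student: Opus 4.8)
The plan is to derive Corollary~\ref{cor:represent-matrix-stoch} directly from Lemma~\ref{lem:represent-matrix} together with Corollary~\ref{cor:arc-reversal}. Since $(B,e)$ is an MBN without inputs, its semantics $M_e(B)$ is a matrix of type $0\to m$, and by hypothesis it is stochastic. First I would apply Lemma~\ref{lem:represent-matrix} to the matrix $M := M_e(B)$: because $M$ is stochastic, the lemma yields an MBN $(B',e')$ with $B' = (V,l,s,\mathrm{out})$ such that $M = M_{e'}(B')$, with $|V| = m$, $\mathrm{out}$ a bijection, and every node carrying a stochastic matrix. Since $M$ has type $0\to m$, the causality graph $B'$ has no inputs. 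This already gives us all the defining conditions of an OBN except possibly acyclicity of the underlying graph, which holds by the definition of causality graph, and the requirement that nodes be of type $n\to 1$, which holds by construction in Lemma~\ref{lem:represent-matrix}. Hence $(B',e')$ is an OBN, and $M_e(B) = M = M_{e'}(B')$, as required.

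In slightly more detail, I would spell out why the MBN produced by Lemma~\ref{lem:represent-matrix} meets the OBN definition (Definition~\ref{def:operations_on_CNU}'s companion, the MBN definition): $B'$ has no inputs because $n = 0$; $\mathrm{out}$ is a bijection by the explicit claim in Lemma~\ref{lem:represent-matrix}; every generator $l(v)$ is associated with a stochastic matrix, again by the ``moreover'' clause of that lemma, using that $M$ is stochastic. Therefore no further transformation (no arc reversal, no normalization) is actually needed, and the corollary follows immediately.

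The only subtlety worth addressing is that the construction in Lemma~\ref{lem:represent-matrix} is stated for arbitrary $n\to m$ matrices, so one should simply note that the case $n=0$ is covered and that in this case ``MBN without inputs plus $\mathrm{out}$ bijective plus all matrices stochastic'' is precisely the definition of an OBN. There is no genuine obstacle here: the content of the corollary is entirely carried by Lemma~\ref{lem:represent-matrix}, and the proof is a one-line specialization. If one wanted a self-contained argument not invoking Lemma~\ref{lem:represent-matrix} in full generality, the fallback would be to take the given MBN $(B,e)$, repeatedly apply the decomposition of Lemma~\ref{lem:decomp-matrix} to peel off one output wire at a time, each time obtaining a stochastic factor (the stochasticity of the peeled-off factor being guaranteed because the ambient matrix stays stochastic), and read off an OBN node for each output; but since Lemma~\ref{lem:represent-matrix} already packages exactly this, invoking it is cleanest.

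\begin{proof}
  Since $(B,e)$ has no inputs, $M_e(B)$ is a matrix of type $0\to m$,
  and by assumption it is stochastic. Apply
  Lemma~\ref{lem:represent-matrix} to $M := M_e(B)$: as $M$ is
  stochastic, we obtain an MBN $(B',e')$ with $B' = (V,l,s,\out)$ such
  that $M_{e'}(B') = M = M_e(B)$, with $|V| = m$, with $\out$ a
  bijection, and with $e'(l(v))$ stochastic for every $v\in V$. Since
  $M$ has type $0\to m$, the causality graph $B'$ has no inputs, and
  by definition of a causality graph every node has type $n\to 1$ and
  the underlying directed graph is acyclic. Thus $(B',e')$ satisfies
  all the requirements of an OBN, and $M_e(B) = M_{e'}(B')$.
\end{proof}
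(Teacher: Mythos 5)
Your proof is correct and takes essentially the same route as the paper: the paper's own proof notes that the result ``follows directly from Lemma~\ref{lem:represent-matrix}'' (alongside a one-line appeal to the well-known fact that every distribution is representable by an ordinary Bayesian network), which is exactly the specialization you carry out, just with the OBN conditions checked more explicitly.
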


\begin{proof}
  The result follows trivially from the assumptions because for a
  stochastic MBN without input ports $M_e(B)$ is simply a column
  vector holding a probability distribution.  It is well known that
  every probability distribution can be represented by some (ordinary)
  Bayesian net.  Alternatively the result follows directly from
  Lemma~\ref{lem:represent-matrix}.
\end{proof}

We just argued that every MBN can be simplified so that it does not
contain any unnecessary nodes and at most one sub-stochastic matrix.
However, while Lemma~\ref{lem:represent-matrix} shows that these
simplifications are always possible it is not helpful in practice: in
fact in the proof we take the full matrix represented by an MBN and
then split it into (coupled) single nodes.  Since we chose to use MBNs
in order not to deal with large distribution vectors in the first
place, this approach is not practical.  Instead, in the following we
will describe methods which allow us to simplify an MBN without
computing the matrix first.

First note that MBNs stemming from CNU operations can contain
substructures that can locally be replaced by simpler ones.  They are
depicted in Figure~\ref{fig:simplifications}.
\begin{figure}
  \center
  \includegraphics[width=\textwidth]{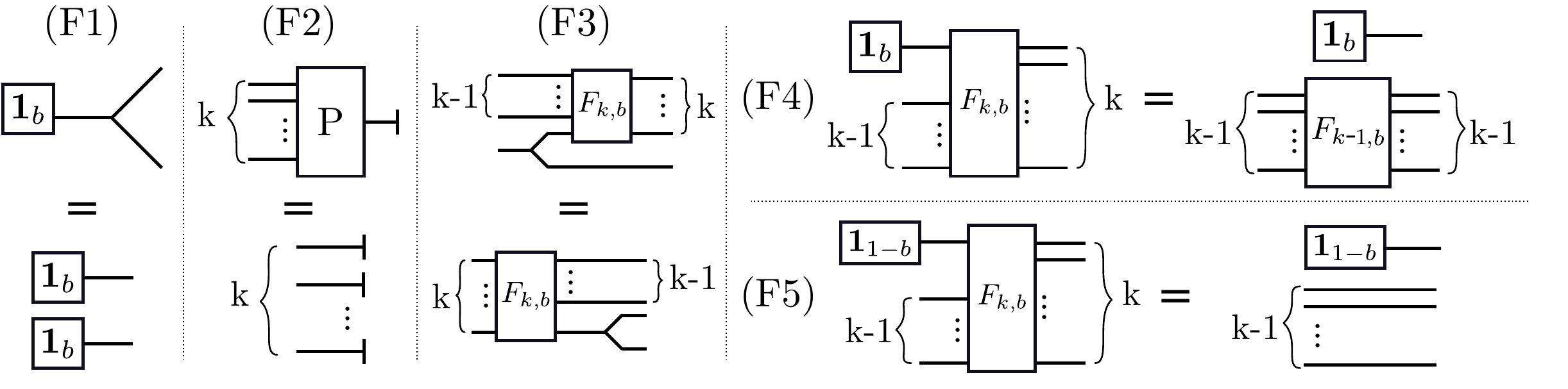}
  \caption{Equalities on sub-stochastic matrices. Note that (F2) holds
    only if $P$ is stochastic and for (F4) and (F5) we have to assume
    $k>1$.}
  \label{fig:simplifications}
\end{figure}

\begin{restatable}{lem}{LemEqualities}
  \label{lem:equalities}
  The equalities of Figure~\ref{fig:simplifications} hold for
  (sub-)stochastic matrices.
\end{restatable}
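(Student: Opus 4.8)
The statement to prove is Lemma~\ref{lem:equalities}: the equalities in Figure~\ref{fig:simplifications} hold for (sub-)stochastic matrices. Since these are claimed equalities of string diagrams in the CC-structured PROP of (sub-)stochastic matrices, the plan is to verify each one (F1)--(F5) by translating both sides into their underlying matrices via the semantics of Section~\ref{sec:MBN-substoch} (composition as matrix multiplication, tensor as Kronecker product, and the explicit constant matrices $\id, \nabla, \sigma, \top$, together with the diagonal matrices $F_{k,b}$), and checking they agree entry-by-entry using the index notation $P(\bitvec x\mid \bitvec y)$. The key recurring facts are: $\nabla$ copies a bit (so $(\nabla\cdot R)(x_1x_2\mid y)=R(x_1\mid y)$ when $x_1=x_2$ and $0$ otherwise, i.e.\ $\nabla$ places $R$'s column on the ``doubled'' coordinates); $\top$ marginalises (sums out) a coordinate; $F_{k,b}$ is the diagonal projector that kills exactly the single index $\bitvec b^k$; and the comonoid axioms in Table~\ref{tab:axioms-cc-prop} let one move duplicators and terminators around freely.

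First I would dispatch the easy ones. For (F1)-type equalities (a duplicator followed by discarding one copy, or a $\mathbf 1_b\cdot\top$ absorbing an incoming wire, or two terminators composing) the comonoid counit law $\nabla;(\id\otimes\top)=\id$ and the fact that $\mathbf 1_b\cdot\top$ is the constant column already give the result directly from Equations~\ref{eq:cc-structure}; no matrix computation is even needed. For (F2) — which the figure flags as requiring $P$ stochastic — the point is that feeding a stochastic $P\colon 0\to k$ into a terminator $\top_k$ yields the scalar $1$ (the column sums are $1$), so a stochastic $P$ that is entirely discarded may be erased; here I would write out $\top_k\cdot P=\sum_{\bitvec y}P(\bitvec y)=1=\id_0$, using stochasticity, and this is exactly where the hypothesis is essential (a strictly sub-stochastic $P$ would give a scalar $<1$).

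The substantive cases are (F4) and (F5), which concern the $F_{k,1}$ (and $F_{k,0}$) matrices arising from $\nassertop$, and which the figure restricts to $k>1$. Here I would compute with the diagonal structure: $F_{k,1}=\mathrm{diag}(0,1,\dots,1)$ annihilates the single basis vector $\bitvec 1^k$ and fixes all others. The typical identity to establish is that pre- or post-composing $F_{k,1}$ with copies/terminators on one wire ``distributes'' appropriately, e.g.\ $(F_{k,1}\otimes\id)$ interacting with a duplicator on the last wire equals $F_{k+1,1}$ up to the extra copied wire, or that $(\top\otimes\id_{k-1})\cdot F_{k,1} = F_{k-1,1}\cdot(\top\otimes\id_{k-1})$ fails in a controlled way that forces the stated normalisation. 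In each case I would fix bit-vectors $\bitvec x,\bitvec y$, expand the left- and right-hand matrices as sums over internal wire-assignments $b$ (using the MBN-semantics formula), observe that the $F$-factor forces the relevant coordinates away from the all-ones pattern, and check the surviving terms coincide; the condition $k>1$ enters because for $k=1$ the matrix $F_{1,1}=\mathrm{diag}(0,1)$ behaves like a plain projector and the would-be simplification degenerates or changes the normalisation.

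**Main obstacle.** The genuinely fiddly part will be (F4)/(F5): keeping the bookkeeping of which coordinates are ``the $A$-wires'', tracking how Kronecker products reorder indices under the descending-binary convention, and correctly handling the normalisation scalars $P_{|A}$, $P^c_{|A}$ that appear once sub-stochastic projectors are present — since the equalities are about the unnormalised sub-stochastic matrices, one must be careful that ``equal as matrices'' is what is claimed, not ``equal after renormalisation''. A clean way to tame this is to prove a small auxiliary computation once — namely the action of $F_{k,b}$ on a Kronecker-factorised column and on $\nabla_k$ — and then reuse it; with that lemma in hand, each of (F1)--(F5) reduces to a one-line entry comparison. I would present the proof as: state the auxiliary identities for $\nabla,\top,F_{k,b}$, then verify (F1),(F3) from the comonoid axioms, (F2) from stochasticity, and (F4),(F5) from the $F_{k,b}$ computation, each time pointing at the explicit constant matrices rather than grinding all entries.
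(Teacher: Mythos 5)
Your overall method --- translate each diagram in Figure~\ref{fig:simplifications} into its matrix and compare entries, using the explicit forms of $\id$, $\nabla$, $\top$ and the diagonal matrices $F_{k,b}$ --- is exactly the paper's, and two of your observations are on target: (F2) holds because the column sums of a stochastic $P$ are $1$ (the paper computes $\top\cdot P=\top_k$ for stochastic $P\colon k\to 1$, and this is precisely where stochasticity is needed), and the equalities must be read as equalities of \emph{unnormalised} sub-stochastic matrices.

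The concrete gap is your claim that the ``(F1)-type'' equalities and (F3) follow from the comonoid axioms alone, with ``no matrix computation even needed.'' They do not, because the axioms of a CC-structured PROP constrain only $\id,\sigma,\nabla,\top$ and say nothing about the specific matrices $\mathbf{1}_b$ and $F_{k,b}$ that appear in the figure. (F1) is $\nabla\cdot\mathbf{1}_b=\mathbf{1}_b\otimes\mathbf{1}_b$, i.e.\ the statement that the point-mass states are \emph{copyable}; this is a property of those two particular column vectors (false for a general state: copying a coin flip is not two independent coin flips), though it is a one-line computation. (F3) is $(F_{k,1}\otimes\id)\cdot(\id_{k-1}\otimes\nabla)=(\id_{k-1}\otimes\nabla)\cdot F_{k,1}$, the commutation of the $0/1$-diagonal projector with a duplicator on its last wire; this is again a property of $F_{k,1}$ being diagonal, not of the PROP structure, and it is the longest entry-by-entry computation in the paper's proof. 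Likewise, the identities you guess for (F4)/(F5) are not the ones in the figure: the actual statements are $F_{k,1}\cdot(\mathbf{1}_1\otimes\id_{k-1})=\mathbf{1}_1\otimes F_{k-1,1}$ and $F_{k,1}\cdot(\mathbf{1}_0\otimes\id_{k-1})=\mathbf{1}_0\otimes\id_{k-1}$, i.e.\ fixing one input of $F_{k,1}$ to a constant, not the marginalisation identities you propose. Your fallback plan and your proposed auxiliary lemma on the action of $F_{k,b}$ would in fact dispose of (F3)--(F5) correctly, so the proof is salvageable; but as written you have assigned the hardest equality to the ``free from the axioms'' bucket, and executing that part of the plan as stated would fail.
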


As a result, it makes sense to first eliminate all of these
substructures.  Then
there are two issues left to obtain an OBN.
First, there are nodes that lost their direct connection with an
output port (since output ports were terminated in a $\setop$
operation or since we added an $F_{k,b}$-matrix).  Those have to be
merged with other nodes.  Second, there are sub-stochastic matrices
that have to be eliminated as well.  The following lemma states that a
node not connected to output ports can be merged with its direct
successor nodes.  This can introduce new dependencies between these
successor nodes, but we remove one node from the network.

\begin{restatable}{lem}{LemEliminateNodeWithoutOutput}
  \label{lem:eliminate-node-without-output}
  Let $B = (V,\ell,s,\out)$ be a causality graph, $e$ an evaluation
  function such that $(B,e)$ is an MBN.  Assume that a node $v_0 \in V$
  is not connected to an output port, i.e. for all
  $i \in \{1, \dots, m\}: v_0 \neq \mathrm{out}(o_i)$, and
  $e(\ell(v_0))$ is stochastic.  Then there exists an MBN $(B',e')$
  with $B' = (V \backslash \{ v_0 \},\ell',s',\out)$ such that
  $M_e(B) = M_{e'}(B')$.  Moreover,
  $e'\circ \ell'|_{\bar{V}} = e\circ \ell|_{\bar{V}}$ and
  $s'|_{\bar{V}} = s|_{\bar{V}}$ where
  $\bar{V}=V\backslash (\{v_0\}\cup \mathrm{succ}(v_0))$.
\end{restatable}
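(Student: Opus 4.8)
The plan is to eliminate $v_0$ by absorbing its matrix into the matrices of its direct successors, using the decomposability of causality graphs (Lemma~\ref{lem:mbn_splitting}) together with the decomposition of matrices (Lemma~\ref{lem:decomp-matrix}) and the associated string-diagram reasoning. First I would apply Lemma~\ref{lem:mbn_splitting} to the path-closed set $V' = \{v_0\}\cup\mathrm{succ}(v_0)$ (this set is path-closed because $v_0$ is a direct predecessor of each node in $\mathrm{succ}(v_0)$ and there are no other nodes on paths between members of $V'$ — if $w\in\mathrm{succ}(v_0)$ and $w'\in\mathrm{succ}(v_0)$, any path from $w$ to $w'$ inside the DAG stays inside $V'$ only if there is no intermediate node; in general one must take the path-closure, but since $v_0\notin\mathrm{out}$ we are free to enlarge $V'$ to its path-closure without affecting the argument). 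This yields $B = B_1;(\id_k\otimes B_2);B_3$ where $B_2$ is the induced subnet on $V'$, so that the matrix $M_e(B)$ factors through the matrix $N := M_e(\id_k\otimes B_2)$ of the middle block.

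Next I would analyse the middle block $B_2$ concretely. Its nodes are $v_0$ (with matrix $P_0 := e(\ell(v_0))$, of type $r\to 1$, stochastic by hypothesis) together with the successor nodes $w_1,\dots,w_j$; the single output wire of $v_0$ is fed — possibly with multiplicities, and possibly duplicated via $\nabla$ — into inputs of the $w_i$ and nowhere else, since $v_0\notin\mathrm{out}$. Using the comonoid axioms (Equations~\ref{eq:cc-structure}) and the definition of the generator for $v_0$, the sub-diagram contributed by $v_0$ is $B_{g_0};\nabla_\ell$ for the appropriate fan-out arity $\ell$ (the total number of times the output of $v_0$ is used among the $w_i$). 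The key move is: $B_{g_0};\nabla_\ell$ composed with the tuple of successor-node matrices equals a new causality graph on node set $\{w_1,\dots,w_j\}$ alone, where each $w_i$'s matrix is replaced by $e'(\ell'(w_i))$ obtained by precomposing $e(\ell(w_i))$ with $P_0$ on exactly the input positions that were fed by $v_0$ (and with $\id$ on the others), i.e. $e'(\ell'(w_i)) = e(\ell(w_i))\cdot\bigl(\bigotimes_{\text{positions}} X\bigr)$ where $X = P_0$ on $v_0$-fed positions and $X=\id$ elsewhere, after contracting any duplicated $v_0$-inputs to a single copy via $\nabla;( \dots )$. Because $P_0$ is stochastic, each new matrix is again stochastic, and because $v_0$ had exactly one output wire, the wiring of $B_1$ and $B_3$ into the block is unchanged except for deleting the wire name $v_0$; this gives the claimed $B' = (V\setminus\{v_0\},\ell',s',\out)$ with $\out$ literally unchanged and with $\ell',s'$ agreeing with $\ell,s$ on $\bar V = V\setminus(\{v_0\}\cup\mathrm{succ}(v_0))$.

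Finally, to conclude $M_e(B) = M_{e'}(B')$ I would invoke compositionality of the MBN semantics (the remark after Definition "MBN semantics"): the local equality of the middle blocks lifts to equality of the whole matrices because $M_e(-)$ is a functor of CC-structured PROPs, so replacing $\id_k\otimes B_2$ by the equal middle block leaves $B_1;(\cdot);B_3$ unchanged. The main obstacle I expect is the bookkeeping in the second step: handling the fan-out of $v_0$'s output correctly — multiplicities of $v_0$ among a single successor's inputs force a $\nabla$-contraction that changes that successor's arity, and one must check the comonoid axioms justify commuting $\nabla_\ell$ past the successor matrices (this is where naturality of the diagonal with respect to stochastic matrices, i.e. the fact that copying then applying $P_0$ to each copy equals applying $P_0$ then copying, is used — valid precisely because $P_0$ is a genuine stochastic matrix, $P_0;\nabla = \nabla;(P_0\otimes P_0)$ failing for arbitrary sub-stochastic matrices but holding here). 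Everything else is routine diagram chasing once the middle block is isolated.
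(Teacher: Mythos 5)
There is a genuine gap in your key step. You propose to eliminate $v_0$ by pushing its matrix $P_0$ through the duplicator and precomposing each successor independently with $P_0$, justified by the claimed identity $P_0;\nabla = \nabla;(P_0\otimes P_0)$ "because $P_0$ is stochastic". That identity is false for stochastic matrices in general: it holds only for deterministic ($0/1$-valued) ones. Concretely, take $P_0 = \begin{pmatrix}\sfrac{1}{2} & \sfrac{1}{2}\\ \sfrac{1}{2} & \sfrac{1}{2}\end{pmatrix}$; then $\nabla\cdot P_0 = \begin{pmatrix}\sfrac{1}{2}&\sfrac{1}{2}\\0&0\\0&0\\\sfrac{1}{2}&\sfrac{1}{2}\end{pmatrix}$ while $(P_0\otimes P_0)\cdot\nabla$ has all entries $\sfrac{1}{4}$. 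In network terms: if $v_0$ is a fair coin shared by two successors, the true semantics correlates them perfectly, whereas your construction gives each successor an independent coin. So whenever $v_0$'s output wire is used more than once (the only nontrivial case), the resulting $(B',e')$ computes a different matrix. In the category of (sub-)stochastic matrices only the counit $\top$ is natural (that is exactly equality (F2) of Figure~\ref{fig:simplifications}), not the comultiplication $\nabla$; this is precisely why the lemma's statement warns that elimination "can introduce new dependencies between these successor nodes" --- independent precomposition cannot produce those dependencies.

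The paper avoids this trap by a different route: it fixes a topological order on $\mathrm{succ}(v_0)$ and repeatedly applies arc reversal (Corollary~\ref{cor:arc-reversal}, which rests on Lemma~\ref{lem:decomp-matrix}) to the edge from $v_0$ to its next successor. Each reversal makes that successor a predecessor of $v_0$, decrementing $|\mathrm{succ}(v_0)|$ while keeping $v_0$'s matrix stochastic and rewiring the successors so that the required correlations are re-expressed as explicit dependencies among them. Once $v_0$ has no successors and no output port, equality (F2) --- the naturality of $\top$, which \emph{does} hold for stochastic matrices and is where the stochasticity hypothesis is genuinely used --- deletes it. Your overall framing (isolate the block via Lemma~\ref{lem:mbn_splitting}, use compositionality of $M_e$) is fine, but the middle step must marginalize $v_0$ out of the joint over $\{v_0\}\cup\mathrm{succ}(v_0)$ and re-factor that joint as a chain, which is what iterated arc reversal accomplishes; it cannot be done successor-by-successor independently.
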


The conditions on $\ell'$ and $s'$ mean that the update on $B$ is
local as it does not affect the whole network.  Only the direct
successors of $v_0$ are affected.

Finally, we have to get rid of sub-stochastic matrices inside the MBN,
which have been introduced by the $\assertop$ and $\nassertop$
operations (we assume that we did not normalize yet).  The idea is to
exchange nodes labelled with sub-stochastic matrices with the
predecessor nodes and move them to the front (as in
Lemma~\ref{lem:represent-matrix}). Once there, normalization is
straightforward by normalizing the vectors associated to these nodes.


\begin{restatable}{lem}{EliminateSubStoch}
  \label{lem:eliminate-sub-stoch}
  Let $B = (V,\ell,s,\out)$ be a causality graph without input ports,
  i.e. of type $0\to m$, $e$ an evaluation function such that $(B,e)$
  is an MBN. Furthermore we require that there is a one-to-one
  correspondence between output ports and nodes, i.e., $\out$ is a
  bijection.

  Assume that $V'\subseteq V$ is the set of all nodes equipped with
  sub-stochastic matrices, i.e.  $e(\ell(v))$ is sub-stochastic for
  all $v\in V'$.  Then there exists an OBN $(B',e')$ with
  $B' = (V,\ell',s',\out)$ such that $M_e(B) = M_{e'}(B')\cdot p_B$
  where $p_B = \top_m\cdot M_e(B) \le 1$ is the probability mass of
  $B$.  Moreover, $e'\circ \ell'|_{\bar{V} } = e\circ \ell|_{\bar{V}}$
  and $s'|_{\bar{V}} = s|_{\bar{V}}$ where
  $\bar{V} = V\backslash (V'\cup \mathrm{pred}^*(V'))$.
\end{restatable}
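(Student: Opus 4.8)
\textbf{Plan of proof for Lemma~\ref{lem:eliminate-sub-stoch}.}

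The plan is to eliminate sub-stochastic matrices one at a time, pushing each offending node toward the ``front'' of the network (the start of a topological order) by repeatedly applying arc reversal, until it can be normalised in isolation. First I would fix a topological ordering of $V$ and proceed by induction on $|V'|$. If $V' = \emptyset$ there is nothing to do. Otherwise, pick a node $v_0 \in V'$ that is maximal among $V'$ in the topological order restricted to $\mathrm{pred}^*(V')\cup V'$ — more precisely, choose $v_0 \in V'$ such that none of its predecessors lies in $V'$; such a node exists since the graph is acyclic. The idea is to move $v_0$ all the way to the front past all its predecessors. Using the decomposability Lemma~\ref{lem:mbn_splitting}, isolate the sub-network consisting of $v_0$ together with $\mathrm{pred}^*(v_0)$, which is closed under paths; write $B$ as a substitution of this sub-network into a context. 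Inside the sub-network, apply arc reversal (Corollary~\ref{cor:arc-reversal}, which itself rests on the decomposition Lemma~\ref{lem:decomp-matrix}) repeatedly to reverse every arc of the form $u \to v_0$ with $u$ a direct predecessor of $v_0$. After each reversal, $v_0$ gains the predecessors of $u$ but loses $u$ itself; since the predecessor relation on the finite acyclic graph has no infinite descending chains, after finitely many reversals $v_0$ becomes a source node, i.e.\ $s'(v_0)$ is empty and $e'(\ell'(v_0))$ is a column vector of type $0\to 1$.

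The crucial point is that arc reversal as formulated in Lemma~\ref{lem:decomp-matrix} keeps the newly created ``downstream'' matrix $P^\dashv$ stochastic while allowing the ``upstream'' matrix $P^\vdash$ to absorb the sub-stochasticity; applied to the pair $(u, v_0)$ this means the reversed node playing the role of $v_0$ (the one that becomes a predecessor) stays stochastic, and the sub-stochasticity migrates to the node that remains. Hence throughout the process the only sub-stochastic node is still $v_0$ (now a source), all nodes we touched that were already stochastic remain stochastic, and — importantly — nodes outside $\{v_0\}\cup\mathrm{pred}^*(v_0)$ are untouched, which will give the locality claim $e'\circ\ell'|_{\bar V} = e\circ\ell|_{\bar V}$ and $s'|_{\bar V} = s|_{\bar V}$ with $\bar V = V\setminus(V'\cup\mathrm{pred}^*(V'))$ once we have done this for every element of $V'$. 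Once $v_0$ is a source, its associated matrix is a column vector $w\in[0,1]^2$ with $\top\cdot w = c \le 1$; replace it by the normalised vector $w/c$ (if $c=0$ the statement is vacuous as there is nothing to extract, but since $(B,e)$ came from a legal sequence of CNU operations the relevant mass is positive), which scales $M_e(B)$ by $1/c$. We then have strictly fewer sub-stochastic nodes and invoke the induction hypothesis. Tracking the accumulated scalar factors gives exactly $p_B = \top_m\cdot M_e(B) \le 1$: each normalisation divides the overall mass by the current column sum of the moved vector, and the product of these equals the total mass of $B$, since $B'$ is an OBN and hence represents a genuine probability distribution with $\top_m\cdot M_{e'}(B') = 1$.

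The main obstacle I anticipate is bookkeeping rather than conceptual: I must verify that arc reversal applied to $(u,v_0)$ can always be performed (i.e.\ that $u$ really is a direct predecessor of $v_0$ and that after reversal we still have a well-formed causality graph with the same semantics), that repeatedly reversing arcs into $v_0$ terminates — which follows because each reversal strictly decreases the length of the longest path ending at $v_0$, or alternatively decreases $|\mathrm{pred}^*(v_0)|$ in an appropriate sense — and that the sub-stochasticity genuinely stays confined to $v_0$ and does not leak into previously-stochastic nodes. This last point is where the precise statement of Lemma~\ref{lem:decomp-matrix} (``$P^\dashv$ is stochastic and $P^\vdash$ sub-stochastic; if $P$ is stochastic then $P^\vdash$ is stochastic too'') is used in an essential way: whenever we reverse an arc between a stochastic node and $v_0$, the piece that is forced to be sub-stochastic is the one that stays identified with $v_0$, never the neighbour. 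Also, one must handle the case where $v_0$ feeds into an output port; but since we assumed $\mathrm{out}$ is a bijection and we only modify the source functions of $v_0$ and nodes in $\mathrm{pred}^*(v_0)$, the output function $\mathrm{out}$ is preserved verbatim, and $v_0$ keeps its output port throughout — it is only its inputs that change.
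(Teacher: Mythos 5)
Your proposal is correct in substance and follows the same overall strategy as the paper's proof: push every sub-stochastic node to the front of the network until it becomes a source labelled by a column vector, normalise those vectors there, and recover $p_B$ as the product of the discarded column sums because the resulting OBN has total mass $1$. The difference lies in how the ``push to the front'' step is realised. The paper applies Lemma~\ref{lem:represent-matrix} once to the whole sub-MBN induced by $v_0$ and $\mathrm{pred}^*(v_0)$, re-representing its full matrix as a chain whose single sub-stochastic factor sits at the front, and then reads off $p_B=\prod_v q_v$ via equality (F2) after terminating all outputs; you instead iterate arc reversal to walk $v_0$ past its predecessors one arc at a time. Your route is more local and incremental (closer to what one would implement), whereas the paper's is shorter but implicitly materialises the joint matrix of the predecessor cone. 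Two points you flag should actually be nailed down: (i) Corollary~\ref{cor:arc-reversal} is stated for OBNs, so with $v_0$ sub-stochastic you must invoke the underlying Lemma~\ref{lem:decomp-matrix} directly --- which does permit a sub-stochastic $P$ and forces the stochasticity onto $P^\dashv$, i.e.\ onto the neighbour $u$, exactly as your closing sentences say; note that your intermediate sentence claiming the node that becomes a predecessor ``stays stochastic'' has this backwards, since $v_0$ takes the $P^\vdash$ role and is precisely the factor that carries the sub-stochasticity; and (ii) reversing $u\to v_0$ is only legitimate when $\{u,v_0\}$ is path-closed (no second path from $u$ to $v_0$, otherwise Lemma~\ref{lem:mbn_splitting} does not apply and the reversal would create a cycle), so at each step you must reverse the topologically last direct predecessor of $v_0$; with that choice your termination argument (that $|\mathrm{pred}^*(v_0)|$ strictly decreases) goes through. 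Neither point is a fundamental obstacle, so the plan is sound.
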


Note that $\frac{1}{p_B}$ (whenever $p_B\neq 0$) is the normalization
factor that can be obtained by terminating all input ports of $B$. We
do not have to compute $p_B$ explicitly, but it can be derived from
the probabilities of the nodes which have been moved to the front (see
proof).

\begin{restatable}{cor}{CorConstructObn}
  \label{cor:construct-obn}
  Let $B = (V,\ell,s,\out)$ be a causality graph without input ports,
  i.e. of type $0\to m$, $e$ an evaluation function such that $(B,e)$
  is an OBN. Let $P = M_e(B)$.

  Then we can construct OBNs representing
  $\setop_{A,b}(P), \assertop_{A,b}(P), \nassertop_{A,b}(P)$, where
  \begin{itemize}
  \item the $\setop$ operation modifies only
    $\{\out(o_i) \mid i\in A\}$ and their direct successors \emph{and}
  \item the $\assertop$ and $\nassertop$ operations modify only
    $\{\out(o_i) \mid i\in A\}$ and their predecessors.
  \end{itemize}
\end{restatable}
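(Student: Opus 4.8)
The plan is to combine the matrix characterisations of the three operations (Lemmas~\ref{lem:set_matrix}, \ref{lem:assert_matrix}, \ref{lem:nassert_matrix}) with the simplification machinery (Lemmas~\ref{lem:equalities}, \ref{lem:eliminate-node-without-output}, \ref{lem:eliminate-sub-stoch}) to turn the updated MBN back into an OBN, keeping careful track of which nodes are touched. Starting from the OBN $(B,e)$ with $P = M_e(B)$, I would treat the three cases separately.

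For $\setop_{A,b}$: by Lemma~\ref{lem:set_matrix} the updated distribution is obtained by tensoring a node $v_s$ labelled by $\mathbf{1}_b\cdot\top$ onto the wire $\out(o_s)$ for each $s\in A$ and redirecting $\out'(o_s) = v_s$. The old nodes $\out(o_s)$ are now disconnected from any output port, so I would apply Lemma~\ref{lem:eliminate-node-without-output} to each of them (after first using the simplifications of Figure~\ref{fig:simplifications}/Lemma~\ref{lem:equalities} to collapse the $\mathbf{1}_b\cdot\top$ chains where beneficial, e.g. (F1)). By the locality clause of Lemma~\ref{lem:eliminate-node-without-output}, eliminating the old node $\out(o_s)$ only changes that node's direct successors; so the cumulative effect modifies only the $\out(o_i)$, $i\in A$, and their direct successors, as claimed. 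Since $\setop$ preserves stochasticity (Lemma~\ref{lem:set_matrix}), no sub-stochastic matrices are introduced, and what remains is an OBN.

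For $\assertop_{A,b}$ and $\nassertop_{A,b}$: by Lemmas~\ref{lem:assert_matrix} and~\ref{lem:nassert_matrix} the update is realised by composing $P$ with either a tensor of single-node sub-stochastic matrices $F_{1,1-b}$ on the wires of $A$ (assert) or the single larger matrix $F_{k,b}$ coupling those wires (negative assert). Using Lemma~\ref{lem:represent-matrix} the $F_{k,b}$-block is itself expressible as an MBN on $k$ nodes placed on the wires $\{\out(o_i)\mid i\in A\}$ with exactly one sub-stochastic node at the front. So in both cases we obtain an MBN $(B'',e'')$ that is $B$ with a ``cap'' of new nodes glued onto the output wires indexed by $A$; here $\out''$ is still a bijection. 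Now invoke Lemma~\ref{lem:eliminate-sub-stoch}: it produces an OBN $(B',e')$ with $M_{e''}(B'') = M_{e'}(B')\cdot p$, and by the locality clause $e'$ and $s'$ agree with $e''$ and $s''$ outside $V'\cup\mathrm{pred}^*(V')$, where $V'$ is the set of sub-stochastic nodes. The sub-stochastic nodes all sit on the wires of $A$, i.e. among $\{\out(o_i)\mid i\in A\}$, so their $\mathrm{pred}^*$ (taken in $B''$) is contained in $\{\out(o_i)\mid i\in A\}\cup\mathrm{pred}^*_B(\{\out(o_i)\mid i\in A\})$ together with the freshly added cap nodes. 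Hence only the $\out(o_i)$, $i\in A$, and their predecessors are modified. Finally $p = \top_m\cdot M_{e''}(B'') = P_{|A}$ (resp.\ $P^c_{|A}$) is exactly the normalisation constant appearing in Lemmas~\ref{lem:assert_matrix}/\ref{lem:nassert_matrix}, so $M_{e'}(B') = \frac{1}{p}M_{e''}(B'')$ is the correctly normalised $\assertop_{A,b}(P)$ (resp.\ $\nassertop_{A,b}(P)$), and it is stochastic because $p_B = 1$ is restored, so $(B',e')$ is a genuine OBN.

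I expect the main obstacle to be the bookkeeping of which nodes are ``modified''. The two sub-lemmas phrase their locality guarantees relative to the intermediate causality graph (with the cap nodes already added), not relative to the original $B$, and one must check that identifying $\mathrm{pred}^*$ in $B''$ with $\mathrm{pred}^*$ in $B$ (on nodes that survive) is legitimate, that the cap nodes themselves count as part of the allowed modification region, and that the order in which one applies Lemma~\ref{lem:equalities}, Lemma~\ref{lem:eliminate-node-without-output}, and Lemma~\ref{lem:eliminate-sub-stoch} never pushes changes outside $\{\out(o_i)\mid i\in A\}\cup(\text{succ or pred})$. Everything else is an assembly of already-established facts; the proof is essentially ``apply the right lemmas in the right order and read off the support clauses''.
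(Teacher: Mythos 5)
Your $\setop$ case matches the paper's argument: terminate the wires of $\{\out(o_i)\mid i\in A\}$, and then eliminate the now outputless (stochastic) nodes via Lemma~\ref{lem:eliminate-node-without-output}, whose locality clause gives exactly the ``direct successors only'' claim. The problem is in the $\assertop$/$\nassertop$ case. You glue a cap of new nodes (the $F_{1,1-b}$'s, resp.\ the $k$-node MBN for $F_{k,b}$ from Lemma~\ref{lem:represent-matrix}) onto the wires indexed by $A$ and then invoke Lemma~\ref{lem:eliminate-sub-stoch} directly, asserting that ``$\out''$ is still a bijection.'' It is not: after capping, the graph has $m+|A|$ (resp.\ $m+k$) nodes but only $m$ output ports, and the old nodes $\out(o_i)$, $i\in A$, are no longer the image of any output port. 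Lemma~\ref{lem:eliminate-sub-stoch} explicitly requires a one-to-one correspondence between output ports and nodes, so its hypothesis fails and you cannot read off its conclusion or its locality clause.

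The missing step --- and the one the paper's proof supplies --- is to restore the one-node-per-output-port invariant \emph{before} eliminating sub-stochasticity: first use equality (F3) of Figure~\ref{fig:simplifications} to push the sub-stochastic block past any duplicators, so that all other successors of the nodes $\out(o_i)$ come \emph{after} $F_{k,b}$; then fuse $F_{k,b}$ with its direct predecessors using Lemma~\ref{lem:decomp-matrix}, which re-splits the composite into one node per output wire (one sub-stochastic node for $\assertop$, several for $\nassertop$) and brings the node count back to $m$ with $\out$ a bijection. Only then does Lemma~\ref{lem:eliminate-sub-stoch} apply, yielding the OBN and the normalisation by $p_B = P_{|A}$ (resp.\ $P^c_{|A}$). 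Your identification of the normalisation constant and your concern about relating $\mathrm{pred}^*$ in the capped graph to $\mathrm{pred}^*$ in $B$ are both fine; the fusion step is what actually makes the predecessor-locality claim come out of the lemma rather than having to be argued separately.
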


The operations are costly whenever a node has many predecessors or
direct successors. In a certain way this is unavoidable because our
operations are related to the computation of marginals, which is
$\mathsf{NP}$-hard \cite{cooper:marginals-np-hart}.  However, if the
Bayesian network has a comparatively ``flat'' structure, we expect
that the efficiency is rather high in the average case, as supported
by our runtime results below. Applying the $\nassertop$ operation will
introduce dependencies for the random variables corresponding to the
pre- and post-conditions of a transition, however this effect is
localized if we consider particular classes of Petri nets, such as
free-choice nets \cite{de:free-choice-petri}.


\begin{figure}
  \centering
  \includegraphics[width=\textwidth]{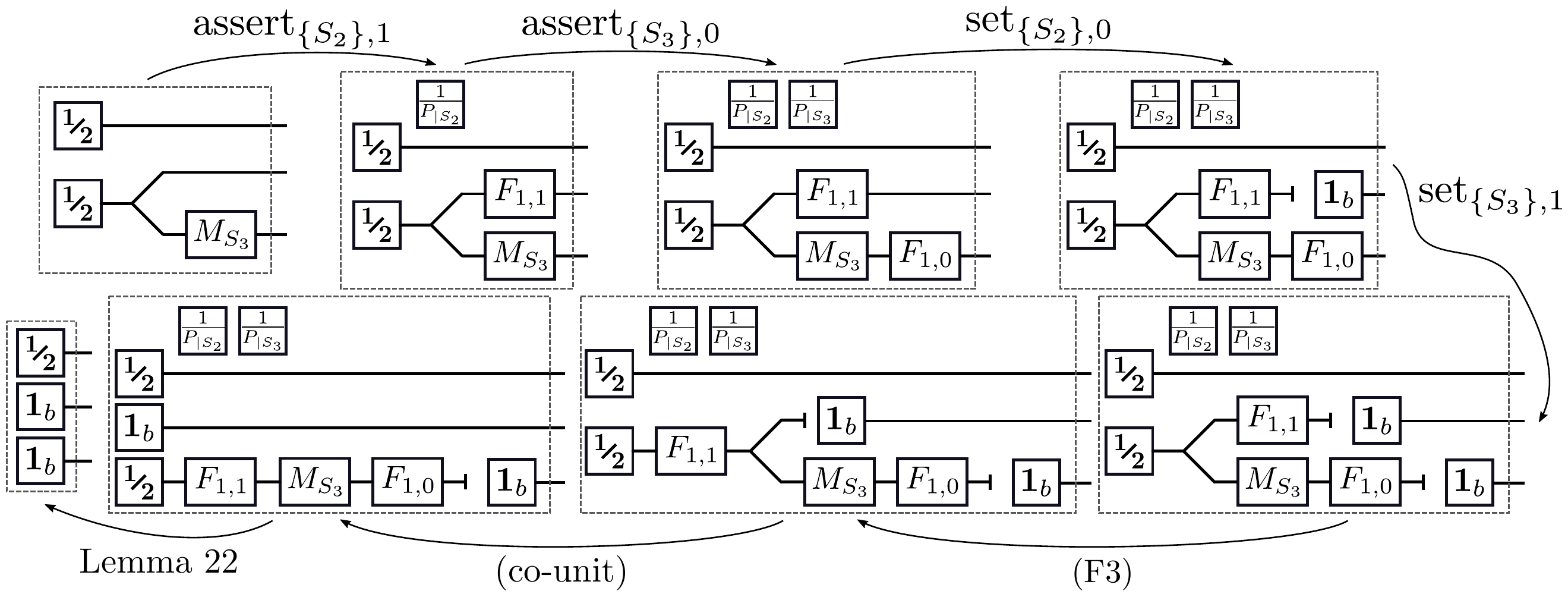}
  \caption{Exemplary update process for the $\successop_{t_4}$ operation of our running CNU example. Here $\sfrac{\mathbf{1}}{\mathbf{2}} = \begin{pmatrix}\sfrac{1}{2}\\ \sfrac{1}{2}\end{pmatrix}$ and $M_{S_3} = \begin{pmatrix}
  		\sfrac{1}{3} & \sfrac{1}{2} \\
  		\sfrac{2}{3} & \sfrac{1}{2} 
	\end{pmatrix}$.}
  \label{fig:update-example-mbn}
\end{figure}

\begin{ex}
  Figure~\ref{fig:update-example-mbn} shows an update process,
  following a lazy evaluation strategy, for a Bayesian net
  representing the probability distribution from
  Figure~\ref{fig:update-example}.
\end{ex}



\section{Implementation}
In order to quantitatively assess the performance of MBNs we developed a prototypical \verb+C+++ implementation of the concepts in this paper, allowing to read, write, simplify, generate, and visualize MBNs as well as perform operations on CNUs that update an underlying MBN.
The implementation is open-source and freely available on GitHub.\footnote{\url{https://github.com/bencabrera/bayesian_nets_program}}

As a first means of obtaining runtime results we randomly generated CNs with a range of different parameters: e.g. number of places, number of places in a precondition of a transition, places in the initial marking etc.
We then successively picked transitions at random to fire
and performed the necessary operations to update the MBN and simplify
it to an OBN.

We chose to guarantee a success rate of transition firing of around
$1/3$. We argue that given the fact that we model an observer with
prior knowledge it is realistic to assume a certain rate of successful
transitions. A very low sucess rate leads to an accumulation of
successive $F_{k,b}$ matrices which can only be eliminated using the
costly operations on substochastic matrices (see proof of Lemma
\ref{lem:eliminate-sub-stoch}).  One could implement effective
simplification strategies merging successive $F_{k,b}$ matrices --
since composing 0,1 diagonal matrices yields again 0,1 diagonal
matrices.  However, this is out of scope of this publication.


The plot on the left of Figure \ref{fig:runtime_results} shows a
comparison between run times when performing CNU operations directly
on the joint distribution versus our MBN implementation.  One can
clearly observe the exponential increase when using the joint
distribution while the MBN implementation in this setup stays
relatively constant.  The plot on the right of Figure
\ref{fig:runtime_results} hints towards an increase in complexity when
CNs -- and thus MBNs -- are more coupled.  When increasing the maximum
number of places in the precondition of a transition we observe an
increase in run times. The number of outliers with a dramatic increase
in run times seem to rise as well.

\begin{figure}
	\includegraphics[width=.49\textwidth]{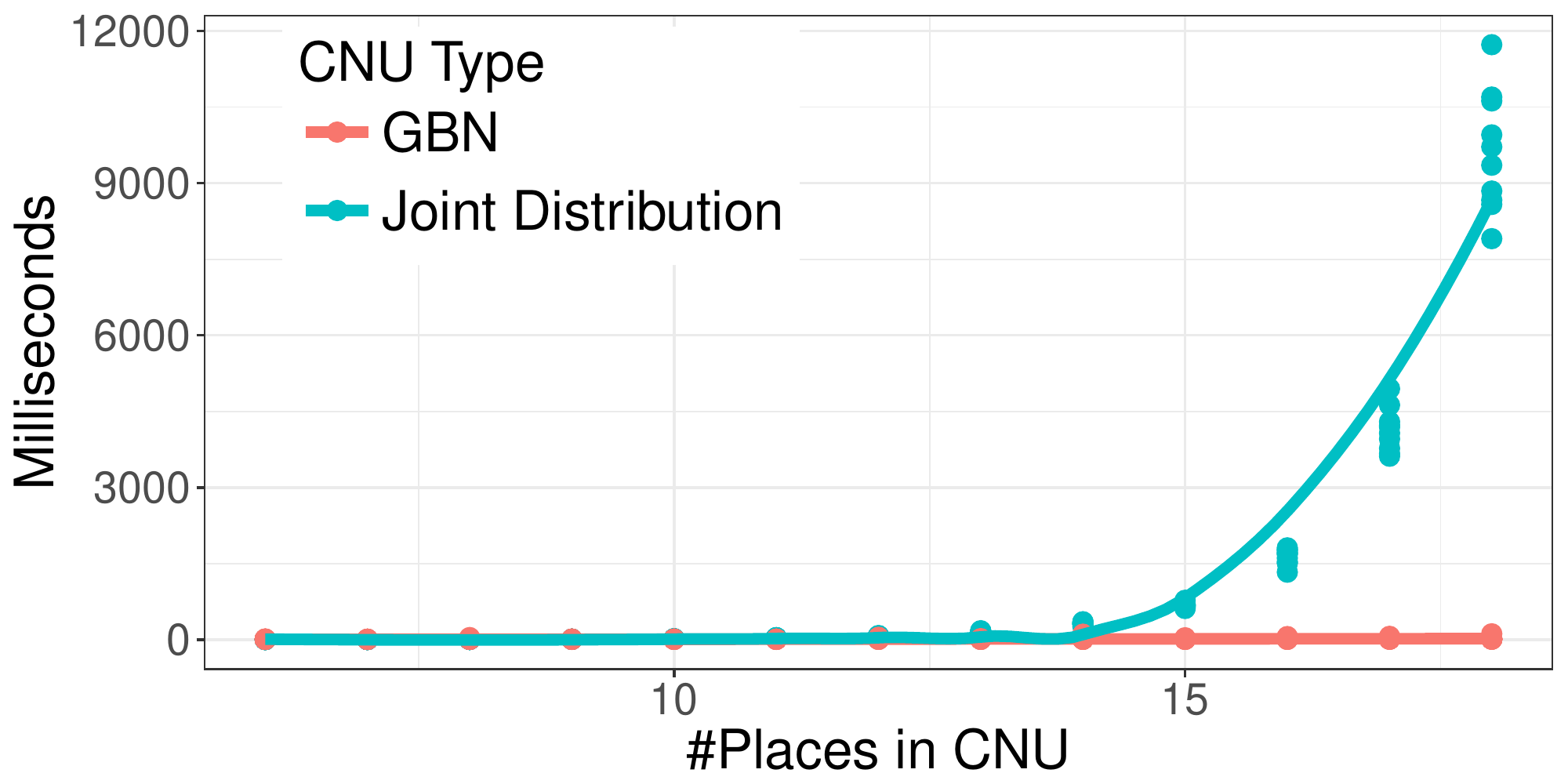}
	\hspace{0.6em}
	\includegraphics[width=.49\textwidth]{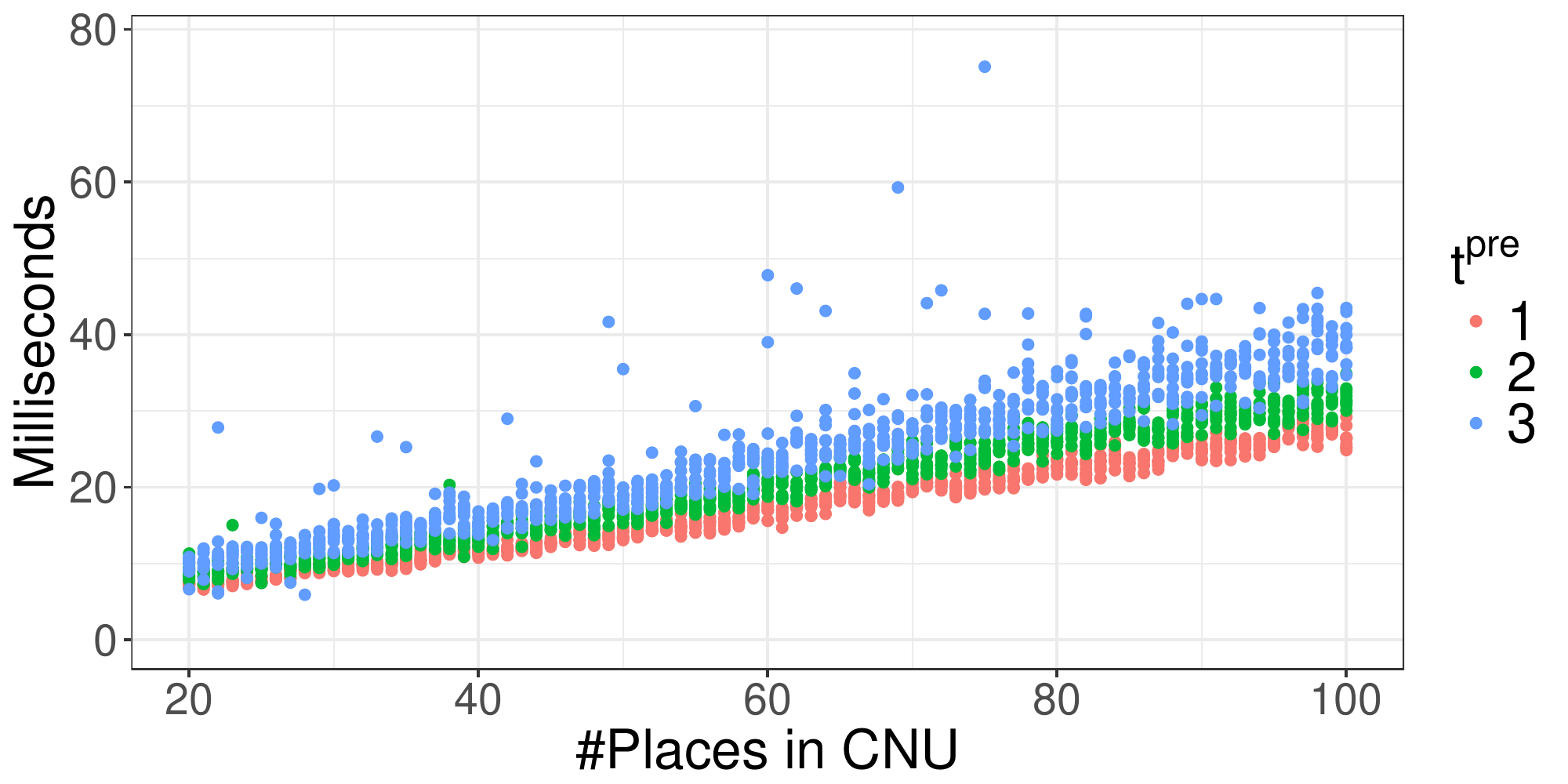}
	\caption{Averaged runtimes for performing 100 CNU operations using joint distributions or MBNs.}
	\label{fig:runtime_results}
\end{figure}

\section{Conclusion}

\noindent\textbf{Related work:} A concept similar to our nets with
uncertainty has been proposed in \cite{jr:uncertainty-initial-petri},
but without any mechanism for efficiently representing and updating
the probability distribution. There are also links to Hidden Markov
Models \cite{r:hidden-markov-models} for inferring probabilistic
knowledge on hidden states by observing a model.
\todo{
	\textbf{R1:} Lemma 18 reminded me a bit of Fritz's work on presenting the category of stochastic matrices (https://arxiv.org/abs/0902.2554), although now that I look at it again it's a fairly different setting. \\
	\textbf{Ben:} Since the reference focusses rather on the technical details of the stoch. mat PROP we could just cite it in the proof sketch after Prop. 9.
}

Bayesian networks were introduced by Pearl in
\cite{p:bayesian-networks} to graphically represent random variables
and their dependencies. Our work has some similarities to his
probabilistic calculus of actions (do-calculus)
\cite{p:probabilistic-calculus-actions} which supports the empirical
measurement of interventions. However, while Pearl's causal networks
model describe true causal relationships, in our case Bayesian
networks are just compact symbolic representations of huge probability
distributions.  There is also a notion of dynamic Bayesian networks
\cite{m:dynamic-bayesian-networks}, where a random variable has a
separate instance for each time slice.  We instead keep only one
instance of every random variable, but update the BN itself.  There is
substantial work on updating Bayesian networks (for instance
\cite{fg:sequential-update-bn}) with the orthogonal aim of learning
BNs from training data.

PROPs have been introduced in \cite{m:categorical-algebra},
foundations for term-based proofs have been studied in
\cite{b:languages-monoidal} and their graphical language has been
developed in \cite{s:graphical-monoidal,ck:picturing-quantum}.
Bayesian networks as PROPs have already been studied in
\cite{f:causal-theories} under the name of causal theories, as well as
in \cite{jz:influence-bayesian,jz:transformer-bayesian} in order to
give a predicate/state transformer semantics to Bayesian
networks. However, these papers do not explicitly represent the
underlying graph structure and in particular they do not consider
updates of Bayesian networks.

We use the results from \cite{cg:term-graphs-gs-monoidal} in order to
show that our causality graphs are in fact term graphs, which are
freely generated gs-monoidal categories, which in turn are
CC-structured PROPs. Although this result is intuitive, it is
non-trivial to show: given two terms with isomorphic underlying
graphs, each can be reduced to a normal form which can be converted
into each other using the axioms of a CC-structured PROP. Similar
results are given in \cite{fc:algebra-dags,bgm:normal-forms-journal}
for PROPs with multiplication and unit, in addition to
comultiplication and counit.

\smallskip

\noindent\textbf{Future work:} 
%
We would like to investigate further operations on probability
distributions, however it is unclear whether every operation can be
efficiently implemented. For instance linear combinations of
probability distributions seem difficult to handle.


Van der Aalst~\cite{Perpetual-Free-Choice} showed that all reachable
markings in certain free-choice nets can be inferred from their
enabled transitions. An unrestricted observer may therefore be in a
very strong position. Privacy research often considers statistical
queries, such as how many records with certain properties exist in the
database~\cite{DBLP:conf/tamc/Dwork08,DBLP:journals/geoinformatica/Damiani14}. To
model such weaker queries we require labelled nets where instead of
transitions we observe their labels. To implement this in BNs requires
a disjunction of the enabledness conditions of all transitions with
the same label. Furthermore we are interested in scenarios where
certain transitions are unobservable.

\todo{\textbf{Ben:} We should not forget to thank the reviewers}

\bibliography{references}


\full{
\appendix

\section{Proofs}
\label{apx:proofs}


\PropConditioned*

\begin{proof}
  For the case of $\failop^\mathit{pre}$ and $\failop^\mathit{post}$
  the equation is a straightforward reformulation of the
  definition. For $\successop$ we have:
  \begin{eqnarray*}
    && \prob(\markt{m}\mid \markt{}) =
    \frac{\prob(\markt{m})}{\prob(\markt{})} =
    \sum_{m'\overset{t}{\Rightarrow} m}
    \frac{\prob(\{m'\})}{\prob(\markt{})} =
    \sum_{m'\overset{t}{\Rightarrow} m} \prob(\{m'\}\mid
    \markt{}) \\
    & = & \sum_{m'\overset{t}{\Rightarrow} m}
    \assertop_{\rightdot{t},0}(\assertop_{\leftdot{t},1}(p)(m')) =
    \sum_{m=\bar{m}\cup\rightdot{t}} \sum_{\bar{m}=m'\backslash
      \leftdot{t}}
    \assertop_{\rightdot{t},0}(\assertop_{\leftdot{t},1}(p)(m')) \\
    & = & \sum_{m=\bar{m}\cup\rightdot{t}} \setop_{\leftdot
      t,0}(\assertop_{\rightdot{t},0}(\assertop_{\leftdot{t},1}(p)))(\bar{m}) \\
    & = & \setop_{\rightdot t,1}(\setop_{\leftdot
      t,0}(\assertop_{\rightdot{t},0}(\assertop_{\leftdot
      t,1}(p))))(m) = \successop_{t}(p)(m)
  \end{eqnarray*}
\end{proof}

\LemMbnSplitting*

\begin{proof}[Proof sketch]
  Let $V_1 = \mathrm{pred}^*(V')$ be the set of all predecessor nodes
  of $V'$ and let $V_3$ be the set of remaining nodes. Furthermore let
  $k = n+|V_1|$. Since $V'$ is path-closed we can construct a CG $B_2$
  that contains exactly the nodes in $V'$ and has as input ports
  exactly those ports needed by these nodes and one output port for
  every element in $V'$.

  Then we can construct a CG that contains all nodes of $V_1$ and
  whose output ports link to all input ports as well as all nodes in
  $V_1$. We then duplicate those wires that are needed by $B_2$ and
  permute them to the end of the output port sequence. This gives us
  the CG $B_1$.

  Finally, $B_3$ contains all nodes of $V_3$: due to the wiring it can
  access all input ports as well as all nodes of $V_1$ and $V_2$. At
  the very end all wires are terminated, duplicated and/or permuted as
  required by $B$.
\end{proof}

\begin{lem}
  \label{lem:mapping-m}
  For each MBN~\((B,e)\), %
  $M_e(B)$ is the value of the free extension of \(e\) %
  to CGs from Proposition~\ref{prop:free-cc-prop} applied to the
  isomorphism class of~\(B\) %
  using that it is actually the free CC-PROP %
  as it is the free GS-monoidal
  category \cite{cg:term-graphs-gs-monoidal}. 
\end{lem}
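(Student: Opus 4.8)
The plan is to identify the map $B\mapsto M_e(B)$ with the free extension of $e$ supplied by the universal property of Proposition~\ref{prop:free-cc-prop}. By Proposition~\ref{prop:stoch-mat-cc-prop} the (sub-)stochastic matrices form a CC-structured PROP, and the evaluation function $e$ assigns to each generator $g\colon n\to 1$ an arrow $e(g)\colon n\to 1$ of that PROP; hence there is a \emph{unique} CC-PROP morphism $\hat e$ from causality graphs modulo isomorphism to (sub-)stochastic matrices with $\hat e(B_g)=e(g)$ for every $g\in G$. It therefore suffices to show: (i) $M_e$ is invariant under isomorphism of causality graphs, so it descends to a map on isomorphism classes; and (ii) this descended map is a CC-PROP morphism that agrees with $e$ on generators. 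Uniqueness of $\hat e$ then forces $M_e(B)=\hat e(B)$ for all $B$, which is the claim.

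For (i): an isomorphism $\phi\colon V_1\to V_2$ (extended by $\phi(i_j)=i_j$ and applied pointwise to wire sequences) satisfies $\ell_1=\ell_2\circ\phi$, $\phi\circ s_1=s_2\circ\phi$ and $\phi\circ\out_1=\out_2$, so post-composition with $\phi$ is a bijection between assignments $b\colon W_{B_1}\to\{0,1\}$ and $b'\colon W_{B_2}\to\{0,1\}$ that preserves the boundary constraints of the defining sum and matches the two products $\prod_v e(\ell(v))(\cdot\mid\cdot)$ termwise; hence $M_e(B_1)=M_e(B_2)$. For the base cases of (ii): the single-node graph $B_g$ has exactly one summand, giving $M_e(B_g)(x\mid y_1\dots y_n)=e(g)(x\mid y_1\dots y_n)$, i.e.\ $M_e(B_g)=e(g)$; for $\id_0,\id,\sigma,\nabla,\top$ there are no nodes, so $M_e$ returns the indicator that the wire values prescribed by $\out$ and the inputs are mutually consistent, which is precisely the matrix listed in the paragraph on (sub-)stochastic matrices.

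It remains to verify compatibility of $M_e$ with $;$ and $\otimes$. For $B=B_1;B_2$ with connecting function $c$, an assignment $b\colon W_B\to\{0,1\}$ corresponds bijectively to a pair $(b_1,b_2)$ where $b_1$ is the restriction of $b$ to $W_{B_1}\subseteq W_B$ and $b_2$ is given by $b_2(v)=b(v)$ on $V_2$ and $b_2(i_j)=b(\out_1(o_j))$, subject to $b_2(i_j)=b_1(\out_1(o_j))$; under this correspondence $b\circ c=b_2$ on $W_{B_2}$, so the product over $V=V_1\uplus V_2$ factors as a product over $V_1$ depending only on $b_1$ times a product over $V_2$ depending only on $b_2$. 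Partitioning the sum by $z:=\bigl(b_1(\out_1(o_1)),\dots,b_1(\out_1(o_{m_1}))\bigr)$ and summing first over $b_1$ (for fixed $z$), then over $b_2$, then over $z\in\{0,1\}^{m_1}$ gives $\sum_z M_e(B_2)(x\mid z)\cdot M_e(B_1)(z\mid y)=(M_e(B_1);M_e(B_2))(x\mid y)$ in the paper's convention $P;Q=Q\cdot P$. The case $B=B_1\otimes B_2$ is analogous but simpler: the re-indexing $d$ splits $b$ into an \emph{independent} pair $(b_1,b_2)$, both the product over $V_1\uplus V_2$ and the boundary constraints split, and one reads off the Kronecker product $M_e(B_1)\otimes M_e(B_2)$. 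Preservation of $\id_n,\sigma_{n,m},\nabla_n,\top_n$ then follows from the base cases, their inductive definitions in Table~\ref{tab:axioms-cc-prop}, and the compatibility with $;$ and $\otimes$ just established. Hence $M_e$ is a CC-PROP morphism extending $e$, and $M_e=\hat e$.

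The main obstacle is the composition case: one must check carefully that the connecting function $c$ of the CG composition exactly realises the plugging of the $m_1$ intermediate wire values, so that the bijection $b\leftrightarrow(b_1,b_2)$ holds and the resulting double sum collapses to a matrix product. Abstractly this step is subsumed by ``term graphs are the free gs-monoidal category'' of~\cite{cg:term-graphs-gs-monoidal}; here we only need that the explicit summation formula for $M_e$ actually computes that free extension, and everything else is routine bookkeeping over $\{0,1\}$-valued wire assignments.
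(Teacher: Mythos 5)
Your proposal is correct and follows essentially the same route as the paper's proof: both reduce the claim to showing that $M_e$ is a CC-PROP morphism (functorial for $;$ and $\otimes$, correct on generators and the constants $\id,\sigma,\nabla,\top$) and then invoke the freeness of Proposition~\ref{prop:free-cc-prop}, with the key computation being exactly the bijection between wire assignments $b\colon W_B\to\{0,1\}$ and compatible pairs $(b_1,b_2)$ that collapses the double sum to a matrix product. Your write-up is if anything slightly more systematic (it spells out iso-invariance and the tensor case, which the paper dismisses as analogous), but there is no substantive difference in method.
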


\begin{proof}
  It is sufficient to show that $M_e$ is functorial, in particular it
  respects composition and tensor, as well as identity, $\nabla$,
  $\sigma$ and $\top$. We only consider the following two cases, the
  rest is analogous.

  \medskip

  \hrule

  \medskip

  For instance, let two MBNs $B_i = (V_i,\ell_i,s_i,\mathit{out}_i)$,
  $i\in\{1,2\}$, with $B_1\colon n\to m$, $B_2\colon m\to \ell$ be
  given. We set $B = B_1;B_2$. We apply $M_e$ to $B$ and obtain for
  $\bitvec{x} \in \{0,1\}^\ell$, $\bitvec{z} \in\{0,1\}^m$:
  \begin{eqnarray*}
    &&
    \bigg(M_e(B_2)\cdot M_e(B_1)\bigg)(\bitvec{x},\bitvec{z}) \\
    & = & \sum_{\bitvec{y}\in\{0,1\}^m}
    M_e(B_2)(\bitvec{x},\bitvec{y}) \cdot
    M_e(B_2)(\bitvec{y},\bitvec{z}) \\
    & = & \sum_{\bitvec{y}\in\{0,1\}^m}
    \bigg(\sum_{\substack{b_2\colon
        W_{B_2}\to \{0,1\}\\ b_2(i_j) = y_j \\
        b_2(\textit{out}_2(o_i)) = x_i}} \quad \prod_{v\in V_2} \quad
    e(\ell_2(v))(b_2(v)\mid b_2(s_2(v)))\bigg)  \cdot\\
    && \qquad\qquad \bigg(\sum_{\substack{b_1\colon
        W_{B_1}\to \{0,1\}\\ b_1(i_j) = z_j \\
        b_1(\textit{out}_1(o_i)) = y_i}} \quad \prod_{v\in V_1} \quad
    e(\ell_1(v))(b_1(v)\mid b_1(s_1(v)))\bigg) \\
    & = & \sum_{\bitvec{y}\in\{0,1\}^m} \sum_{\substack{b_2\colon
        W_{B_2}\to \{0,1\}\\ b_2(i_j) = y_j \\
        b_2(\textit{out}_2(o_i)) = x_i}} \quad
    \sum_{\substack{b_1\colon
        W_{B_1}\to \{0,1\}\\ b_1(i_j) = z_j \\
        b_1(\textit{out}_1(o_i)) = y_i}} \\
    && \qquad\qquad \prod_{v\in V_2}\quad \prod_{v\in V_1} \quad
    \bigg(e(\ell_2(v))(b_2(v)\mid b_2(s_2(v))) \cdot
    e(\ell_1(v))(b_1(v)\mid b_1(s_1(v)))\bigg) \\
    & = & \sum_{\substack{b\colon
        W_B\to \{0,1\}\\ b(i_j) = z_j \\
        b(\textit{out}(o_i)) = x_i}} \qquad \prod_{v\in V}\quad
    e(\ell(v))(b(v)\mid b(s(v))) \\
    & = & M_e(B) = M_e(B_1;B_2)
  \end{eqnarray*}
  We assume that $\bitvec{x} = x_1\dots x_\ell$,
  $\bitvec{y} = y_1\dots y_m$, $\bitvec{z} = z_1\dots z_n$.

  Note that the equality sign on the second last line is due to the
  fact that assignments
  $b_1\colon W_{B_1}\to \{0,1\}, b_2\colon W_{B_2}\to \{0,1\}$ of
  boolean values to wires can be merged into one assignment
  $b\colon W_B\to \{0,1\}$ on $B$ whenever they agree on the
  interface, i.e., whenever
  $b_1(\mathit{out}_1(o_k)) = y_k = b_2(i_k)$.

  \medskip

  \hrule

  \medskip

  Next we check that $M_e(\sigma) = \sigma$. (Here we use some
  overloading: $\sigma$ stands for a CG as well as for a stochastic
  matrix.)  The MBN $\sigma$ is of the form
  $(V,\ell,s,\mathit{out}) = (\emptyset,[\,],[\,],[o_1\mapsto i_2,
  o_2\mapsto i_1])\colon 2\to 2$.

  Let $x_1,x_2,y_1,y_2\in\{0,1\}$. We compute
  \[
    M_e(\sigma)(x_1 x_2,y_1 y_2) = \sum_{\substack{b\colon
        W_\sigma\to \{0,1\}\\ b(i_j) = y_j \\
        b(\textit{out}(o_i)) = x_i}} \qquad \prod_{v\in V}\quad
    e(\ell(v))(b(v)\mid b(s(v)))
  \]
  In this case the product is always empty, evaluating to a value
  of~$1$. The sum is non-empty whenever an assignment $b$ exists,
  i.e., if $y_1 = b(i_1) = b(\mathit{out}(o_2)) = x_2$ and
  $y_2 = b(i_2) = b(\mathit{out}(o_1)) = x_1$. In these cases,
  $M_e(\sigma)(x_1 x_2,y_1 y_2) = 1$, otherwise the sum is empty and
  $M_e(\sigma)(x_1 x_2,y_1 y_2) = 0$. Combined, we obtain
  $M_e(\sigma) = \sigma$.
\end{proof}

\LemSetMatrix*

\begin{proof}
  First consider the special case of a singleton $A = \{1\}$ and
  $b=1$. (The case for $b=0$ is analogous.)  We compute
  \begin{equation*}
    M := \bigotimes_{i=1}^m T_{\{1\},1}^{\setop}(i) = 
    \begin{pmatrix}1 & 1 \\ 0 & 0\end{pmatrix} 
    \otimes \bigotimes_{i=2}^m \id = 
    \begin{pmatrix}1 & 1 \\ 0 & 0\end{pmatrix} \otimes \id_{m-1} = 
    \begin{pmatrix}\id_{m-1} & \id_{m-1} \\ 0_{m-1} & 0_{m-1}\end{pmatrix}.
  \end{equation*}
  Thus, whenever $\bitvec x \in \{0,1\}^m$:
  \begin{equation*}
    (M \cdot P)(\bitvec{x}) = \begin{cases}
      P(1\bitvec{x}_{[2\dots m]}) + P(0\bitvec{x}_{[2\dots m]}) & 
      \text{ if } \bitvec{x}_{[1]} = 1 \\
      0 & \text{ if } \bitvec{x}_{[1]} = 0
    \end{cases}
    \quad= \setop_{\{1\},1}(P).
  \end{equation*}
  The general case follows by using
  $\setop_{A,b} = \setop_{\{s_k\},b} \circ \dots \circ
  \setop_{\{s_1\},b}$ where $A = \{s_1,\dots,s_k\}$ and the fact that
  because the stochastic matrices form a PROP (see the first law
  (mixing of composition and tensor) in
  Table~\ref{tab:axioms-cc-prop}), we have
  $\bigotimes_{i=1}^m T_{A,b}^\setop(i) = \prod_{l=1}^k
  (\bigotimes_{i=1}^m T_{\{s_l\},b}^\setop(i))$.  Moreover,
  $\bigotimes_{i=1}^m T_{A,b}^\setop(i)$ is a stochastic matrix
  because all $T_{A,b}^\setop(i)$ are stochastic and the tensor
  preserves this property.
\end{proof}

\LemAssertMatrix*

\begin{proof}
  Again we first consider the singleton case $A = \{1\}$ and
  $b=1$. (The case for $b=0$ is analogous.)  Then
  \begin{equation*}
    M := \bigotimes_{i=1}^m T_{\{1\},1}^\assertop(i) = 
    \begin{pmatrix}1 & 0 \\ 0 & 0\end{pmatrix} 
    \otimes \id_{m-1} = 
    \begin{pmatrix}\id_{m-1} & 0_{m-1} \\ 0_{m-1} & 0_{m-1}\end{pmatrix},
  \end{equation*}
  and thus $(M \cdot P)(\bitvec{x}) = \begin{cases}
    P(\bitvec{x}) & \text{ if } \bitvec{x}_{[1]} = 1 \\
    0 & \text{ if } \bitvec{x}_{[1]} = 0
  \end{cases}$ 

  Also
  $\left(\bigotimes_{i=1}^m Q_A(i)\right)(\bitvec x) = \begin{cases}
    1 & \text{ if } \bitvec{x}_{[A]} = \{1\} \\
    0 & \text{ otherwise }
  \end{cases}$ 

  As a result,
  $P_{|A} = \sum_{\bitvec x | \bitvec{x}_{[A]}=\{1\}} P(\bitvec x)$ and
  thus $\frac{1}{P_{|A}}(M \cdot P) = \assertop_{A,1}(P)$.  $M$ is not
  stochastic because clearly the last $m-1$ columns add up to $0$.

 Similarly to the $\setop$ case we have
  $\assertop_{A,b} = \assertop_{\{s_k\},b} \circ \dots \circ
  \assertop_{\{s_1\},b}$ where $A = \{s_1,\dots,s_k\}$ and
  $\bigotimes_{i=1}^m T_{A,b}^\assertop(i) = \prod_{l=1}^k
  (\bigotimes_{i=1}^m T_{\{s_l\},b}^\assertop(i))$.
\end{proof}

\LemNassertMatrix*

\begin{proof}
  We have
  $M := \left(F_{k,1} \otimes \id_{m-k}\right)
  = \begin{pmatrix}
    0_{m-k} & 0 \\
    0 & \id_k
  \end{pmatrix}$.  This means that when multiplying we get
  $(M \cdot P)(\bitvec{x}) = \begin{cases}
    0 & \text{ if } \bitvec{x}_{A} = \{1\} \\
    P(\bitvec{x}) & \text{ otherwise }
  \end{cases}$.  

  As shown in the $\assertop$ case
  $P_{|A} = \sum_{\bitvec x | \bitvec{x}_{[A]}=\{1\}} P(\bitvec x)$ and
  thus
  $P_{|A}^c = 1-P_{|A} = \sum_{\bitvec x | \bitvec{x}_{[A]} \not=1}
  P(\bitvec x)$ and together we get
  $\nassertop_{A,1}(P) = \frac{1}{P_{|A}^c} (M \cdot P)$.
\end{proof}

\LemDecompMatrix*

\begin{proof}
  In the following we denote by $\bitvec{z} \in \{0,1\}^n$,
  $\bitvec{x} \in \{0,1\}^{m-k}$ and $\bitvec y \in \{0,1\}^k$ bit
  vectors that represent the inputs of $P$ (for $\bitvec z$), outputs
  of $P^\vdash$ (for $\bitvec x$) respectively the outputs of
  $P^\dashv$ (for $\bitvec y$).  We now define
  \begin{equation*}
    P^\vdash(\bitvec x\mid\bitvec z) = \sum_{\bitvec v \in
      \{0,1\}^k} P(\bitvec x \bitvec v \mid \bitvec z) \quad
    \text{ and } \quad P^\dashv(\bitvec y \mid \bitvec x
    \bitvec z) = \frac{P(\bitvec x \bitvec y \mid \bitvec
      z)}{\sum_{\bitvec v \in \{0,1\}^k} P(\bitvec
      x \bitvec v \mid \bitvec z)}
    = \frac{P(\bitvec x \bitvec y \mid \bitvec
      z)}{P^\vdash(\bitvec x\mid\bitvec z)}
  \end{equation*}
  Whenever
  $P^\vdash(\bitvec x\mid\bitvec z) = \sum_{\bitvec v \in \{0,1\}^k}
  P(\bitvec x \bitvec v \mid \bitvec z) = 0$ it holds that
  $P(\bitvec x \bitvec v \mid \bitvec z) = 0$ for every
  $\bitvec v \in \{0,1\}^k$. In this case the row of $P^\dashv$
  corresponding to $\bitvec x \bitvec z$ can be chosen arbitrarily, as
  long as it adds up to $1$.  We observe that $P^\dashv$ is stochastic
  because
  $\sum_{\bitvec y_1 \in \{0,1\}^k} P^\dashv(\bitvec y_1| \bitvec x
  \bitvec z) = 1$ for all $\bitvec x$, $\bitvec z$.  $P^\vdash$ on the
  other hand is by definition stochastic if and only if $P$ is
  stochastic: If we keep the column index ($\bitvec x$ in the case of
  $P^\vdash$ and $\bitvec x\bitvec y$ in the case of $P^\dashv$) fixed
  and sum over the row index, we straightforwardly obtain $1$ in both
  cases.

  To use the more intuitive matrix notation one could equivalently
  define $P^\vdash = (\id_k \otimes \top_{m-k}) \cdot P$.  However,
  such a simple characterization in terms of matrix compositions does
  not exist for $P^\dashv$.

  Finally, we can check that $P^\vdash$ and $P^\dashv$ satisfy
  \eqref{eq:arc_reversal_condition}.
  \begin{equation*}
    \begin{aligned}
      &((\id_{m-k} \otimes P^\dashv) \cdot ((\nabla_{m-k} \cdot P^\vdash) \otimes \id_n) \cdot \nabla_n)(\bitvec x \bitvec y\mid\bitvec z) \\
      &= ((\id_{m-k} \otimes P^\dashv) \cdot ((\nabla_{m-k} \cdot P^\vdash) \otimes \id_n))(\bitvec x \bitvec y\mid\bitvec z \bitvec z) \\
      &= \sum_{\bitvec{x}_1, \bitvec{x}_2\in
        \{0,1\}^{m-k}, \bitvec{z}_1 \in \{0,1\}^n} (\id_{m-k} \otimes P^\dashv) \left(\bitvec x \bitvec y\mid \bitvec{x}_1 \bitvec{x}_2 \bitvec{z}_1\right)\cdot ((\nabla_{m-k} \cdot P^\vdash) \otimes \id_n)(\bitvec{x}_1 \bitvec{x}_2 \bitvec{z}_1\mid\bitvec z \bitvec z) \\
      &= \sum_{\bitvec{x}_1, \bitvec{x}_2 \in
        \{0,1\}^{m-k}, \bitvec{z}_1 \in \{0,1\}^n} \id_{m-k}(\bitvec x\mid\bitvec{x}_1) \cdot P^\dashv(\bitvec y\mid \bitvec{x}_2 \bitvec{z}_1) \cdot (\nabla_{m-k} \cdot P^\vdash) (\bitvec{x}_1 \bitvec{x}_2\mid\bitvec z) \cdot \id_n(\bitvec{z}_1\mid\bitvec z) \\
      &= \sum_{\bitvec{x}_2\in \{0,1\}^{m-k}} P^\dashv(\bitvec y\mid \bitvec{x}_2 \bitvec{z}) \cdot (\nabla_{m-k} \cdot P^\vdash) (\bitvec{x} \bitvec{x}_2\mid\bitvec z) \\
      &= \sum_{\bitvec{x}_2\in \{0,1\}^{m-k}} P^\dashv(\bitvec y\mid \bitvec{x}_2 \bitvec{z}) \cdot \bigg( \sum_{\bitvec{x}_3 \in \{0,1\}^{m-k}} \nabla_{m-k}(\bitvec{x} \bitvec{x}_2\mid \bitvec{x}_3) \cdot P^\vdash (\bitvec{x}_3\mid\bitvec z) \bigg) \\
      &= P^\dashv(\bitvec y\mid \bitvec{x} \bitvec{z}) \cdot P^\vdash (\bitvec{x}\mid\bitvec z) \\
      &= P(\bitvec x \bitvec y\mid\bitvec z)
    \end{aligned}	
  \end{equation*}
  where we used that $\id_i(\bitvec y\mid \bitvec x)$ is non-zero only
  when $\bitvec y=\bitvec x$ and that
  $\nabla_i(\bitvec{y}_1 \bitvec{y}_2\mid \bitvec{x})$ is non-zero iff
  $\bitvec{y}_1 = \bitvec{y}_2 = \bitvec{x}$.

  Whenever $P^\vdash(\bitvec x\mid\bitvec z) = 0$ the product in the
  second-last line is $0$. As argued above in this case
  $P(\bitvec x \bitvec y \mid \bitvec z) = 0$ as well and the last
  equality holds.
\end{proof}

\CorArcReversal*

\begin{proof}
  Let $g_u = \ell(u)$, $g_y = \ell(y)$ and $e(g_u) = P_u$,
  $e(g_y) = P_y$.
  
  We assume without loss of generality that $s(y) = \bitvec{u} u$,
  otherwise we have to rearrange the inputs of $y$. Let
  $m = |\bitvec{u}|$ and $n = |s(u)|$. 

  Since the set $\{u,y\}$ is closed with respect to paths, we can use
  Lemma~\ref{lem:mbn_splitting} and represent $B$ by the following
  term:
  $t_1;(\id_k\otimes (\underbrace{(\id_m\otimes
    (B_{g_u};\nabla));(B_{g_y}\otimes \id)}_{t_{uy}}));t_3$. The term
  $t_{uy}$ corresponds to the matrix
  $M_e(t_{uy}) = P = (\id\otimes P_y)\cdot ((\nabla\cdot P_u)\otimes
  \id_m)$, which is a matrix of type $m+n\to 2$. We now apply
  Lemma~\ref{lem:decomp-matrix} to $P$ for $k=1$ and obtain matrices
  $P'_y\colon m+n\to 1$, $P'_u\colon m+n+1\to 1$ with
  $(\id \otimes P'_u) \cdot ((\nabla \cdot P'_y) \otimes \id_m) \cdot
  \nabla_m = P$. We transform this into a term
  $t_{y} = \nabla_m;((B_{g'_y};\nabla) \otimes \id_m);(\id\otimes
  B_{g'_u})$ where $g'_y,g'_u$ are two new generators, evaluating to
  $P'_u$, $P'_y$.

  If we replace $t_{uy}$
  in the term above by $t_{yu}$ we obtain a new BN where the order of
  $u,y$ is reversed and the other structure remains unchanged.
\end{proof}

\LemRepresentMatrix*

\begin{proof}
  The statement is a direct result of Lemma~\ref{lem:decomp-matrix}.
  Using $k=1$ we get from Lemma~\ref{lem:decomp-matrix}
  $M_{m-1} = M^\vdash: n \rightarrow m-1$ and
  $M^{(m-1)} = M^\dashv: n+1 \rightarrow 1$ such that
  \begin{equation} 
    \label{eq:arc_reversal_condition}
    (\id \otimes M_{m-1}) \cdot ((\nabla \cdot M^{(m-1)}) \otimes \id_n) \cdot \nabla_n = M.
  \end{equation}
  We can now apply Lemma~\ref{lem:decomp-matrix} again to $M_{m-1}$ to
  get a new $M_{m-2}$ and $M^{(m-2)}$.  Doing this process recursively
  in step $i$ we end up with smaller and smaller matrices
  $M_{m-i}: n \rightarrow m-1$ and $M^{(m-i)}: n+1 \rightarrow 1$.
  After a total of $m-1$ steps we end up with $M_1: n \rightarrow 1$
  and we stop.  The matrices $M_1, M^{(2)}, \dots, M^{(m-1)}$ all have
  type $n+1 \rightarrow 1$.  Moreover, Lemma~\ref{lem:decomp-matrix}
  guarantees that $M^{(2)}, \dots, M^{(m-1)}$ can always be chosen to
  be stochastic matrices.  $M_1$ can be chosen stochastic if and only
  if $M$ is stochastic.  For $B$ we now set
  $V = \{ v_1, \dots, v_m\}$, $l(v_j) = g_j$ for all $j = 1,\dots,m$
  and $s(v_j) = (v_{j-1}, i_1, \dots, i_n)$ for $j = 2,\dots, m$ and
  $s(v_1) = (i_1, \dots, i_n)$ and $\mathrm{out}(o_j) = v_j$ for
  $j=1,\dots,m$.  Accordingly we set $e(g_j) = M^{(j)}$ for
  $j=2,\dots,m$ and $e(g_1) = M_1$.  It is easy to verify that $(B,e)$
  now forms an MBN.
\end{proof}

\LemEqualities*
\begin{proof}
  In the following we assume $b=1$. 
  The case $b=0$ is always analogous.
  To show (F1) we compute
  \begin{equation*}
    \nabla \cdot \begin{pmatrix}1 \\ 0\end{pmatrix} = \begin{pmatrix}1 \\ 0 \\ 0 \\ 0\end{pmatrix} = \begin{pmatrix}1 \\ 0\end{pmatrix} \otimes \begin{pmatrix}1 \\ 0\end{pmatrix}.
  \end{equation*}
  For (F2) let $P: k \to 1$ be a stochastic matrix. Then
  \begin{equation*}
    \top \cdot P = \begin{pmatrix}
      1 & 1
    \end{pmatrix} \cdot
    \begin{pmatrix}
      p_1 & \cdots & p_{2^{k}} \\
      \bar{p}_1 & \cdots & \bar{p}_{2^{k}}
    \end{pmatrix}
    =
    \underbrace{
      \begin{pmatrix}
        1 & \cdots & 1
      \end{pmatrix}}_{2^k \text{ times }}
    = \top_k.
  \end{equation*}
  For (F4) we calculate
  \begin{equation*}
    F_{k,1} \cdot (\begin{pmatrix}1\\0\end{pmatrix} \otimes \id_{k-1,1}) = F_{k,1} \cdot \begin{pmatrix}
      \id_{k-1} \\
      0_{k-1}
    \end{pmatrix} 
    = \begin{pmatrix}
      F_{k-1,1} \\
      0_{k-1}
    \end{pmatrix} 
    = \begin{pmatrix}1\\0\end{pmatrix} \otimes F_{k-1,1}.
  \end{equation*}
  To show (F5) we calculate
  \begin{equation*}
    F_{k,1} \cdot (\begin{pmatrix}0\\1\end{pmatrix} \otimes \id_{k-1}) = F_{k,1} \cdot \begin{pmatrix}
      0_{k-1} \\
      \id_{k-1} 
    \end{pmatrix} 
    = \begin{pmatrix}
      0_{k-1} \\
      \id_{k-1}
    \end{pmatrix} 
    = \begin{pmatrix}0\\1\end{pmatrix} \otimes \id_{k-1}.
  \end{equation*}
  In order to prove equality (F3) we have to show
  \begin{equation*}
    (F_{k,1} \otimes \id) \cdot (\id_{k-1} \otimes \nabla) =
    (\id_{k-1} \otimes \nabla) \cdot F_{k,1}.
  \end{equation*}
  Given $\bitvec x, \bitvec y\in\{0,1\}^k$ it holds that
  $F_{k,1}(\bitvec x, \bitvec y)  = 1$ iff $\bitvec x = \bitvec
  y$ and $\bitvec x, \bitvec y\neq 1\ldots 1$. Otherwise
  $F_{k,1}(\bitvec x, \bitvec y)  = 0$.

  Now, given
  $\bitvec{z}\mathrm{z}\in\{0,1\}^{k+1},\bitvec x \in
  \{0,1\}^k$, we have
  \begin{eqnarray*}
    && ((F_{k,1} \otimes \id) \cdot (\id_{k-1} \otimes
    \nabla))(\bitvec{z}\mathrm{z}\mid \bitvec x) \\
    & = & \sum_{\bitvec{y}\mathrm{y}\in\{0,1\}^{k+1}} (F_{k,1} \otimes
    \id)(\bitvec{z}\mathrm{z}\mid \bitvec{y}\mathrm{y}) \cdot (\id_{k-1}
    \otimes \nabla)(\bitvec{y}\mathrm{y}\mid \bitvec x) \\
    & = & \sum_{\bitvec{y}\mathrm{y}\in\{0,1\}^{k+1}} F_{k,1}(\bitvec
    z\mid \bitvec{y}) \cdot [\mathrm{z}=\mathrm{y}]\cdot [\bitvec
    x=\bitvec y]\cdot [\mathrm{y} = \bitvec{y}_{[k]} =
    \bitvec{x}_{[k]}] \\
    & = & \left\{
      \begin{array}{ll}
        F_{k,1}(\bitvec z\mid \bitvec x) & \text{if
          $\bitvec{x}_{[k]} = \mathrm{z}$} \\
        0 & \text{otherwise}
      \end{array}\right\}
    = \left\{
      \begin{array}{ll}
        1 & \text{if $\bitvec{z}=\bitvec{x}$,
          $\bitvec{x},\bitvec{z}\neq 1\ldots 1$,
          $\bitvec{x}_{[k]} = \mathrm{z}$} \\
        0 & \text{otherwise}
      \end{array}\right.
  \end{eqnarray*}
  Note that $[\bitvec{x}=\bitvec{y}]$ stands for $1$ if the
  equality holds and for zero otherwise.

  Furthermore:
  \begin{eqnarray*}
    && ((\id_{k-1} \otimes \nabla) \cdot
    F_{k,1})(\bitvec{z}\mathrm{z}\mid \bitvec x) \\
    & = & \sum_{\bitvec{y}\in\{0,1\}^{k}} (\id_{k-1} \otimes
    \nabla)(\bitvec{z}\mathrm{z}\mid \bitvec{y}) \cdot
    F_{k,1}(\bitvec{y}\mid \bitvec{x}) \\
    & = & \sum_{\bitvec{y}\in\{0,1\}^{k}}
    [\bitvec z=\bitvec y]\cdot [\mathrm{z} = \bitvec{z}_{[k]} =
    \bitvec{y}_{[k]}] \cdot F_{k,1}(\bitvec y \mid \bitvec{x})  \\
    & = & \left\{
      \begin{array}{ll}
        F_{k,1}(\bitvec z\mid \bitvec x) & \text{if
          $\bitvec{z}_{[k]} = \mathrm{z}$} \\
        0 & \text{otherwise}
      \end{array}\right\}
    = \left\{
      \begin{array}{ll}
        1 & \text{if $\bitvec{z}=\bitvec{x}$,
          $\bitvec{x},\bitvec{z}\neq 1\ldots 1$,
          $\bitvec{z}_{[k]} = \mathrm{z}$} \\
        0 & \text{otherwise}
      \end{array}\right.
  \end{eqnarray*}
  And it is easy to see that both end results are equal.

\end{proof}
\todo{\textbf{R3:} the material on page 23 deserved much more space and attention -- It is really illustrating the core points of your paper!}

\LemEliminateNodeWithoutOutput*

\begin{proof}
  We set $V' = \mathrm{succ}(v_0)$ and fix a topological ordering on
  $V'$. Then we successively exchange $v_0$ with the next successor in
  the topological ordering, using arc reversal as described in
  Corollary~\ref{cor:arc-reversal}. Note that in arc reversal the
  number of successors of $v_0$ decreases by one and the matrix
  associated to $v_0$ will remain stochastic (see
  Lemma~\ref{lem:decomp-matrix}), hence at some point $v_0$ will have
  no successors and we can use equality~(F2) from
  Figure~\ref{fig:simplifications} in order to eliminate it.

  Note that only the source and labelling functions of the direct
  successors of $v_0$ are affected and the respective functions remain
  unchanged for the nodes in $\bar{V}$.

\end{proof}

\EliminateSubStoch*

\begin{proof}
  We iterate over $V'$ and by using again
  Lemma~\ref{lem:represent-matrix} we can replace the sub-MBN induced
  by $v_0\in V'$ and its predecessors by one that has a single
  sub-stochastic matrix in the front, without predecessors.  Doing
  this iteratively, we can move every sub-stochastic matrix to the
  front of the MBN where they do not have any predecessors. This
  results in an MBN $(B',\hat{e})$. Only the nodes in $V'$ and their
  predecessor nodes, but not the other nodes are affected, that is
  $e'\circ \ell'|_{\bar{V}} = e\circ \ell|_{\bar{V}}$ and
  $s'|_{\bar{V}} = s|_{\bar{V}}$.

  Through normalization we can then get rid of the sub-stochasticity
  for every node: assume that $\hat{V}$ contains the nodes in $B'$
  equipped with sub-stochastic matrices. We terminate all output ports
  of $B'$ by computing
  $\top_m\cdot M_{\hat{e}}(B') = \top_m\cdot M_e(B) = p_B$, since
  $B,B'$ specify the same matrix. Due to equality (F2) in
  Figure~\ref{fig:simplifications} this means that all stochastic
  nodes disappear, only the sub-stochastic nodes remain and hence
  $p_B = \prod_{v\in\hat{V}} (Q_v(0)+Q_v(1))$ where
  $Q_v = \hat{e}(\ell'(v))$. Note that $Q_v$ is simply a column vector
  with two entries. The value $q_v = Q_v(0)+Q_v(1)$ results when a
  sub-stochastic node without predecessors is terminated. We now
  replace each $Q_v$ by $\frac{1}{q_v} Q_v$, which is a stochastic
  matrix, resulting in a new evaluation function $e'$. Looking at the
  definition of $M_e$ in Section~\ref{sec:general_setup}, we observe
  that the values $\frac{1}{q_v}$ can be factored out and hence:
  \begin{equation*}
    M_{e'}(B') = M_e(B)\cdot \prod_{v\in\hat{V}} \frac{1}{q_v} =
    M_e(B)\cdot \frac{1}{p_B}
  \end{equation*}
  If $q_v$ for some $v$ it holds that $p_B = 0$. In this case
  normalization is not possible and we set $B' = B$, but the result
  still holds since $M_e(B) = 0 = M'_{e'}(B')\cdot p_B$.
\end{proof}

\CorConstructObn*

\begin{proof}
  In a $\setop$ operation, we terminate the output ports of all nodes
  in $\{\out(o_i) \mid i\in A\}$ (Lemma~\ref{lem:set_matrix}). Then we
  have an MBN consisting only of stochastic matrices and with
  Lemma~\ref{lem:eliminate-node-without-output} we can convert the
  resulting net into an MBN, affecting only the direct successors of
  these nodes.

  In an $\assertop$ or $\nassertop$ operation (see
  Lemmas~\ref{lem:assert_matrix} and~\ref{lem:nassert_matrix}) we add
  sub-stochastic matrices $F_{k,b}$ and equality (F3) from
  Figure~\ref{fig:simplifications} allows us the shift these
  sub-stochastic matrices in such a way that all successors of
  predecessors of $F_{k,b}$ come behind $F_{k,b}$. The matrix
  $F_{k,b}$ can now be fused with its direct predecessors, using
  Lemma~\ref{lem:decomp-matrix}, resulting either in one
  sub-stochastic matrix (case $\assertop$) or in several
  sub-stochastic matrices (case $\nassertop$).

  Now we have eliminated all nodes not connected to output ports. We
  can also assume that two different output ports link to different
  nodes, since none of our operations introduces duplication.

  Then we can apply Lemma~\ref{lem:eliminate-sub-stoch} to eliminate
  the remaining sub-stochastic matrices and to normalize.
\end{proof}
}

\end{document}